\documentclass[runningheads, envcountsame]{llncs}
\usepackage[T1]{fontenc}
%
\usepackage{graphicx}
%
%

\usepackage[utf8]{inputenc}
\usepackage{amsmath}
\usepackage{amssymb}
\usepackage{amsfonts}
\usepackage[dvipsnames]{xcolor}
\usepackage[title]{appendix}
\usepackage{comment}
\usepackage{hyperref}
\usepackage{mathtools}
\usepackage{tikz}
\usepackage{stmaryrd}
\usepackage[capitalize]{cleveref}
\usepackage{eqparbox}
\usepackage{caption}
\usepackage{subcaption}

\usetikzlibrary{arrows,automata,shapes,positioning}

\newcommand{\true}{\mathit{true}}
\newcommand{\false}{\mathit{false}}

\newcommand{\AP}{\mathit{AP}}

\newcommand{\intern}{\mathit{int}}
\newcommand{\spawn}{\mathit{spawn}}
\newcommand{\call}{\mathit{call}}
\newcommand{\ret}{\mathit{ret}}
\newcommand{\callret}{\mathit{callRet}}
\newcommand{\labelend}{\mathit{end}}
\newcommand{\leaf}{\mathit{leaf}}
\newcommand{\moves}{\mathsf{Moves}}
\newcommand{\dir}{\mathsf{Dir}}
\newcommand{\ExGraphs}{\mathsf{ExGraphs}}
\newcommand{\ExTrees}{\mathsf{ET}}
\newcommand{\treelabels}{\mathit{TL}} 

\newcommand{\Succ}{\mathit{succ}}
\newcommand{\Sub}{\mathit{Sub}}

\newcommand{\fp}{\mathit{fp}}
\newcommand{\ar}{\mathit{ar}}
\newcommand{\assignment}{\mathit{as}}
\newcommand{\ap}{\mathit{ap}}
\newcommand{\globalsucc}{g}
\newcommand{\globalpred}{\uparrow}
\newcommand{\abstractsucc}{a}
\newcommand{\caller}{-}
\newcommand{\parent}{p}
\newcommand{\child}{c}
\newcommand{\reachable}{r}

\newcommand{\NN}{\mathbb{N}}

\newcommand{\PTIME}{\mathsf{PTIME}}
\newcommand{\EXPTIME}{\mathsf{EXPTIME}}

\newcommand{\logicname}{NTL}

\newcommand\mybox[2][]{\mathrel{\overset{\scriptsize{\eqmakebox[#1]{}}}{#2}}}
\newcommand\oversetbox[3][]{\mathrel{\overset{\scriptsize{\eqmakebox[#1]{#2}}}{#3}}}

\let\llncsproof\proof
\renewcommand{\proof}[1][]{%
	\ifx!#1!\else\renewcommand{\proofname}{#1}\fi
	\llncsproof
}

\allowdisplaybreaks

\begin{document}
	\bibliographystyle{plainurl}

	\title{A Navigation Logic for Recursive Programs with Dynamic Thread Creation}
	%
	%
	\author{Roman Lakenbrink \and
		Markus Müller-Olm \and
		Christoph Ohrem \and
		Jens Gutsfeld}
	\authorrunning{R. Lakenbrink et al.}
	%
	\institute{Department of Computer Science, University of Münster, Germany}
	\maketitle              
	\begin{abstract}
Dynamic Pushdown Networks (DPNs) are a model for multithreaded programs with recursion and dynamic creation of threads.
In this paper, we propose a temporal logic called \logicname\ for reasoning about the call- and return- as well as thread creation behaviour of  DPNs.
Using tree automata techniques, we investigate the model checking problem for the novel logic and show that its complexity is not higher than that of LTL model checking against pushdown systems despite a more expressive logic and a more powerful system model.
The same holds true for the satisfiability problem when compared to the satisfiability problem for a related logic for reasoning about the call- and return-behaviour of pushdown systems.
Overall, this novel logic offers a promising approach for the verification of recursive programs with dynamic thread creation.
\end{abstract}

\keywords{Concurrency, Dynamic Pushdown Networks, Navigation Logic,\\ Model Checking, Satisfiability, Tree Automata}

	\section{Introduction}

Model Checking is an established technique for the verification of hardware and software systems.
Conceptually, it consists of checking whether a property given in a specification logic holds for a model of a system.
While logics such as LTL or CTL and finite Kripke models were considered early on \cite{Clarke1981,Lichtenstein1985}, later more expressive logics as well as infinite state systems have been studied.
The use of pushdown systems, for instance, allows for a more precise analysis of recursive software systems due to the presence of a call stack of a program while still retaining a decidable model checking problem against LTL specifications \cite{Bouajjani1997}.
In the context of pushdown systems, an example of a logic more expressive than LTL is the logic CaRet \cite{Alur2004} which extends LTL by operators for non-regular properties of the call and return behaviour of pushdown systems.
This extension does not lead to increased complexity for the model checking problem against pushdown systems compared to LTL.

However, there are even more powerful system models than pushdown systems for which model checking of variants of LTL is decidable.
In this paper, we consider Dynamic Pushdown Networks (DPNs) \cite{Bouajjani2005}, a model for software systems that cannot model only recursion, but also multithreading with dynamic thread creation.
So far, the model checking problem for DPNs has only been considered for single indexed LTL, a variant of LTL for multithreaded systems in which an LTL formula is assigned to each thread \cite{Song2015}.
Here, we consider the model checking problem for a more expressive logic.
More specifically, we propose a fixpoint calculus with CaRet-like operators for the verification of DPNs via model checking. 
The logic allows to specify non-regular properties concerning the call and return behaviour of the different execution threads of a DPN.
Unlike CaRet, it can additionally specify properties concerning the thread-spawn behaviour of programs.
For example, consider a scenario where a program has a method for bookkeeping information about spawned threads and it is required that new threads be only spawned from this method in order to keep the bookkeeping consistent.
The property $\mathcal{G}^\reachable (\bigcirc^\child \psi \rightarrow \mathcal{F}^\caller \textit{pr})$
specified in our new logic expresses that in all positions of all threads (expressed through the modality $\mathcal{G}^\reachable$), new threads fulfilling the property $\psi$ are only spawned (expressed through $\bigcirc^\child \psi$) when the procedure $\textit{pr}$ is in the call stack (expressed through $\mathcal{F}^\caller \textit{pr}$).
This formalises the requirement.
Properties regarding such relationships between parent and child threads cannot be expressed in the variant of LTL from \cite{Song2015} or other specification logics for DPNs we are aware of.
Our logic thus constitutes the first specification logic able to reason about the thread spawning behaviour of DPNs.

\textbf{Contributions and structure of the paper.}
After introducing some notation and results (\cref{sec:preliminaries}), we present a semantics for DPNs based on graphs (\cref{sec:graphsemantics}).
As our first main contribution, we then introduce a novel specification logic called Navigation Temporal Logic (\logicname) with the ability to reason about the call/return and thread creation behaviour of DPNs (\cref{sec:logic}).
We discuss some example properties and applications of our logic in \cref{sec:examples}.
Towards algorithmic verification, we then switch from a semantics based on graphs to a semantics based on trees  (\cref{sec:treesemantics}).
As our second main contribution, we then investigate the model checking and satisfiability problems for the new logic (\cref{sec:decisionproblems}).
In particular, we show that the model checking problem is decidable in time exponential in the size of the specification and polynomial in the size of the system model, i.e.\ the same as for LTL model checking against pushdown systems, and that the satisfiability problem is solvable in time exponential in the size of the specification, i.e.\ the same as for the satisfiability problem for $\mathit{VP}$-$\mu$-$\mathit{TL}$ \cite{Bozzelli2007}, a temporal logic subsuming CaRet and subsumed by our logic.
For both problems, we establish matching lower bounds.
\cref{sec:conclusion} concludes the paper.
Due to lack of space, some technical proofs can be found in an appendix.

\textbf{Related work.}
There are several specification logics related to the logic we present in this paper.
The temporal logic LTL was considered for model checking finite state systems \cite{Lichtenstein1985} as well as pushdown systems \cite{Bouajjani1997}.
For pushdown models, CaRet was developed with different successor types that allow the inspection of the call and return behaviour of the system \cite{Alur2004}.
Also, variants of CaRet have been studied in the literature \cite{Alur2006a,Alur2008,Bozzelli2011,Gutsfeld2018}.
As mentioned, CaRet is one inspiration for the logic presented in this paper and we adopt and complement its successor types in our logic.
Other inspirations are the linear time $\mu$-calculus from \cite{Vardi1988} and the logic $\mathit{VP}$-$\mu$-$\mathit{TL}$ from \cite{Bozzelli2007}.
In these logics, fixpoint operators can be used to express arbitrary $\omega$-regular (resp. $\omega$-visibly pushdown) properties on paths, which makes them more expressive than LTL and CaRet, respectively.
From these logics, we take fixpoint operators.
There is also a plethora of work on dynamic pushdown networks.
The model was first introduced in \cite{Bouajjani2005}.
Different methods for reachability analysis of DPNs have been proposed \cite{Bouajjani2005,Lammich2009}.
Additionally, different variants of the model were investigated.
\cite{Nordhoff2013} and \cite{Lammich2013} consider variants of DPNs that communicate via locks.
Another variant is the model of Dynamic Networks of concurrent pushdown systems from \cite{Atig2011} in which threads can communicate via global variables.
However, none of the above works on DPNs is concerned with model checking.
The only approach to model checking DPNs we are aware of consists of checking different variants of DPNs against a variant of LTL called single indexed LTL \cite{Song2015,Diaz2018}.
Compared to NTL, this variant cannot specify properties concerning the call and return behaviour of a thread or the relationship between different threads.
In \cref{sec:examples}, we show that single indexed LTL can be embedded into NTL.
	\section{Preliminaries}\label{sec:preliminaries}

Without further ado, we introduce tools and notation used throughout the paper.
This section can be skipped on first reading and be consulted for reference later.

\textbf{Trees.}
An $\NN_0$-\textit{tree} $T$ is a prefix-closed subset of $\NN_0^*$, i.e.\ for all nodes $t \in \NN_0^*$ and directions $d \in \NN_0$, $t \cdot d \in T$ implies $t \in T$.
Moreover, we require that $t\cdot d\in T$ for some $d\in\NN_0$ implies $t\cdot d'\in T$ for all $d'\leq d$.
We call an element $t \in T$ a \textit{node} of $T$ with special node $\varepsilon$, which we call the \textit{root}.
A node of the form $t \cdot d$ is called a \textit{child} of $t$ and $t$ is called the \textit{parent} of $t\cdot d$.
Additionally, for sequences $w \in \NN_0^*$, we call $t \cdot w$ a \textit{descendant} of $t$ and $t$ an \textit{ancestor} of $t \cdot w$.
Nodes $t \in T$ that have no children are called \textit{leaves}.
A \textit{path} in a tree $T$ is a finite or infinite sequence $t_0 t_1 \dots$ of nodes such that $t_0 = \varepsilon$ and for all $i \in \mathbb{N}_0$, $t_{i+1}$ is a child of $t_i$.
An $\NN_0$-tree that is a subset of $\{0,1\}^*$ is also called a \textit{binary tree}.
In this case, we call a node of the form $t \cdot 0$ the \textit{left child} of $t$ and a node of the form $t\cdot 1$ the \textit{right child} of $t$.
Let $\Sigma$ be a finite set of labels and $\ar \colon \Sigma \to \{0,1,2\}$ be a function assigning an arity to each of these labels. 
A $(\Sigma, \ar)$-\textit{labelled binary tree} is a pair $(T,l)$ such that $T$ is a binary tree and $l \colon T \to \Sigma$ is a labelling function such that each node $t\in T$ has exactly $\ar(l(t))$ children. 

\textbf{2-way alternating tree automata.}
For a finite set $X$, let $\mathcal{B}^+(X)$ be the set of positive boolean combinations over $X$, i.e.\ boolean formulae built with elements of $X$, conjunction and disjunction.
For $Y \subseteq X$ and $\vartheta \in \mathcal{B}^+(X)$, we say that $Y$ \textit{satisfies} $\vartheta$ iff assigning the value $\true$ to the elements of $Y$ and $\false$ to the elements of $X \setminus Y$ makes the formula $\vartheta$ true.
Let $\dir = \{0,1,\varepsilon,\uparrow\}$ be the set of moves in the tree with directions $0$ for the left child, $1$ for the right child, $\varepsilon$ for standing still and $\uparrow$ for moving upwards.
We define $u \cdot \varepsilon = u$ and $u\cdot d\ \cdot \uparrow\ = u$ for all $u \in \{0,1\}^*$ 
and $d \in \{0,1\}$.
A \textit{2-way alternating tree automaton} (2ATA) \cite{Vardi1998} over $(\Sigma,\ar)$-labelled binary trees is a tuple $\mathcal{A} = (Q,q_0,\rho,\Omega)$ where $Q$ is a finite set of states, $q_0 \in Q$ is an initial state, $\rho \colon Q \times \Sigma \to \mathcal{B}^+(\dir \times Q)$ is a transition function and $\Omega \colon Q \to \{0,\dots,k\}$ is a priority mapping.
The size $|\mathcal{A}|$ of a 2-way alternating tree automaton is defined as the sum of the sizes of its constituents.
We sometimes also refer to the size of individual constituents of an automaton.
In particular, we refer to the number of states, i.e.\ $|Q|$, and the size of the acceptance condition, i.e.\ $k$.
If the transition function of a 2-way alternating tree automaton uses only symbols from $\{0,1\}$ instead of $\dir$ and additionally maps all nodes $t$ either to $\true$ or to disjunctions over conjunctions that consist of exactly one pair $(d,q)$ for each $d<\ar(l(t))$, it is called a \textit{nondeterministic parity tree automaton} (NPTA).

For a $(\Sigma,\ar)$-labelled binary tree $\mathcal{T}=(T,l)$, a node $t\in T$ and a state $q\in Q$, a \textit{$(t,q)$-run} of $\mathcal{A}$ over $\mathcal{T}$ is a pair $(T_r,r)$ such that $T_r$ is an $\mathbb{N}_0$-tree and $r \colon T_r \to T \times Q$ assigns a pair of a node of $T$ and a state of $\mathcal{A}$ to all nodes in $T_r$.
Additionally, $(T_r,r)$ has to satisfy the following conditions: 
(i) $r(\varepsilon) = (t,q)$ and 
(ii) for all nodes $y \in T_r$ with $r(y) = (x,s)$ and $\rho(s,l(x)) = \vartheta$, there is a set $Y \subseteq \dir \times Q$ satisfying $\vartheta$ and for all $(d,s') \in Y$, there is $n \in \mathbb{N}_0$ such that $y \cdot n \in T_r$ and $r(y \cdot n) = (x \cdot d,s')$. 
In particular, for all leaves $y\in T_r$ with $r(y)=(x,s)$, we thus require $\rho(s,l(x))=\true$.
A $(t,q)$-run $(T_r,r)$ is \textit{accepting} iff on each infinite path in $T_r$ the lowest priority occurring infinitely often is even.
If $\mathcal{A}$ is a nondeterministic parity tree automaton, a minimal set $Y$ satisfying $\rho(s,l(x))$ in the above definition moves to each child of the current node $x$ in the tree $T$. 
For an $(\varepsilon,q)$-run, we can thus simply identify $T$ with $T_r$ and consider a map $r_A\colon T\to Q$ as an $(\varepsilon,q)$-run over $\mathcal{A}$.
The set of nodes $t\in T$ such that there is an accepting $(t,q)$-run of $\mathcal{A}$ over $\mathcal{T}$ is denoted by $\mathcal{L}_{q}^\mathcal{T}(\mathcal{A})$.
We say that $\mathcal{A}$ \textit{accepts} a tree $\mathcal{T}$ iff there is an accepting $(\varepsilon,q_0)$-run of $\mathcal{A}$ over $\mathcal{T}$.
The set of trees accepted by $\mathcal{A}$ is denoted by $\mathcal{L}(\mathcal{A})$.
We use the following theorems:
\begin{proposition}[\cite{Vardi1998}]\label{prop:treedealternation}
	For every 2ATA $\mathcal{A}$, there is an equivalent NPTA $\mathcal{A}'$.
	The number of states in $\mathcal{A}'$ is at most exponential in the number of states of $\mathcal{A}$ and the size of the acceptance condition of $\mathcal{A}'$ is linear in the size of the acceptance condition of $\mathcal{A}$.
\end{proposition}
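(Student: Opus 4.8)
The plan is to follow the classical two-phase strategy: first eliminate the upward moves to obtain a one-way alternating parity tree automaton, and only then remove the alternation via the standard simulation theorem. Throughout, I would work with the \emph{acceptance game} associated with $\mathcal{A}$ and an input tree $\mathcal{T}=(T,l)$: a parity game between Eve, who at a position $(x,s)$ chooses a set $Y\subseteq\dir\times Q$ satisfying $\rho(s,l(x))$, and Adam, who picks a pair $(d,q)\in Y$ and moves the token to $(x\cdot d,q)$; Eve wins an infinite play iff the least priority $\Omega$ occurring infinitely often is even. Acceptance of $\mathcal{T}$ is then exactly the existence of a winning strategy for Eve from $(\varepsilon,q_0)$. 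The benefit of this reformulation is that parity games are memorylessly determined, so I may restrict attention to positional strategies when summarising behaviour.

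Phase 1 (eliminating $\uparrow$). The obstacle here is that a run may repeatedly ascend into a node's parent and descend again, so the behaviour in a subtree is not determined by that subtree in isolation. I would summarise such excursions by an \emph{annotation}: for each node $t$, a relation over $Q$ recording, for every state $q$ in which the token could cross the edge from $t$ up to its parent, the set of states $q'$ in which the token could later return to $t$, together with the least priority witnessed during the excursion. Since $\Omega$ takes only finitely many values, each such summary is a finite object, and the set of admissible annotations at a node has size exponential in $|Q|$. The one-way automaton then guesses these annotations and checks them for local consistency with $\rho$: each upward atom $(\uparrow,q')$ is discharged against the guessed summary at the current node, and the recorded minimal priorities are folded into the new parity condition. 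The state space grows to $Q$ paired with a set of annotations, hence exponentially in $|Q|$, while the priority range grows only by a constant factor, i.e.\ linearly in $k$.

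Phase 2 (eliminating alternation). To the resulting one-way alternating parity tree automaton I would apply the simulation theorem of Muller and Schupp (equivalently, the rank-based construction of Emerson and Jutla): the nondeterministic automaton guesses, along each branch, a positional strategy for Eve and verifies, by determinising the parity acceptance condition, that every branch of the induced run tree satisfies the parity condition. This step is known to incur an exponential blowup in the number of states while preserving the priority index up to a linear factor.

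Composing the two phases yields an NPTA whose state count is exponential in $|Q|$ and whose acceptance condition is linear in $k$, as claimed. I expect the crux to lie in Phase 1, specifically in proving soundness and completeness of the excursion summaries under the parity condition: one must show, using memoryless determinacy together with a closure/fixpoint argument over the summary relations, that guessable annotations are exactly the realisable ones and that they faithfully reconstruct every accepting run. Phase 2 can then be invoked essentially as a black box.
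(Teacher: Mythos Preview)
The paper does not prove this proposition; it is quoted as a black-box result from \cite{Vardi1998}, so there is no in-paper argument to compare against. Your outline is the right conceptual skeleton (summarise backward excursions by annotations, then remove alternation), but the complexity bookkeeping as you have stated it does not deliver the claimed bound.

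The issue is your final composition. You say Phase~1 produces a one-way \emph{alternating} automaton whose state space is ``$Q$ paired with a set of annotations, hence exponentially in $|Q|$'', and that Phase~2 then applies the Muller--Schupp/Emerson--Jutla simulation, which ``is known to incur an exponential blowup''. Composing two exponential steps gives a doubly exponential state count, not the single exponential asserted in the proposition. This is not a cosmetic slip: a naive two-phase pipeline really does produce a doubly exponential NPTA.

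The way Vardi's construction avoids this is that the annotation/strategy machinery already resolves the alternation. The one-way automaton \emph{nondeterministically} guesses, at every node, a full strategy (your ``annotation'') that records for each state not only the outcome of upward detours but also which disjuncts of the positive Boolean transition formula Eve commits to; the only remaining obligation is a local consistency check between a node's strategy and those of its children, together with a parity condition on the recorded minimal priorities. Because the guessed object already fixes Eve's choices, the resulting automaton is one-way \emph{nondeterministic} from the outset, with state space essentially the set of strategies, i.e.\ singly exponential in $|Q|$, and index linear in $k$. In other words, Phases~1 and~2 are not sequential blowups but a single combined step. If you want to keep the two-phase presentation, you must argue that the intermediate one-way alternating automaton has only \emph{polynomially} many states (with the annotations appearing only in the transition structure or as part of the input labelling rather than the state), so that the subsequent dealternation stays singly exponential overall.
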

\begin{proposition}[\cite{Emerson1988,Kupferman1998,Pnueli1989}]\label{prop:treeemptiness}
	The emptiness problem for NPTA can be solved in time polynomial in the number of states and exponential in the size of the acceptance condition.
\end{proposition}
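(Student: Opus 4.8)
The plan is to reduce the emptiness problem for a given NPTA $\mathcal{A} = (Q, q_0, \rho, \Omega)$ to deciding the winner of an associated parity game, and then to invoke a standard parity game solver. Since $\mathcal{A}$ is nondeterministic, an accepting run follows the branching structure of the input tree exactly, so the existential player's task is to guess an input tree together with its (essentially unique) run, while the universal player challenges this guess by descending along an arbitrary branch.

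Concretely, I would build a game $\mathcal{G}_\mathcal{A}$ with two kinds of positions. At an \emph{existential} position, identified with a state $q \in Q$, player $\exists$ picks a label $\sigma \in \Sigma$ together with a disjunct of $\rho(q,\sigma)$. If $\rho(q,\sigma) = \true$, the play ends and $\exists$ wins; otherwise the chosen disjunct has the form $\bigwedge_{d < \ar(\sigma)} (d, q_d)$ and the play proceeds to a \emph{universal} position recording the tuple $(q_d)_{d < \ar(\sigma)}$. There, player $\forall$ selects a direction $d < \ar(\sigma)$ and the play returns to the existential position $q_d$. Each existential position $q$ carries the priority $\Omega(q)$, and universal positions inherit the priority of the existential position preceding them, which does not alter which priorities occur infinitely often; $\exists$ wins an infinite play iff the least priority seen infinitely often is even.

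For correctness I would show $\mathcal{L}(\mathcal{A}) \neq \emptyset$ iff $\exists$ wins $\mathcal{G}_\mathcal{A}$ from $q_0$. The direction from a nonempty language to a winning strategy is immediate: an accepting run over some tree directly prescribes $\exists$'s choices, and every play traces a branch of that tree, which satisfies the parity condition by assumption. The converse is the crux and the step I expect to be the main obstacle. Here I would appeal to the memoryless determinacy of parity games to obtain a \emph{positional} winning strategy for $\exists$. Such a strategy fixes, for every reachable state $q$, a label and a successor tuple, and unraveling it yields a regular $(\Sigma, \ar)$-labelled binary tree on which the induced run is accepting along \emph{every} branch, precisely because $\forall$ may descend arbitrarily yet still loses. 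Extracting a genuine input tree and run from the abstract strategy object is where the argument must be made careful.

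Finally, for the complexity bound I would observe that the number of positions and moves of $\mathcal{G}_\mathcal{A}$ is polynomial in $|Q|$ (with $|\Sigma|$ and the arities bounded by constants), and that the number of distinct priorities is $k+1$, i.e.\ linear in the size $k$ of the acceptance condition. Applying a parity game algorithm that runs in time polynomial in the number of vertices and exponential in the number of priorities (via the standard recursive or progress-measure approaches) then yields an overall running time polynomial in the number of states of $\mathcal{A}$ and exponential in the size of its acceptance condition, as claimed.
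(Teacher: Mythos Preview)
The paper does not prove this proposition itself; it is stated with references to the literature and used as a black box. Your reduction to a parity game, together with memoryless determinacy to extract a regular witness tree, is precisely the standard modern route to this result and is correct. One small point of care: your complexity argument assumes that $|\Sigma|$ is bounded by a constant, which is not a hypothesis of the proposition in general; the honest bound is that the emptiness game has size polynomial in $|Q|$ and in the description of the transition function $\rho$, which is how ``polynomial in the number of states'' is to be read here (the alphabet being fixed throughout the paper).
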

\begin{proposition}\label{prop:treeintersection}
	\begin{enumerate}
		\item[(i)] For any two NPTA $\mathcal{A}_1$ and $\mathcal{A}_2$, there is a NPTA $\mathcal{A}$ with $\mathcal{L}(\mathcal{A}) = \mathcal{L}(\mathcal{A}_1) \cap \mathcal{L}(\mathcal{A}_2)$.
		\item[(ii)] If either acceptance condition is trivial, the size of $\mathcal{A}$ is in $\mathcal{O}(|\mathcal{A}_1| \cdot |\mathcal{A}_2|)$.
	\end{enumerate}
	
\end{proposition}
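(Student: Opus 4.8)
The plan is to use a synchronous product on the transition structure and then treat the two parity conditions separately for the two parts. Write $\mathcal{A}_i = (Q_i, q_0^i, \rho_i, \Omega_i)$ for $i \in \{1,2\}$ over the same alphabet $(\Sigma, \ar)$. Since both automata are NPTA, for every state $q$ and label $\sigma$ the formula $\rho_i(q,\sigma)$ is a disjunction of conjunctions, each conjunction containing exactly one pair $(d, q')$ for every direction $d < \ar(\sigma)$; thus each disjunct is simply a choice of one successor state per child. I would first build the product automaton $\mathcal{B}$ with state set $Q_1 \times Q_2$, initial state $(q_0^1, q_0^2)$, and transition function
\[
\rho\big((q_1,q_2),\sigma\big) \;=\; \bigvee_{\alpha,\beta}\ \bigwedge_{d < \ar(\sigma)} \big(d,(\alpha(d),\beta(d))\big),
\]
where $\alpha$ ranges over the disjuncts of $\rho_1(q_1,\sigma)$ and $\beta$ over those of $\rho_2(q_2,\sigma)$, and $\alpha(d),\beta(d)$ denote the successor states the respective disjunct chooses for direction $d$. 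This is again of NPTA shape. Because for NPTA we identify the run tree with the input tree, a run of $\mathcal{B}$ over $\mathcal{T}$ is just a labelling of $T$ by pairs of states; projecting onto the two coordinates yields a run of $\mathcal{A}_1$ and a run of $\mathcal{A}_2$ over $\mathcal{T}$, and conversely two such runs on the same tree merge pointwise into one run of $\mathcal{B}$. Hence, up to the acceptance condition, the runs of $\mathcal{B}$ are exactly the pairs of runs of the two factors.

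The remaining and genuinely hard task is to make $\mathcal{B}$ accept a labelling iff along \emph{every} infinite path both the $\Omega_1$- and the $\Omega_2$-parity condition hold. A conjunction of two parity conditions is in general not itself a parity condition, and handling this is the main obstacle. For part (i) I would read along each path the sequence of priority pairs $(\Omega_1(q_1),\Omega_2(q_2))$ over the finite alphabet $\{0,\dots,k_1\}\times\{0,\dots,k_2\}$: the set of such sequences satisfying both parity conditions is $\omega$-regular and is therefore recognised by some deterministic parity automaton $\mathcal{D}$. Taking the product of $\mathcal{B}$ with $\mathcal{D}$ — extending states to triples $(q_1,q_2,s)$, letting $\mathcal{D}$ read the priority pair emitted at each node and inheriting its priority mapping — yields an NPTA $\mathcal{A}$ whose single parity condition along a path is satisfied exactly when the pair sequence is accepted by $\mathcal{D}$, i.e.\ exactly when both original conditions hold. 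Since $\mathcal{D}$ is deterministic, its state at a node is determined by the (unique) path from the root and is passed consistently to all children, so this is a well-formed NPTA. This establishes $\mathcal{L}(\mathcal{A}) = \mathcal{L}(\mathcal{A}_1)\cap\mathcal{L}(\mathcal{A}_2)$. The memory introduced by $\mathcal{D}$ is what makes the construction correct but also what precludes a clean size bound in the general case.

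For part (ii) the triviality of one acceptance condition removes precisely this obstacle. If, say, $\Omega_2$ is trivial, so that every infinite path is $\Omega_2$-accepting independently of the run, then the conjunction of the two conditions collapses to the single condition $\Omega_1$. No auxiliary automaton $\mathcal{D}$ is then needed: I would equip $\mathcal{B}$ directly with the priority mapping $(q_1,q_2) \mapsto \Omega_1(q_1)$. The state set is then exactly $Q_1 \times Q_2$ and the transition function has size bounded by the product of the two transition-function sizes, as each disjunct of $\rho$ pairs one disjunct of $\rho_1$ with one of $\rho_2$. Hence $|\mathcal{A}| \in \mathcal{O}(|\mathcal{A}_1| \cdot |\mathcal{A}_2|)$, as claimed; the symmetric case with $\Omega_1$ trivial is identical.

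Throughout, I would verify the run correspondence in both directions to obtain the language equality: the projection of an accepting run of $\mathcal{A}$ onto coordinate $i$ is an accepting run of $\mathcal{A}_i$, since every infinite path of the product projects to an infinite path carrying the $i$-th priorities; and conversely the pointwise pairing of accepting runs of both factors on a common tree is an accepting run of $\mathcal{A}$. In part (ii) this is immediate because the product priority equals $\Omega_1$ on the first coordinate, and in part (i) it follows from the defining property of $\mathcal{D}$.
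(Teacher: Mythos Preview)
Your proposal is correct. For part (ii) you give exactly the straightforward product construction the paper alludes to; for part (i) the paper does not give its own proof but merely cites a reference, and your argument via a deterministic parity word automaton $\mathcal{D}$ recognising the conjunction of the two parity conditions, then taking the product with $\mathcal{B}$, is a standard and correct way to establish the result.
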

\cref{prop:treeintersection} (i) can be found e.g. in \cite{Loding2021}.
For (ii), a straightforward product construction can be used and yields an automaton of the size claimed.

\textbf{Dynamic Pushdown Networks.}
Let $\AP$ be a set of atomic propositions, $\Gamma$ be a finite set of stack symbols and $\bot \notin \Gamma$ be a special bottom of stack symbol.
A \textit{Dynamic Pushdown Network} (DPN) \cite{Bouajjani2005} is a tuple $\mathcal{M} = (S,s_0,\gamma_0,\Delta,L)$ where $S$ is a finite set of control locations, $s_0 \in S$ is an initial control location, $\gamma_0\in\Gamma$ is an initial stack symbol and $L \colon S \times \Gamma \to 2^\AP$ is a labelling function.
The transition relation $\Delta = \Delta_I \overset{.}{\cup} \Delta_C \overset{.}{\cup} \Delta_R \overset{.}{\cup} \Delta_S$ is a finite set of \textit{internal} rules ($\Delta_I$), \textit{calling} rules ($\Delta_C$), \textit{returning} rules ($\Delta_R$) and \textit{spawning rules} ($\Delta_S$).
Internal rules $s\gamma \rightarrow s'\gamma' \in \Delta_I \subseteq S\Gamma \times S\Gamma$, call rules $s\gamma \rightarrow s'\gamma'\gamma'' \in \Delta_C \subseteq S\Gamma \times S\Gamma^2$ or returning rules $s\gamma \rightarrow s' \in \Delta_R \subseteq S\Gamma \times S$ enable transitions of a single pushdown process in control location $s$ with top of stack $\gamma$ to the new control location $s'$ with new top of stack $\gamma'$, $\gamma'\gamma''$ and $\varepsilon$, respectively.
A spawning rule $s\gamma \rightarrow s'\gamma' \triangleright s_n \gamma_n \in \Delta_S \subseteq S\Gamma \times S\Gamma \times S\Gamma$ is an internal rule with the additional side effect of spawning a new process in control location $s_n$ and stack content $\gamma_n$.
We formally develop a semantics for DPNs in \cref{sec:graphsemantics}.
	
	\section{Graph Semantics of Dynamic Pushdown Networks}\label{sec:graphsemantics}

The semantics of DPNs is often defined as an \textit{interleaving semantics}.
In such semantics, a configuration of a DPN is a collection of local configurations of the underlying pushdown systems representing the currently active threads.
A step in this semantics consists of a step of one of the active threads, possibly adding a configuration of a new thread to the collection.
This way, the semantics accurately reflects different interleavings of the steps of the threads issued by arbitrary schedulers, hence the name.
For our intents, interleaving semantics has some drawbacks, however.
First, an encoding of the intersection problem for contextfree languages is often possible in interleaving semantics, which leads to undecidability of the investigated verification problem.
Second, we are mostly interested in temporal properties of individual threads, not necessarily temporal properties of interleavings.
This is because the behaviour of a thread is in most cases independent of what types of steps other threads currently make in a specific interleaving.
Third, we want to reason about the parent-child relationship of threads which is lost in most formalisations of interleaving semantics.

We thus instead adopt a semantics based on graphs.
Intuitively, in an \textit{execution graph}, each thread is modelled by a linear sequence of positions connected by $\intern$-, $\call$- and $\ret$-edges based on the types of transitions taken in the thread.
In order to model the parent-child relationship between threads, a position where a spawn-transition is taken is connected to the first position of the spawned thread via a $\spawn$-edge.
This is analogous to the notion of \textit{action trees} from \cite{Lammich2009,Gawlitza2011}.
Additionally, similar to \textit{nested words} \cite{Alur2006}, calls and their matching returns are connected via \textit{nesting edges}.
We formalise these graphs in the next paragraph.

\textbf{Execution graphs.}
Let $\moves = \{\intern,\call,\ret,\spawn\}$ be the set of moves for dynamic pushdown networks, $V$ be a set of nodes, $l \colon V\to 2^\AP$ be a labelling function, $\rightarrow^d\ \subseteq V^2$ be a transition relation for all $d\in\moves$ and $\curvearrowright\ \subseteq V^2$ be a nesting relation.
For $x,y \in V$, we call $y$ a \textit{($d$)-successor} of $x$ and $x$ a \textit{($d$)-predecessor} of $y$ if $x\rightarrow^d y$ for some $d\in\moves$. 
A tuple $G = (V, l, (\rightarrow^d)_{d\in\moves}, \curvearrowright)$ is called an \emph{execution graph}, iff the following conditions hold:
\begin{enumerate}
	\item Every node has exactly one predecessor with respect to $\bigcup \{\rightarrow^d \mid d\in\moves \}$ except for a special node $v_0$ without predecessor.
	\item For all $x\in V$ we have $(v_0,x)\in (\bigcup \{\rightarrow^d \mid d\in\moves \})^*$.
	\item Every node either has 
	(a) exactly one $\intern$-successor and at most one $\spawn$-successor, 
	(b) exactly one $\call$-successor,
	(c) exactly one $\ret$-successor or
	(d) no successors.
	\item On every finite path starting in $v_0$ or a node with a $\spawn$-predecessor and following only $\moves \setminus \{\spawn\}$-successors, the number of $\call$-moves on that path is greater than or equal to the number of $\ret$-moves on that path.
	\item For all $x\in V$ having a $\call$-successor, let $A_x$ be the set of nodes $y\neq x$ such that there is a path $\pi$ from $x$ to $y$ following only $\moves \setminus \{\spawn\}$-successors where the number of $\call$-moves on $\pi$ is equal to the number of $\ret$-moves on $\pi$.
	Then we have $x\curvearrowright y$ for a node $y\in V$ iff $y$ is a node in $A_x$ such that the witnessing path has minimal length.
\end{enumerate}
The set of execution graphs is denoted by $\ExGraphs$.

An example of an execution graph can be found in \cref{fig:executiongraph}.
In this example, a main thread spawns two additional threads.
Additionally, there are two nested procedure calls in the main thread and one procedure call in a spawned thread.
\begin{figure}
	\centering
	\resizebox{.8\textwidth}{!}{
		\begin{tikzpicture}
			\node[draw,circle,line width=1pt,label=$v_0$] at (0,0) (n11) {};
			\node[draw,circle,line width=1pt] at (1,0) (n12) {};
			\node[draw,circle,line width=1pt] at (2,0) (n13) {};
			\node[draw,circle,line width=1pt] at (3,0) (n14) {};
			\node[draw,circle,line width=1pt] at (4,0) (n15) {};
			\node[draw,circle,line width=1pt] at (5,0) (n16) {};
			\node[draw,circle,line width=1pt] at (6,0) (n17) {};
			\node[draw,circle,line width=1pt] at (7,0) (n18) {};
			\node[draw,circle,line width=1pt] at (8,0) (n19) {};
			\node[] at (9,0) (dots1) {\dots};
			
			\node[draw,circle,line width=1pt] at (1,-1.5) (n21) {};
			\node[draw,circle,line width=1pt] at (2,-1.5) (n22) {};
			\node[draw,circle,line width=1pt] at (3,-1.5) (n23) {};
			\node[draw,circle,line width=1pt] at (4,-1.5) (n24) {};
			\node[] at (5,-1.5) (dots2) {\dots};
			
			\node[draw,circle,line width=1pt] at (6,-1) (n31) {};
			\node[draw,circle,line width=1pt] at (7,-1) (n32) {};
			\node[draw,circle,line width=1pt] at (8,-1) (n33) {};
			\node[draw,circle,line width=1pt] at (9,-1) (n34) {};
			\node[] at (10,-1) (dots3) {\dots};
			
			\path[draw,->,line width=1pt] (n11) to node[midway,below]{\footnotesize$\mathit{int}$} (n12);
			\path[draw,->,line width=1pt] (n12) to node[midway,below]{\footnotesize$\mathit{int}$} (n13);
			\path[draw,->,line width=1pt] (n13) to node[midway,below]{\footnotesize$\mathit{call}$} (n14);
			\path[draw,->,line width=1pt] (n14) to node[midway,below]{\footnotesize$\mathit{call}$} (n15);
			\path[draw,->,line width=1pt] (n15) to node[midway,below]{\footnotesize$\mathit{ret}$} (n16);
			\path[draw,->,line width=1pt] (n16) to node[midway,below]{\footnotesize$\mathit{int}$} (n17);
			\path[draw,->,line width=1pt] (n17) to node[midway,below]{\footnotesize$\mathit{int}$} (n18);
			\path[draw,->,line width=1pt] (n18) to node[midway,below]{\footnotesize$\mathit{ret}$} (n19);
			\path[draw,->,line width=1pt] (n19) to node[midway,below]{\footnotesize$\mathit{int}$} (dots1);

			\path[draw,->,dashed,out=40,in=140,line width=1pt] (n14) to (n16);
			\path[draw,->,dashed,out=40,in=140,line width=1pt] (n13) to (n19);
			
			\path[draw,->,line width=1pt] (n12) to node[pos=.6,right]{\footnotesize$\mathit{spawn}$} (n21);
			
			\path[draw,->,line width=1pt] (n21) to node[midway,below]{\footnotesize$\mathit{int}$} (n22);
			\path[draw,->,line width=1pt] (n22) to node[midway,below]{\footnotesize$\mathit{call}$} (n23);
			\path[draw,->,line width=1pt] (n23) to node[midway,below]{\footnotesize$\mathit{ret}$} (n24);
			\path[draw,->,line width=1pt] (n24) to node[midway,below]{\footnotesize$\mathit{int}$} (dots2);
			
			\path[draw,->,dashed,out=40,in=140,line width=1pt] (n22) to (n24);
			
			\path[draw,->,line width=1pt] (n17) to node[pos=.7,right]{\footnotesize$\mathit{spawn}$} (n31);
			
			\path[draw,->,line width=1pt] (n31) to node[midway,below]{\footnotesize$\mathit{int}$} (n32);
			\path[draw,->,line width=1pt] (n32) to node[midway,below]{\footnotesize$\mathit{int}$} (n33);
			\path[draw,->,line width=1pt] (n33) to node[midway,below]{\footnotesize$\mathit{int}$} (n34);
			\path[draw,->,line width=1pt] (n34) to node[midway,below]{\footnotesize$\mathit{int}$} (dots3);
		\end{tikzpicture}
	}
	\caption{Example of an execution graph. Labelled edges represent edges $\rightarrow^d$ for $d \in \moves$ and dashed edges represent nesting edges $\curvearrowright$.}
	\label{fig:executiongraph}
\end{figure}
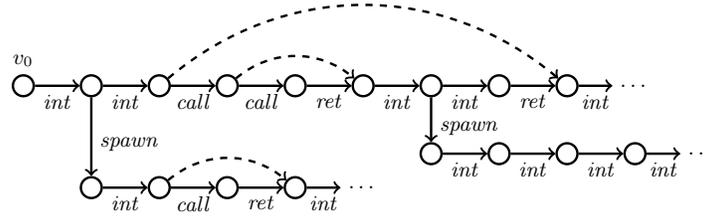

\textbf{Graph semantics.}
In most cases, we care only about graphs generated by a given DPN instead of arbitrary execution graphs.
This is formalised in the \textit{graph semantics of DPN}.
In the definition of this semantics, we make use of \textit{configurations} of the processes of a given DPN $\mathcal{M} = (S,s_0,\gamma_0,\Delta,L)$.
Formally, a configuration of a pushdown process of $\mathcal{M}$ is a pair $c = (s,u)$ where $s \in S$ is a control location and $u \in \Gamma^*\bot$ is a stack content ending in $\bot$.
We define successor relations on configurations corresponding to the different types of transition rules of DPNs.
For this purpose, let $c = (s,u)$, $c' = (s',u')$ and $c'' = (s'',u'')$ be configurations.
We call $c'$ an internal successor of $c$, denoted by $c \rightarrow_{\intern} c'$, if there is a transition $s\gamma \rightarrow s'\gamma' \in \Delta_I$ and $u = \gamma w$, $u' = \gamma'w$ for some stack content $w \in \Gamma^*\bot$.
We call $c'$ a call successor of $c$, denoted by $c \rightarrow_{\call} c'$, if there is a transition $s\gamma \rightarrow s'\gamma'\gamma'' \in \Delta_C$ and $u = \gamma w$, $u' = \gamma'\gamma''w$ for some stack content $w \in \Gamma^*\bot$.
We call $c'$ a return successor of $c$, denoted by $c \rightarrow_{\ret} c'$, if there is a transition $s\gamma \rightarrow s' \in \Delta_R$ and $u = \gamma w$, $u' = w$ for some stack content $w \in \Gamma^*\bot$.
Finally, we call $c'$ a successor of $c$ with spawned process $c''$, denoted $c \rightarrow c' \triangleright c''$, if there is a transition $s\gamma \rightarrow s'\gamma' \triangleright s'' \gamma'' \in \Delta_S$ and $u = \gamma w$, $u' = \gamma'w$ for some stack content $w \in \Gamma^*\bot$ as well as $u'' = \gamma'' \bot$.
Using the notion of configurations and the successor relations just introduced, we now define the graph semantics.
We say that an execution graph  $(V, l, (\rightarrow^d)_{d\in\moves}, \curvearrowright)$ \emph{is generated} by $\mathcal{M}$ if there is an assignment $\assignment \colon V \to S\times\Gamma^*\bot$ satisfying (i) $\assignment(v_0) = (s_0, \gamma_0\bot)$ and (ii) for all $x \in V$, we have $l(x) = L(s,\gamma)$ where $\assignment(x)=(s,\gamma w)$ for some control location $s\in S$, stack symbol $\gamma\in \Gamma$ and stack content $w\in \Gamma^*\bot$ and
	\begin{itemize}
		\item if $x$ has only one $d$-successor $y$ with $d\in\{\intern,\call,\ret\}$, then $\assignment(x) \rightarrow_d \assignment(y)$,
		\item if $x$ has an $\intern$-successor $y$ and a $\spawn$-successor $z$, then $\assignment(x) \rightarrow \assignment(y) \triangleright \assignment(z)$ and
		\item if $x$ has no successor, 
		then $\assignment(x)$ has no successor.
	\end{itemize}
The set of execution graphs generated by $\mathcal{M}$ is denoted by $\llbracket \mathcal{M} \rrbracket$.

\textbf{Successor functions.}
On execution graphs $G = (V, l, (\rightarrow^d)_{d\in\moves}, \curvearrowright)$, we define multiple successor functions $\Succ^G_\globalsucc$, $\Succ^G_{\globalpred}$, $\Succ^G_\abstractsucc$, $\Succ^G_\caller$, $\Succ^G_\parent$ and $\Succ^G_\child$ with signature $V \rightsquigarrow V$, i.e. partial functions from $V$ to $V$.
The first four of these successor functions come from logics like CaRet \cite{Alur2004} and allow us to progress single threads and their call-return behaviour in different ways.
The latter two functions are new and give means to reason about the thread spawning behaviour of DPNs.
For $x\in V$, the functions are defined as follows:
\begin{itemize}
	\item The \emph{global successor} $\Succ^G_\globalsucc(x)$ of $x$ is the $\intern$-, $\call$- or $\ret$-successor of $x$, if it exists, and undefined otherwise.
	
	\item The \emph{global predecessor} $\Succ^G_\globalpred(x)$ of $x$ is the $\intern$-, $\call$- or $\ret$-predecessor of $x$, if it exists, and undefined otherwise.
	
	\item The \emph{abstract successor} $\Succ^G_\abstractsucc(x)$ of $x$ is the node $y$ with $x\curvearrowright y$ or $x \rightarrow^\intern y$, if it exists, and undefined otherwise.
	
	\item The \emph{caller} $\Succ^G_\caller(x)$ of $x$ is the node $y$ with a $\call$-successor $y'$ such that there is a path from $y'$ to $x$ following abstract successors, if it exists, and undefined otherwise.
	
	\item The \emph{parent} $\Succ^G_\parent(x)$ of $x$ is the node $y$ with a $\spawn$-successor $z$ such that there is a path from $z$ to $x$ with only $\moves\setminus\{\spawn\}$-transitions, if it exists, and undefined otherwise.
	
	\item The \emph{child} $\Succ^G_\child(x)$ of $x$ is the $\spawn$-successor of $x$, if it exists, and undefined otherwise.
\end{itemize}
We illustrate these successor functions for parts of the execution graph from \cref{fig:executiongraph} in \cref{fig:stacksuccessors} and \cref{fig:spawnsuccessors}.
Abstract successors (seen in dashdotted red in \cref{fig:stacksuccessors}) follow the execution of a procedure on the same stack level and skip over executions of additional procedures via nesting edges.
If a procedure is left in the next step, i.e.\ if the next step is a return, the abstract successor is undefined.
Callers (seen in dotted blue in \cref{fig:stacksuccessors}) are defined if the stack level is at least one in a configuration and move to the latest previous call on a lower stack level.
Parents (seen in dotted green in \cref{fig:spawnsuccessors}) are defined in every branch of an execution graph representing a thread except for the thread starting in $v_0$ and move to the position in the graph where the current thread was spawned.
Children (seen in dashdotted yellow in \cref{fig:spawnsuccessors}) are defined only if the current thread currently executes a spawn transition and move to the initial position of the spawned thread.
\begin{figure}
	\begin{subfigure}{.6\textwidth}
		\centering
		\resizebox{\textwidth}{!}{
		\begin{tikzpicture}
			\node[draw,circle,line width=1pt,label=$v_0$] at (0,0) (n11) {};
			\node[draw,circle,line width=1pt] at (1,0) (n12) {};
			\node[draw,circle,line width=1pt] at (2,0) (n13) {};
			\node[draw,circle,line width=1pt] at (3,0) (n14) {};
			\node[draw,circle,line width=1pt] at (4,0) (n15) {};
			\node[draw,circle,line width=1pt] at (5,0) (n16) {};
			\node[draw,circle,line width=1pt] at (6,0) (n17) {};
			\node[draw,circle,line width=1pt] at (7,0) (n18) {};
			\node[draw,circle,line width=1pt] at (8,0) (n19) {};
			\node[] at (9,0) (dots1) {\dots};
			
			\node[gray!30] at (1,-1.5) (n21) {...};
			
			\node[gray!30] at (6,-1.5) (n31) {...};
			
			\path[draw,->,dashdotted,red,line width=1pt] (n11) to node[midway,above,black]{\footnotesize$\mathit{int}$} (n12);
			\path[draw,->,dashdotted,red,line width=1pt] (n12) to node[midway,above,black]{\footnotesize$\mathit{int}$} (n13);
			\path[draw,->,line width=1pt] (n13) to node[midway,above]{\footnotesize$\mathit{call}$} (n14);
			\path[draw,->,line width=1pt] (n14) to node[midway,above]{\footnotesize$\mathit{call}$} (n15);
			\path[draw,->,line width=1pt] (n15) to node[midway,above]{\footnotesize$\mathit{ret}$} (n16);
			\path[draw,->,dashdotted,red,line width=1pt] (n16) to node[midway,above,black]{\footnotesize$\mathit{int}$} (n17);
			\path[draw,->,dashdotted,red,line width=1pt] (n17) to node[midway,above,black]{\footnotesize$\mathit{int}$} (n18);
			\path[draw,->,line width=1pt] (n18) to node[midway,above]{\footnotesize$\mathit{ret}$} (n19);
			\path[draw,->,dashdotted,red,line width=1pt] (n19) to node[midway,above,black]{\footnotesize$\mathit{int}$} (dots1);
			
			\path[draw,->,gray!30,line width=1pt] (n12) to node[midway,right]{\footnotesize$\mathit{spawn}$} (n21);
			
			\path[draw,->,gray!30,line width=1pt] (n17) to node[midway,right]{\footnotesize$\mathit{spawn}$} (n31);
			
			\path[draw,->,out=230,in=310,densely dotted,NavyBlue,line width=1pt] (n15) to (n14);
			\path[draw,->,out=230,in=310,densely dotted,NavyBlue,line width=1pt] (n14) to (n13);
			\path[draw,->,out=230,in=310,densely dotted,NavyBlue,line width=1pt] (n16) to (n13);
			\path[draw,->,out=230,in=310,densely dotted,NavyBlue,line width=1pt] (n17) to (n13);
			\path[draw,->,out=230,in=310,densely dotted,NavyBlue,line width=1pt] (n18) to (n13);
			
			\path[draw,->,out=50,in=130,dashdotted,red,line width=1pt] (n13) to (n19);
			
			\path[draw,->,out=50,in=130,dashdotted,red,line width=1pt] (n14) to (n16);			
		\end{tikzpicture}
		}
		\caption{Abstract successors (\textcolor{red}{red, dashdotted}) and callers (\textcolor{NavyBlue}{blue, dotted}). Irrelevant edges are gray and some internal edges coinciding with abstract successors are omitted to improve readability.}
		\label{fig:stacksuccessors}
	\end{subfigure}
	\hfill
	\begin{subfigure}{.35\textwidth}
		\centering
		\resizebox{\textwidth}{!}{
		\begin{tikzpicture}
			\node[draw,circle,line width=1pt,label=$v_0$] at (0,0) (n11) {};
			\node[draw,circle,line width=1pt] at (1,0) (n12) {};
			\node[] at (2,0) (n13) {...};
			
			\node[draw,circle,line width=1pt] at (1,-2) (n21) {};
			\node[draw,circle,line width=1pt] at (2,-2) (n22) {};
			\node[draw,circle,line width=1pt] at (3,-2) (n23) {};
			\node[draw,circle,line width=1pt] at (4,-2) (n24) {};
			\node[] at (5,-2) (dots2) {\dots};

			\path[draw,->,line width=1pt] (n11) to node[midway,above]{\footnotesize$\mathit{int}$} (n12);
			\path[draw,->,line width=1pt] (n12) to node[midway,above]{\footnotesize$\mathit{int}$} (n13);
			
			\path[draw,->,dashdotted,YellowOrange,line width=1pt] (n12) to node[midway,left,black]{\footnotesize$\mathit{spawn}$} (n21);
			
			\path[draw,->,line width=1pt] (n21) to node[midway,below]{\footnotesize$\mathit{int}$} (n22);
			\path[draw,->,line width=1pt] (n22) to node[midway,below]{\footnotesize$\mathit{call}$} (n23);
			\path[draw,->,line width=1pt] (n23) to node[midway,below]{\footnotesize$\mathit{ret}$} (n24);
			\path[draw,->,line width=1pt] (n24) to node[midway,below]{\footnotesize$\mathit{int}$} (dots2);
			
			\path[draw,->,dashed,out=40,in=140,gray!30,line width=1pt] (n22) to (n24);
			
			\path[draw,->,out=45,in=315,densely dotted,ForestGreen,line width=1pt] (n21) to (n12);
			\path[draw,->,out=90,in=315,densely dotted,ForestGreen,line width=1pt] (n22) to (n12);
			\path[draw,->,out=90,in=315,densely dotted,ForestGreen,line width=1pt] (n23) to (n12);
			\path[draw,->,out=90,in=315,densely dotted,ForestGreen,line width=1pt] (n24) to (n12);
		\end{tikzpicture}
		}
		\caption{Parents (\textcolor{ForestGreen}{green, dotted}) and children (\textcolor{YellowOrange}{yellow, dashdotted}). Irrelevant edges are gray to improve readability.}
		\label{fig:spawnsuccessors}
	\end{subfigure}
	\caption{Successor types in parts of the execution graph from \cref{fig:executiongraph}.}
\end{figure}
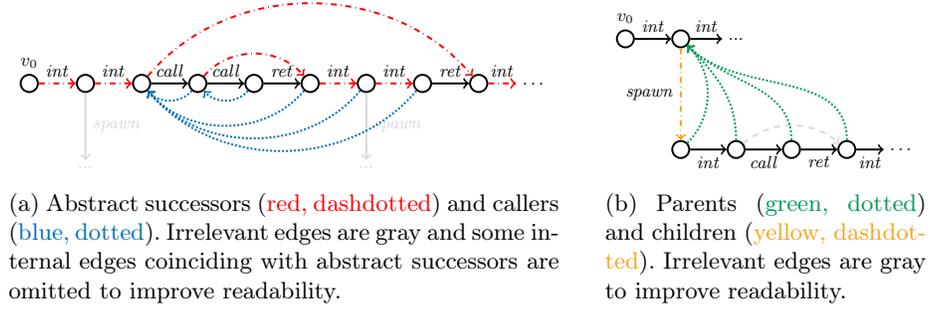
	\section{A Navigation Logic for Dynamic Pushdown Networks}\label{sec:logic}

\textbf{Syntax.}
We now define the new logic Navigation Temporal Logic (\logicname) for expressing properties of execution graphs.
As mentioned in the introduction, we have three main inspirations.
From the logics CaRet \cite{Alur2004} and $\mathit{VP}$-$\mu$-$\mathit{TL}$ \cite{Bozzelli2007}, we take different next operators inspecting the call and return behaviour of a thread.
We complement these by additional next operators expressing parent and child relationships between different processes.
From the linear time $\mu$-calculus \cite{Vardi1988} and logics like $\mathit{VP}$-$\mu$-$\mathit{TL}$ \cite{Bozzelli2007}, we take fixpoint operators for additional expressivity beyond LTL modalities.
First, we define the syntax of \logicname.
\begin{definition}[Syntax of \logicname]
	The syntax of \logicname\ formulae is defined by
	\begin{align*}
		\varphi ::=&\quad \ap \mid \lnot \varphi \mid \varphi_1 \lor \varphi_2 \mid X \mid \bigcirc^f \varphi \mid \mu X. \varphi
	\end{align*}
	where $\ap \in AP$ is an atomic proposition, $X$ is a fixpoint variable and $f\in \{\globalsucc,\globalpred,\abstractsucc,\caller,\parent,\child\}$ is a successor type.
\end{definition}

An \logicname\ formula $\varphi$ is called \emph{closed}, if every fixpoint variable $X$ is bound in $\varphi$, i.e. it only appears in a subformula of the form $\mu X.\psi$.
A formula $\varphi$ is called \emph{well-formed}, if every fixpoint variable $X$ occurring in $\varphi$ (i) is bound by only one fixpoint formula which we then denote by $\fp(X)$, (ii)  appears only in the scope of an even number of negations inside $\fp(X)$ and (iii) is in scope of at least one next operator inside $\fp(X)$.
We use $\Sub(\varphi)$ for the set of subformulae of a formula $\varphi$.
The size $|\varphi|$ of a formula $\varphi$ is defined as the number of its distinct subformulae.
We also need a notion of substitution: $\varphi[\varphi'/X]$ is the formula that is obtained from $\varphi$ by replacing every occurrence of the fixpoint variable $X$ with $\varphi'$.

Let us explain the intuition behind each construct.
Atomic formulae $\ap$ express that $\ap \in \AP$ holds in the current node of the graph.
Next operators $\bigcirc^f \varphi$ can be used to navigate and express that the corresponding successor exists in the current node and additionally satisfies $\varphi$.
Negation and disjunction are interpreted as usual.
Finally, we have fixpoint variables $X$ and least fixpoint operators $\mu X. \varphi$ for more involved properties.
Intuitively, $\mu X. \varphi$ is the least fixpoint of a function that \textit{unrolls} the formula by replacing $\mu X. \varphi$ with $\varphi[\mu X. \varphi/X]$.

We use some common syntactic sugar such as $\true \equiv \ap \lor \lnot \ap$, $\false \equiv \lnot \true$, $\varphi_1\land \varphi_2\equiv\lnot(\lnot \varphi_1\lor \lnot \varphi_2)$, $\varphi_1 \rightarrow \varphi_2 \equiv \lnot \varphi_1 \lor \varphi_2$, $\varphi_1 \leftrightarrow \varphi_2 \equiv (\varphi_1 \rightarrow \varphi_2) \land (\varphi_2 \rightarrow \varphi_1)$ and $\nu X.\varphi\equiv \lnot \mu X.\lnot \varphi[\lnot X/X]$. 
We also introduce a dual operator $\bigcirc^{\overline{f}}\varphi\equiv \lnot \bigcirc^f\lnot \varphi$ of $\bigcirc^f \varphi$ for each successor type $f\in \{\globalsucc,\globalpred,\abstractsucc,\caller,\parent,\child\}$ which is needed for a special form in the next paragraph. 
It is necessary to explicitly define these dual operators since $\lnot\bigcirc^f\varphi$ is not equivalent to $\bigcirc^f \lnot\varphi$ as the corresponding successors can be undefined for certain nodes in an execution graph.
Unlike $\bigcirc^f \varphi$, $\bigcirc^{\overline{f}} \varphi$ is equivalent to $\true$ for nodes that do not have an $f$-successor.
We also introduce some variants of LTL modalities as derived operators.
In particular we use $\varphi_1\ \mathcal{U}^f \varphi_2 \equiv \mu X. (\varphi_2 \lor (\varphi_1 \land \bigcirc^f X))$, $\mathcal{F}^f \varphi \equiv \true\ \mathcal{U}^f \varphi$ and $\mathcal{G}^f \varphi \equiv \lnot \mathcal{F}^f \lnot \varphi$ for $f \in \{\globalsucc,\globalpred,\abstractsucc,\caller,\parent,\child\}$.
For $f = \globalsucc$, we sometimes omit the superscript and write $\mathcal{F} \varphi$ etc.
Intuitively, these modalities correspond to the usual LTL modalities evaluated on the path starting in the current position and taking $f$-successors.
Additionally, we introduce modalities $\mathcal{F}^\reachable \varphi \equiv \mu X. (\varphi \lor \bigcirc^\globalsucc X \lor \bigcirc^\child X)$ and $\mathcal{G}^\reachable \varphi \equiv \lnot \mathcal{F}^\reachable \lnot \varphi$ to express that $\varphi$ holds in some position or all positions, respectively, reachable from the current position.
Using these abbreviations, dual operators and the equivalence $\lnot \lnot \varphi\equiv \varphi$ we can transform every well-formed formula into an equivalent formula in which negation only appears in front of atomic propositions.
We call this form \emph{positive normal form} and assume formulae to be given in this form in the algorithms presented in this paper.

\textbf{Semantics.}
We now formally define the semantics of \logicname.
It is defined with respect to an execution graph $G=(V, l, (\rightarrow^d)_{d\in\moves}, \curvearrowright)$ and a fixpoint variable assignment 
$\mathcal{V}$ assigning sets of nodes of $G$ to fixpoint variables.
Intuitively, $\llbracket \varphi \rrbracket^G_{\mathcal{V}}$ is the set of nodes of $G$ satisfying $\varphi$ when each free fixpoint variable $X$ is interpreted to hold at nodes $\mathcal{V}(X)$.
In the following, for a fixpoint variable assignment $\mathcal{V}$, a fixpoint variable $X$ and a set of nodes $M\subseteq V$, we write $\mathcal{V}[X \mapsto M]$ for the fixpoint variable assignment with $\mathcal{V}[X \mapsto M](X)=M$ and $\mathcal{V}[X \mapsto M](Y)=\mathcal{V}(Y)$ for all variables $Y\neq X$.
\begin{definition}[Semantics of \logicname]
	Let $G=(V, l, (\rightarrow^d)_{d\in\moves}, \curvearrowright)$ be an execution graph and $\mathcal{V}$ be a fixpoint variable assignment. The semantics of an \logicname\ formula with respect to $G$ and $\mathcal{V}$ is defined by
	\begin{align*}
		\llbracket \ap \rrbracket_\mathcal{V}^G :=&\quad \{x \in V \mid \ap \in l(x)\}\\
		\llbracket \lnot \varphi \rrbracket_\mathcal{V}^G :=&\quad V \setminus \llbracket \varphi \rrbracket_\mathcal{V}^G \\
		\llbracket \varphi_1 \lor \varphi_2 \rrbracket_\mathcal{V}^G :=&\quad \llbracket \varphi_1 \rrbracket_\mathcal{V}^G \cup \llbracket \varphi_2 \rrbracket_\mathcal{V}^G \\
		\llbracket X \rrbracket_\mathcal{V}^G :=&\quad \mathcal{V}(X) \\
		\llbracket \bigcirc^f \varphi \rrbracket_\mathcal{V}^G :=&\quad \{ x \in V \mid \Succ^G_f(x) \text{ is defined and } \Succ^G_f(x) \in \llbracket \varphi \rrbracket_\mathcal{V}^G \} \\
		\llbracket \mu X. \varphi \rrbracket_\mathcal{V}^G :=&\quad \bigcap \{ M \subseteq V \mid \llbracket \varphi \rrbracket_{\mathcal{V}[X \mapsto M]}^G \subseteq M \}
	\end{align*}
	where $\ap\in \AP$ is an atomic proposition, $X$ is a fixpoint variable and $f\in \{\globalsucc,\globalpred,\abstractsucc,\caller,\parent,\child\}$ is a successor type.
\end{definition}
In this semantics definition, two remarks are in order.
First, it is easy to see using Knaster-Tarski's fixpoint theorem \cite{Tarski1955} that for formulae $\varphi$ in positive normal form, $\llbracket \mu X. \varphi \rrbracket_\mathcal{V}^G$ characterises the least fixpoint of the monotone function $\alpha_S \colon 2^V \to 2^V$ with $\alpha_S(M) = \llbracket \varphi \rrbracket^G_{\mathcal{V}[X \mapsto M]}$ for $S = (G,\mathcal{V},X,\varphi)$.
Second, for closed \logicname\ formulae $\varphi$, the semantics does not depend on the fixpoint variable assignment.
For such formulae, we introduce additional semantic notations.
We write $\llbracket \varphi \rrbracket^G$ for $\llbracket \varphi \rrbracket_\mathcal{V}^G$ where $\mathcal{V}$ is an arbitrary fixpoint variable assignment and set $\llbracket \varphi \rrbracket:=\{G\in\ExGraphs\mid v_0\in \llbracket \varphi\rrbracket^G\}$.
For an execution graph $G$, we write $G \models \varphi$ for $G \in \llbracket \varphi \rrbracket$.
Finally, for a DPN $\mathcal{M}$, we write $\mathcal{M}\models \varphi$, iff $G \models \varphi$ for all $G\in \llbracket\mathcal{M}\rrbracket$.

In this paper, we consider the following decision problems for \logicname:
\begin{itemize}
	\item \emph{Model Checking:} Given a DPN $\mathcal{M}$ and a closed well-formed \logicname\ formula $\varphi$, does $\mathcal{M} \models \varphi$ hold?
	\item \emph{DPN Satisfiability:} Given a closed well-formed \logicname\ formula $\varphi$, is there a DPN $\mathcal{M}$ such that $\mathcal{M} \models \varphi$?
	\item \emph{Graph Satisfiability:} Given a closed well-formed \logicname\ formula $\varphi$, is there an execution graph $G$ such that $G \models \varphi$?
\end{itemize}
	\section{Example properties}\label{sec:examples}

We motivate the introduction of our new logic with some examples.

\textbf{Locking policies.}
In programming languages like Java, mutual exclusion between different threads on certain procedures or code blocks is realised via \textit{synchronized} procedures or blocks.
Internally, this feature works by acquiring a lock upon entering a synchronized procedure or block that is released when leaving the synchronized part of the code \cite{JavaSynchronized}.
Locks are thus acquired and released in a nested manner.
In DPNs, this synchronization mechanism can be modelled by including symbols for locks in the stack alphabet that are pushed onto the stack when acquiring a lock and removed from the stack when releasing it.
A call or return of a synchronized procedure is then modelled by taking two $\call$- or $\ret$-transitions of the DPN, respectively, one for pushing or popping the lock symbol and another one as usual.
We also include the lock symbols as atomic propositions that are assigned to corresponding configuration heads.
In this setup, the formula $\varphi_{l} := \mathcal{F}^\caller l$ expresses that the lock $l$ is currently held using the caller modality $\mathcal{F}^\caller$.
This form of modelling also works for reentrant locks, i.e. locks that can be acquired multiple times.
When threads acquire multiple locks, problems with deadlocks can occur when different threads acquire locks in a different order.
Assume, for example, that we have two locks where thread one acquires lock one first and then lock two and thread two acquires lock two first and then lock one.
In this case, a deadlock can occur when the threads are scheduled such that thread one acquires lock one and thread two acquires lock two.
A common policy to avoid deadlocks is to ensure that all threads acquire locks in the same order.
The formula $\varphi_{ij} := \mathcal{F}^\caller (l_i \land \mathcal{G}^\caller \lnot l_j)$ expresses that lock $l_i$ is currently held and when it was acquired, lock $l_j$ was not held.
It can be used in the formula $\mathcal{G}^\reachable (\varphi_{l_i} \land \varphi_{l_j}) \rightarrow \varphi_{ij}$ to express that $l_i$ is always acquired before $l_j$, if both locks are held.
The disjunction $\mathcal{G}^\reachable (\varphi_{l_i} \land \varphi_{l_j}) \rightarrow \varphi_{ij} \lor \mathcal{G}^\reachable (\varphi_{l_i} \land \varphi_{l_j}) \rightarrow \varphi_{ji}$ then expresses the existence of a global order for locks $l_i$ and $l_j$ and the existence of a global order for all locks can be expressed by a boolean combination of a quadratic number of variants of this formula.
Another problem with locking arises when certain threads wait for a lock that is held by another thread for an infinite amount of time, e.g. if a synchronized method is never left.
A policy addressing this problem is to ensure that all locks that are acquired are released at some point in the future.
We can express this using the formula $\mathcal{G}^\reachable \bigwedge_{l \in \mathit{Locks}} \varphi_l \rightarrow \mathcal{F} \lnot \varphi_l$.
Under these two policies, a necessary and sufficient condition for mutual exclusion of two program points labelled $s_1$ and $s_2$ is that a common lock is held at the two program points.
This can also be expressed in a formula from our logic: $\bigvee_{l \in \mathit{Locks}} \mathcal{G}^\reachable ((s_1 \rightarrow \varphi_l) \land (s_2 \rightarrow \varphi_l))$.

\textbf{Behaviour of main and worker threads.}
We elaborate on a motivating example for single indexed LTL from \cite{Song2015} expressible in \logicname.
In this example, a main thread of a server processes requests from clients by starting a worker thread responding to the specific request.
Then, the main thread should repeatedly accept new requests, expressed by the formula $\mathcal{G}^\reachable (\mathit{main} \rightarrow \mathcal{G}\mathcal{F} \mathit{accept})$.
Also, each worker thread should respond with a correct acknowledgement to each type of request, i.e. it should respond exactly with $\mathit{ack}$ to $\mathit{req}$ and exactly with $\mathit{ack'}$ to $\mathit{req'}$.
This is expressed by the formula $\mathcal{G}^\reachable (\mathit{worker} \rightarrow (\mathit{req} \rightarrow (\mathcal{F} \mathit{ack} \land \mathcal{G}\lnot \mathit{ack'}) \land \mathit{req'} \rightarrow (\mathcal{F} \mathit{ack'} \land \mathcal{G}\lnot \mathit{ack})))$.
Such requirements were already expressible in single indexed LTL.
However, using the different types of successor operators in \logicname, we can further expand on this scenario and express properties not expressible in single indexed LTL.
For example, it is a reasonable requirement that worker threads are only spawned by the main thread and only if the main thread has accepted a request.
This requirement can be expressed in the formula $\mathcal{G}^\reachable (\mathit{worker} \rightarrow \bigcirc^\parent (\mathit{main} \land \mathit{accept}))$.
Another desirable property in this scenario is a variant of the property from the introduction.
In particular, we may want worker threads to only be spawned from a procedure $\mathit{pr}$ which performs bookkeeping about the currently active worker threads.
This is expressed by the formula $\mathcal{G}^\reachable (\bigcirc^c \mathit{worker} \rightarrow \mathcal{F}^\caller \mathit{pr})$.

\textbf{Single indexed LTL model checking.}
It is no surprise that the previous motivating example for single indexed LTL is expressible in \logicname.
Indeed, we show that the full approach of single indexed LTL DPN model checking from \cite{Song2015} can also be handled using our logic.
We first sketch their setup.
In \cite{Song2015}, a DPN $\mathcal{M} = \{\mathcal{P}_1,\dots,\mathcal{P}_n\}$ is defined as a set of pushdown systems $\mathcal{P}_i$ with the ability to spawn threads executing one of the pushdown systems of $\mathcal{M}$.
A single indexed LTL formula is a conjunction $\varphi = \bigwedge_{i=1}^n \varphi_i$ of LTL formulae $\varphi_i$ that are each assigned to a specific pushdown system $\mathcal{P}_i$.
Then, $\mathcal{M} \models \varphi$ holds iff $\mathcal{M}$ has a \textit{global run} such that for all $i$, every \textit{local run} of $\mathcal{P}_i$ in the global run satisfies $\varphi_i$.
In our setup, their global runs correspond to execution graphs and their local runs correspond to the paths in the execution graph starting in positions where new threads are spawned and following the global successors.
Since in single indexed LTL model checking, the existence of a global run is checked, whereas in \logicname\ model checking, it is checked that all execution graphs satisfy a property, we can check that $\mathcal{M} \not\models \varphi$ for a single indexed LTL formula $\varphi = \bigwedge_{i=1}^n \varphi_i$ using \logicname\ model checking.
This is done as follows.
We model the partition of a DPN $\mathcal{M}$ from their setup into its pushdown systems $\mathcal{P}_i$ by labelling every control location of $\mathcal{P}_i$ with a fresh atomic proposition $p_i$ in its translation $\bar{\mathcal{M}}$ in our setup.
LTL formulae $\varphi_i$ can trivially be translated to \logicname\ by encoding until operators using least fixpoints.
Then, the \logicname\ formula $\bar{\varphi} = (p_1 \land \lnot \varphi_1) \lor \mathcal{F}^\reachable (\bigvee_{i=1}^n \bigcirc^{\child} (p_i \land \lnot \varphi_i))$ expresses that there is a local run of $\mathcal{P}_i$ for some $i$ that does not satisfy $\varphi_i$.
In this formula, the disjunct $(p_1 \land \lnot \varphi_1)$ identifies a violation by the root process $\mathcal{P}_1$ and the disjunct $\mathcal{F}^\reachable (\bigvee_{i=1}^n \bigcirc^{\child} (p_i \land \lnot \varphi_i))$ identifies violations by spawned processes.
Accordingly, $\mathcal{M} \models \varphi$ (in the single indexed LTL setup) iff $\bar{\mathcal{M}} \not\models \bar{\varphi}$ (in our setup).
	\section{From Graph Semantics to Tree Semantics}\label{sec:treesemantics}

In order to enable algorithmic verification with tree automata, we introduce an additional structure called \textit{execution tree}.
In a nutshell, these trees are obtained from execution graphs by keeping the same set of nodes and adjusting the edge relation a little.
In particular, we discard $\ret$-edges.
In order to properly interpret left and right children in this adjusted structure, we add labels $(l,d,p)$ where $l$ represents the label of the current node, $d$ represents the transition types from this node to its children and $p$ represents the transition type from the parent to this node.
This yields us a structure simpler than execution graphs that still contains the same information and can be analysed using tree automata.

\textbf{Execution trees.}
Let $G=(V, l, (\rightarrow^d)_{d\in\moves}, \curvearrowright)$ be an execution graph. 
We inductively define a map $\delta_G \colon V\to\{0,1\}^*$ assigning a tree node to each graph node $x\in V$ as follows.
\begin{itemize}
	\item If $x=v_0$, we set $\delta_G(x):=\varepsilon$.
	\item If $x$ has a $\call$- or $\intern$-predecessor $y$, we set $\delta_G(x):=\delta_G(y)\cdot 0$. In this case, we also call $\delta_G(x)$ a $\call$- or $\intern$-\textit{child} of $\delta_G(y)$, respectively. 
	\item If there is $y\in V$ such that $y$ is a $\spawn$-predecessor of $x$ or $y\curvearrowright x$, we set $\delta_G(x):=\delta_G(y)\cdot 1$. If $y$ is a $\spawn$-predecessor of $x$, we also call $\delta_G(x)$ a $\spawn$-\textit{child} of $\delta_G(y)$ and if $y\curvearrowright x$, we also call $\delta_G(x)$ a $\ret$-\textit{child} of $\delta_G(y)$.
\end{itemize}
Additionally, for a subset $M\subseteq \moves$ and nodes $x,y\in V$, we call $\delta_G(y)$ an $M$-\textit{descendant} of $\delta_G(x)$ and $\delta_G(x)$ an $M$-\textit{ancestor} of $\delta_G(y)$, if there is a path from $\delta_G(x)$ to $\delta_G(y)$ in the tree following only $M$-children.

Let $\treelabels = 2^\AP\times \{\intern,\call,\callret,\spawn,\ret,\labelend\} \times (\moves\cup\{\bot\})$ be the set of labels for tree nodes
and the arity function $\ar\colon\treelabels\to \{0,1,2\}$ be defined by $\ar(l,\ret,p) = \ar (l,\labelend,p) = 0$, $\ar(l,\intern,p) = \ar (l,\call,p) = 1$ and $\ar(l,\callret,p) = \ar (l,\spawn,p) = 2$.
The \textit{tree representation} $\mathcal{T}(G)$ of $G$ is the $(\treelabels,\ar)$-labelled binary tree $(\mathsf{im}(\delta_G),r)$ where $\mathsf{im}(\delta_G)=\{\delta_G(x)\mid x\in V\}$ denotes the image of $\delta_G$ and for all $x \in V$ we have $r(\delta_G(x)) = (l(x), d(x), p(x))$ where (i) either $p(x)\neq \bot$ and $x$ has a $p(x)$-predecessor or $p(x)=\bot$ and $x=v_0$ and (ii) one of the following conditions hold:
\begin{itemize}
	\item $x$ has only one $d(x)$-successor and $d(x)\in\{\intern,\ret\}$.
	\item $x$ has only one $\intern$- and one $\spawn$-successor and $d(x)=\spawn$.
	\item $x$ has only one $\call$-successor, there is no $y\in V$ with $x\curvearrowright y$, and $d(x)=\call$.
	\item $x$ has only one $\call$-successor, there is $y\in V$ with $x\curvearrowright y$, and $d(x)=\callret$.
	\item $x$ has no successors and $d(x)=\labelend$.
\end{itemize}
A tree representation of an execution graph is also called an \textit{execution tree}. 
An example of an execution tree can be found in \cref{fig:executiontree}.
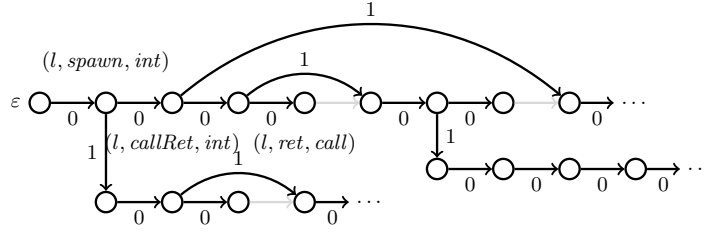
\begin{figure}
	\centering
	\resizebox{.8\textwidth}{!}{
	\begin{tikzpicture}
	\node[draw,circle,line width=1pt,label=left:$\varepsilon$] at (0,0) (n11) {};
	\node[draw,circle,line width=1pt,label={[label distance=5pt]\footnotesize$(l,\spawn,\intern)$}] at (1,0) (n12) {};
	\node[draw,circle,line width=1pt,label={[label distance=5pt]below:\footnotesize$(l,\callret,\intern)$}] at (2,0) (n13) {};
	\node[draw,circle,line width=1pt] at (3,0) (n14) {};
	\node[draw,circle,line width=1pt,label={[label distance=5pt]below:\footnotesize$(l,\ret,\call)$}] at (4,0) (n15) {};
	\node[draw,circle,line width=1pt] at (5,0) (n16) {};
	\node[draw,circle,line width=1pt] at (6,0) (n17) {};
	\node[draw,circle,line width=1pt] at (7,0) (n18) {};
	\node[draw,circle,line width=1pt] at (8,0) (n19) {};
	\node[] at (9,0) (dots1) {\dots};
	
	\node[draw,circle,line width=1pt] at (1,-1.5) (n21) {};
	\node[draw,circle,line width=1pt] at (2,-1.5) (n22) {};
	\node[draw,circle,line width=1pt] at (3,-1.5) (n23) {};
	\node[draw,circle,line width=1pt] at (4,-1.5) (n24) {};
	\node[] at (5,-1.5) (dots2) {\dots};
	
	\node[draw,circle,line width=1pt] at (6,-1) (n31) {};
	\node[draw,circle,line width=1pt] at (7,-1) (n32) {};
	\node[draw,circle,line width=1pt] at (8,-1) (n33) {};
	\node[draw,circle,line width=1pt] at (9,-1) (n34) {};
	\node[] at (10,-1) (dots3) {\dots};
	
	\path[draw,->,line width=1pt] (n11) to node[midway,below]{\footnotesize$0$} (n12);
	\path[draw,->,line width=1pt] (n12) to node[midway,below]{\footnotesize$0$} (n13);
	\path[draw,->,line width=1pt] (n13) to node[midway,below]{\footnotesize$0$} (n14);
	\path[draw,->,line width=1pt] (n14) to node[midway,below]{\footnotesize$0$} (n15);
	\path[draw,gray!30,->,line width=1pt] (n15) to (n16);
	\path[draw,->,line width=1pt] (n16) to node[midway,below]{\footnotesize$0$} (n17);
	\path[draw,->,line width=1pt] (n17) to node[midway,below]{\footnotesize$0$} (n18);
	\path[draw,gray!30,->,line width=1pt] (n18) to (n19);
	\path[draw,->,line width=1pt] (n19) to node[midway,below]{\footnotesize$0$} (dots1);

	\path[draw,->,out=40,in=140,line width=1pt] (n14) to node[midway,above]{\footnotesize$1$} (n16);
	\path[draw,->,out=40,in=140,line width=1pt] (n13) to node[midway,above]{\footnotesize$1$} (n19);
	
	\path[draw,->,line width=1pt] (n12) to node[midway,left]{\footnotesize$1$} (n21);
	
	\path[draw,->,line width=1pt] (n21) to node[midway,below]{\footnotesize$0$} (n22);
	\path[draw,->,line width=1pt] (n22) to node[midway,below]{\footnotesize$0$} (n23);
	\path[draw,gray!30,->,line width=1pt] (n23) to (n24);
	\path[draw,->,line width=1pt] (n24) to node[midway,below]{\footnotesize$0$} (dots2);
	
	\path[draw,->,out=40,in=140,line width=1pt] (n22) to node[midway,above]{\footnotesize$1$} (n24);
	
	\path[draw,->,line width=1pt] (n17) to node[midway,right]{\footnotesize$1$} (n31);
	
	\path[draw,->,line width=1pt] (n31) to node[midway,below]{\footnotesize$0$} (n32);
	\path[draw,->,line width=1pt] (n32) to node[midway,below]{\footnotesize$0$} (n33);
	\path[draw,->,line width=1pt] (n33) to node[midway,below]{\footnotesize$0$} (n34);
	\path[draw,->,line width=1pt] (n34) to node[midway,below]{\footnotesize$0$} (dots3);
	\end{tikzpicture}
	}
	\caption{Execution tree for the execution graph in \cref{fig:executiongraph}.
		An edge from node $t$ to node $t'$ labelled $d$ means that $t' = t \cdot d$. 
		Labels are depicted for nodes $0$, $00$ and $0000$.
		Gray edges exist in the execution graph but not in the execution tree.}
	\label{fig:executiontree}
\end{figure}

\textbf{Adapted successor functions.}
We adapt the successor functions previously defined on execution graphs to execution trees in order to allow us to check the satisfaction of formulae directly on execution trees.
Specifically, we define multiple successor functions $\Succ_\globalsucc^{\mathcal{T}}$, $\Succ_\globalpred^{\mathcal{T}}$, $\Succ_\abstractsucc^{\mathcal{T}}$, $\Succ_\caller^{\mathcal{T}}$, $\Succ_\parent^{\mathcal{T}}$ and $\Succ_\child^{\mathcal{T}}$ with signature $T \rightsquigarrow T$ for execution trees $\mathcal{T}=(T,r)$.
For $t\in T$ with $r(t)=(l,d,p)$, the successor functions are given as follows:
	\begin{itemize}
	
		\item The \emph{abstract successor} $\Succ_\abstractsucc^{\mathcal{T}}(t)$ of $t$ is defined as the left child of $t$, if $d\in\{\intern,\spawn\}$, the right child of $t$, if $d=\callret$, and undefined else.
		
		\item The \emph{caller predecessor} $\Succ_\caller^{\mathcal{T}}(t)$ of $t$ is defined as the parent node of $t$, if $p=\call$, the caller predecessor of its parent node, if $p\in\{\intern,\ret\}$ and this is defined, and undefined else.
		
		\item The \emph{global successor} $\Succ_\globalsucc^{\mathcal{T}}(t)$ of $t$ is defined as the left child of $t$, if $d\in\{\intern,\call,\callret,\spawn\}$, $\Succ_\abstractsucc^{\mathcal{T}}(\Succ_\caller^{\mathcal{T}}(t))$, if $d=\ret$, and undefined else.
		
		\item The \emph{global predecessor} $\Succ_\globalpred^{\mathcal{T}}(t)$ of $t$ is defined as the parent node of $t$, if $p\in \{\intern,\call\}$, the $\{\intern,\ret\}$-descendant leaf of the left child of its parent node, if $p=\ret$, and undefined else.
		
		\item The \emph{parent predecessor} $\Succ_\parent^{\mathcal{T}}(t)$ of $t$ is defined as the parent node of $t$, if $p\!=\!\spawn$, the parent predecessor of its parent node, if $p\in\{\intern,\call,\ret\}$ and this is defined, and undefined else.
		
		\item The \emph{child successor} $\Succ_\child^{\mathcal{T}}(t)$ of $t$ is defined as the right child of $t$, if $d=\spawn$, and undefined else.
	\end{itemize}

We show in the following lemma that these adapted successor functions behave exactly like their counterparts on execution graphs.

\begin{lemma}\label{lemma:successorequivalence}
	Let $G=(V, l, (\rightarrow^d)_{d\in\moves},\curvearrowright)$ be an execution graph with $\mathcal{T}(G)=\mathcal{T}$. For all $f\in\{\globalsucc,\globalpred,\abstractsucc,\caller,\parent,\child\}$ we have $\delta_G\circ\Succ^G_f=\Succ_f^{\mathcal{T}}\circ \delta_G$, i.e. for all nodes $x \in V$, $\delta_G(\Succ^G_f(x))$ is defined iff $\Succ_f^{\mathcal{T}}(\delta_G(x))$ is defined and in this case $\delta_G(\Succ^G_f(x))=\Succ_f^{\mathcal{T}}(\delta_G(x))$.
\end{lemma}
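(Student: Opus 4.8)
The plan is to fix an execution graph $G$ with $\mathcal{T}(G)=\mathcal{T}$ and carry out a case analysis over the six successor types $f\in\{\globalsucc,\globalpred,\abstractsucc,\caller,\parent,\child\}$, proving for each the definedness equivalence and the value equivalence simultaneously. The whole argument rests on two facts read off directly from the inductive definition of $\delta_G$ and the labelling $r(\delta_G(x))=(l(x),d(x),p(x))$: first, that $\intern$- and $\call$-predecessors correspond to the left child while $\spawn$-predecessors and $\curvearrowright$-edges correspond to the right child; and second, that $d(x)$ records the outgoing move of $x$ (with $\callret$ marking exactly those calls $x$ for which some $y$ with $x\curvearrowright y$ exists) and $p(x)$ records the incoming move. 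I would state these as preliminary observations so that each case reduces to reading off the relevant child or ancestor in the tree, and I would order the cases so that the compound definitions can reuse the simpler ones.

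The local cases are immediate. For $\child$, a $\spawn$-successor $z$ of $x$ forces $d(x)=\spawn$ and $\delta_G(z)=\delta_G(x)\cdot 1$, matching the tree definition, and absence of a $\spawn$-successor means $d(x)\neq\spawn$ on both sides. For $\abstractsucc$, I split on whether $x$ has an $\intern$-successor (then $d(x)\in\{\intern,\spawn\}$ and the value is the left child) or a node $y$ with $x\curvearrowright y$ (then $d(x)=\callret$ and the value is the right child), with undefinedness exactly when $d(x)\in\{\call,\ret,\labelend\}$. The case $\caller$ is the first genuinely recursive one: I would argue by induction following the tree definition, using that for $p(x)=\call$ the tree-parent is precisely the calling node, and that for $p(x)\in\{\intern,\ret\}$ the tree-parent shares the innermost enclosing call of $x$. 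The subtle point is that for $p(x)=\ret$ the tree-parent is the matching call $w$ with $w\curvearrowright x$, which by balancedness belongs to the same procedure invocation as $x$ and hence has the same caller; this is what makes the recursion ``of the parent node'' land correctly.

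With $\caller$ and $\abstractsucc$ in hand, the $\globalsucc$ case is mostly bookkeeping: for $d(x)\in\{\intern,\call,\callret,\spawn\}$ the global successor is the left child, and for $d(x)=\ret$ I would show that $\Succ_\abstractsucc^{\mathcal{T}}(\Succ_\caller^{\mathcal{T}}(\delta_G(x)))$ computes $\delta_G$ of the $\ret$-successor by first moving to the matching call $w$ (the caller) and then taking its abstract successor, namely the right child $\delta_G(w)\cdot 1$. The $\parent$ case uses the invariant that within a single thread the only node with incoming move $\spawn$ is the thread's first node $z$, while every other thread node has incoming move in $\{\intern,\call,\ret\}$ and its tree-parent again inside the same thread; hence the recursion on parents stays inside the thread and stops exactly at $z$, returning the tree-parent of $z$, which is the spawning node. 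Finally, for $\globalpred$ the cases $p(x)\in\{\intern,\call\}$ return the tree-parent, while $p(x)=\ret$ requires showing that following $\{\intern,\ret\}$-children from the left child of the matching call coincides with following abstract successors along the called procedure, and therefore terminates at the unique leaf, namely the node issuing the matching return.

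The main obstacle is the family of path-based cases — $\caller$, $\parent$, and the $\ret$-subcases of $\globalsucc$ and $\globalpred$ — where the graph definitions quantify over balanced or spawn-free paths while the tree definitions recurse through parents and children. The crux is a correct matching between these graph paths and the subtree structure induced by the nesting relation $\curvearrowright$: I expect to need an inductive invariant tying the stack level of a node to its position in the tree, relying on execution-graph conditions (4) and (5) to guarantee that a call and its matching return sit as tree-parent and right child with the entire balanced infix nested strictly below. Once this correspondence is established, each case follows by induction on the length of the witnessing path (equivalently, on tree depth), and the definedness equivalences drop out of the same case distinctions on $d(x)$ and $p(x)$.
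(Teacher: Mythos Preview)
Your proposal is correct and follows essentially the same route as the paper: a case analysis over the six successor types, handling $\abstractsucc$ and $\child$ by direct case distinction on $d(x)$, handling $\caller$ and $\parent$ by induction (the paper inducts on the length of the graph path from $v_0$, which amounts to the same thing as your induction on tree depth), and deriving $\globalsucc$ and $\globalpred$ from the already-proven cases together with the characterisation of matching returns. The only organisational difference is that the paper factors the graph-level facts you plan to argue inline---that a node is a $\curvearrowright$-target iff it has a $\ret$-predecessor (with the source then equal to the caller of that predecessor), that abstract successors preserve callers, and that $\intern$/$\call$/$\curvearrowright$ steps preserve parents---into a separate preparatory lemma (\cref{lemma:successorcharacterization}), which keeps the six cases here short.
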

A detailed proof of this lemma can be found in \cref{app:treesemanticsproofs}.
	\section{Model Checking and Satisfiability}\label{sec:decisionproblems}

We now use execution trees to decide the model checking and satisfiability problems for \logicname.
For this, we construct three tree automata: one automaton for checking whether a tree is an execution tree, a second automaton for checking whether an execution graph (given by its tree representation) satisfies a given formula, and another automaton for checking whether a tree represents an execution graph generated by a given DPN.

\textbf{An automaton for execution trees.}
We first construct a nondeterministic parity tree automaton that checks whether a $(\treelabels,\ar)$-labelled binary tree is an execution tree. At each node labelled by $(l,d,p)$, the automaton needs to ensure that the node is a $p$-child, if $p\neq \bot$, and that it is the root, if $p=\bot$.
Moreover, if $d=\callret$, it has to check that its $\call$-child does have an $\{\intern,\ret\}$-descendant leaf.
Finally, it has to ensure that for leaves $t$ labelled by $(l,d,p)$ we have $d=\ret$ iff $t$ is the $\{\intern,\ret\}$-descendant leaf of a $\call$-child of a node labelled by $(l',\callret,p')$ for some $l'\in 2^\AP$ and $p'\in\moves\cup\{\bot\}$.
Thus, we can define the automaton as $\mathcal{A}_\ExTrees=(Q,q_0,\rho,\Omega)$ with state set $Q=(\moves\cup\{\bot\})\times \{0,1\}$ and initial state $q_0=(\bot,0)$.
Intuitively, in a state $(p,c)$, $p$ denotes the parent edge type and the bit $c$ indicates whether the current node is an $\{\intern,\ret\}$-descendant of a $\call$-child of a node labelled by $(l',\callret,p')$ for some $l'\in 2^\AP$ and $p'\in\moves\cup\{\bot\}$.
The transition function $\rho$ is defined by
\begin{align*}
\rho((p,c),(l,d,p')):=\begin{cases}
(0,(\intern,c)) & \text{ if } d=\intern\\
(0,(\call,0)) & \text{ if } d=\call \text{ and } c=0\\
(0,(\call,1)) \land (1,(\ret,c)) & \text{ if } d=\callret\\
(0,(\intern,c)) \land (1,(\spawn,0)) & \text{ if } d=\spawn\\
\true & \text{ if } (d,c)\in\{(\ret,1),(\labelend,0)\}
\end{cases}
\end{align*}
for $p=p'$ and $\rho((p,c),(l,d,p')):=\false$ in all other cases.
The priority assignment is given by $\Omega(p,c)=c$ for all $(p,c)\in Q$.

We establish the following theorem.
A proof can be found in \cref{app:decisionproblemsproofs}.
\begin{theorem}\label{thm:extreeautomaton}
One can construct a NPTA $\mathcal{A}_\ExTrees$ over $(\treelabels,\ar)$-labelled binary trees with a constant size such that $\mathcal{L}(\mathcal{A}_\ExTrees) = \{ \mathcal{T}(G) \mid G \text{ is an execution graph}\}$.
\end{theorem}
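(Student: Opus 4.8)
The statement has two essentially independent parts. That $\mathcal{A}_\ExTrees$ has constant size is immediate from its definition: it has $|Q| = |\moves\cup\{\bot\}|\cdot 2 = 10$ states, its priority mapping uses only the priorities $\{0,1\}$, and its transition function is a fixed finite case distinction on the state and on the components $d,p'$ of the node label that depends neither on the label component $l$ nor on any external parameter such as a DPN or a formula. I would therefore spend a single sentence on this and devote the argument to the language equality $\mathcal{L}(\mathcal{A}_\ExTrees)=\{\mathcal{T}(G)\mid G\in\ExGraphs\}$, proved by two inclusions. The guiding idea is an invariant for the (essentially deterministic) run: a tree node should carry the state $(p,c)$ where $p$ is the type of the edge entering it and $c=1$ exactly when it lies inside the body of a matched call, i.e.\ when it is an $\{\intern,\ret\}$-descendant of the $\call$-child of a $\callret$-node.

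For the inclusion $\{\mathcal{T}(G)\mid G\in\ExGraphs\}\subseteq\mathcal{L}(\mathcal{A}_\ExTrees)$ I would fix $G$ and define the candidate run $r_A\colon T\to Q$ on $\mathcal{T}(G)=(T,r)$ directly by $r_A(\delta_G(x))=(p(x),c(x))$, with $p(x)$ the component dictated by the label $r(\delta_G(x))=(l,d,p(x))$ and $c(x)$ the bit above. At the root $v_0$ we have $p(v_0)=\bot$ and $c(v_0)=0$, so $r_A(\varepsilon)=q_0$. The test $p=p'$ in $\rho$ holds by construction, since the $p$ stored in the state equals the $p'$ of the label. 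Satisfaction of the transition formula is then a routine case distinction on $d$: an $\intern$-node sends its single child to $(\intern,c)$; a $\callret$-node sends its $\call$-child ($0$) to $(\call,1)$ and its $\ret$-continuation ($1$) to $(\ret,c)$; a $\spawn$-node sends its internal child ($0$) to $(\intern,c)$ and the freshly spawned thread ($1$) to $(\spawn,0)$; the case $d=\call$ only arises with $c=0$, since a non-returning call cannot occur inside a matched call that will eventually return; and $\ret$- and $\labelend$-leaves are accepted precisely in $(\ret,1)$ and $(\labelend,0)$. The one genuinely structural point is the parity condition, for which I would show that no infinite path of $\mathcal{T}(G)$ stays in states with $c=1$ cofinitely: such a path would witness in $G$ a matched call whose procedure body never terminates at its return, contradicting the existence of the finite, minimal-length balanced matching return guaranteed for every $\callret$-node by conditions (4) and (5). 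Hence $c=0$ recurs on every infinite path and the run is accepting.

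For the converse inclusion I would start from a tree $\mathcal{T}=(T,r)\in\mathcal{L}(\mathcal{A}_\ExTrees)$ and reconstruct an execution graph $G$ with node set $V=T$ and $\mathcal{T}(G)=\mathcal{T}$. The labelling and forward edges are read off the labels: $l(t)$ is the first component, an $\intern$-edge (resp.\ $\call$-edge) goes to the left child when $d=\intern$ (resp.\ $d\in\{\call,\callret\}$), a $\spawn$-edge goes to the right child when $d=\spawn$, and for a $\callret$-node $t$ I place a nesting edge $t\curvearrowright t\cdot 1$ together with a $\ret$-edge into $t\cdot 1$ originating from the $\ret$-leaf of the body hanging off $t\cdot 0$; this body and its terminating leaf are located through $\Succ_\caller^{\mathcal{T}}$ and $\Succ_\abstractsucc^{\mathcal{T}}$. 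Here acceptance of $\mathcal{T}$ is essential: the parity condition guarantees that the body of every $\callret$-node actually terminates in a $\ret$-leaf, so the nesting relation is total on $\callret$-nodes and its witnessing return has finite, hence minimal, length. I would then verify the five defining conditions: unique predecessor and reachability from $v_0$ (conditions (1),(2)) from the tree structure and the parent-type consistency enforced by $p=p'$; the successor-degree constraints (condition (3)) from the arities attached to each $d$; and the call/return balance together with the minimal-length nesting characterisation (conditions (4),(5)) from the bookkeeping in the $c$-bit combined with the parity condition. Finally I would check that the encoding of this $G$ returns the original tree, i.e.\ $\mathcal{T}(G)=\mathcal{T}$.

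The main obstacle is this second inclusion. Unlike the forward run, the reconstruction must synthesise the discarded $\ret$-edges and the nesting relation and then discharge the \emph{global} well-nesting conditions (4) and (5) of execution graphs, and it is exactly here that the parity acceptance condition, rather than the purely local transition function, must be invoked to rule out phantom matched calls that never return. Pinning down the correspondence between the $c$-bit, the abstract-successor paths inside procedure bodies, and the minimal-length matching returns is the delicate part; by comparison, the first inclusion and the constant-size claim are mechanical.
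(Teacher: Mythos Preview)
Your proposal is correct and follows essentially the same approach as the paper: both directions are argued exactly as you outline, with the forward direction constructing the run via the pair $(p(x),c(x))$ and a case distinction on $d$, and the backward direction reconstructing $G$ on node set $V=T$, proving an invariant relating the $c$-bit to being inside a matched call body, and then verifying the five execution-graph conditions and that $\mathcal{T}(G)=\mathcal{T}$. The paper phrases the key backward invariant as an explicit induction on tree depth (their property $(*)$) rather than via the tree successor functions, but this is a presentational difference only.
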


\textbf{An automaton for formulae.}
For the next automaton, we define a 2-way alternating tree automaton evaluating $\varphi$ on execution trees, intersect it with the automaton recognising execution trees and then transform this automaton into a nondeterministic parity tree automaton. 
In the following, let $\varphi$ be a closed, well-formed \logicname\ formula in positive normal form. 
We define the automaton for $\varphi$ as $\tilde{\mathcal{A}}_\varphi = (Q,q_0,\rho,\Omega)$ where $Q$, $q_0$, $\rho$ and $\Omega$ are described in more detail in the following paragraphs.

The states of the automaton are given by 
\begin{align*}
	Q=&\ \Sub(\varphi)\cup Q_1\cup Q_2 \text{ where } \\
	Q_1=&\ \{\bigcirc^\caller\bigcirc^\abstractsucc\psi,\bigcirc^\abstractsucc\psi\mid\bigcirc^\globalsucc\psi\in\Sub(\varphi)\text{ or } \bigcirc^{\overline{\globalsucc}}\psi\in\Sub(\varphi)\} \text{ and }\\
	Q_2=&\ \{\call,\leaf\}\times \{\psi\mid\bigcirc^\globalpred\psi\in\Sub(\varphi)\text{ or } \bigcirc^{\overline{\globalpred}}\psi\in\Sub(\varphi)\}
\end{align*}
with initial state $q_0 = \varphi$.
Since we use another automaton to check that the given tree indeed represents an execution graph, we care only about execution trees as inputs in this construction. 
Intuitively, being in a state $\psi \in \Sub(\varphi)\cup Q_1$ at the position $\delta_G(x)$ in the input execution tree $\mathcal{T}(G)$, the automaton checks whether the node $x$ satisfies $\psi$, i.e. whether $x \in \llbracket \psi \rrbracket^G$.
The states in $Q_2$ are used to handle the global predecessor next modality and its dual version.
We use states of the form $(\call,\psi)$ to denote that we should move to the $\call$-child of the current node and switch to state $(\leaf,\psi)$; states of the form $(\leaf, \psi)$ denote that we should check $\psi$ for the $\{\intern,\ret\}$-descendant leaf of the current node.

The transition function $\rho$ is defined as described next.
Recall that $\tilde{\mathcal{A}}_\varphi$ operates on execution trees which are labelled by triples $(l,d,p)$ where $l\in 2^\AP$ are the atomic propositions, $d\in\{\intern,\call,\callret,\spawn,\ret,\labelend\}$ specifies the successor types of the current node and $p\in\moves\cup\{\bot\}$ denotes the type of its predecessor.
If the current state is an atomic proposition or a negation of an atomic proposition, we can check directly whether the tree node is labelled by this proposition and thus determine whether the formula holds:
\begin{align*}
	\rho(\ap,(l,d,p)) := 
	\begin{cases}
		\true \text{ if } \ap\in l\\
		\false \text{ if } \ap\notin l
	\end{cases}
	&&\rho(\lnot \ap,(l,d,p)) := 
	\begin{cases}
		\false \text{ if } \ap\in l\\
		\true \text{ if } \ap\notin l.
	\end{cases}
\end{align*}

For a disjunction or conjunction of two formulae, we can use the power of alternation and set 
\begin{align*}
	\rho(\psi_1\lor \psi_2,\sigma) := (\varepsilon, \psi_1)\lor (\varepsilon,\psi_2) \text{ and } \rho(\psi_1\land \psi_2, \sigma) := (\varepsilon, \psi_1)\land (\varepsilon,\psi_2).
\end{align*}

For a formula of the form $\bigcirc^f\psi$, we move to the corresponding successor of the given node and then switch to state $\psi$.
In most cases, the according transitions can be defined straightforwardly using the characterisation from the successor functions on execution trees:
\begin{align*}
	&\rho(\bigcirc^\globalsucc\psi, (l,d,p))
	:= \begin{cases}
	(0, \psi) \text{ if } d\in\{\intern,\call,\callret,\spawn\}\\
	(\varepsilon, \bigcirc^{\caller}\bigcirc^{\abstractsucc}\psi) \text{ if } d=\ret\\
	\false  \text{ if } d=\labelend
	\end{cases}
\end{align*}
\begin{align*}
	&\rho(\bigcirc^\abstractsucc\psi, (l,d,p)) && \rho(\bigcirc^\caller\psi, (l,d,p)) \\
		& := \begin{cases}
			(0, \psi) \text{ if } d\in\{\intern,\spawn\}\\
			(1, \psi) \text{ if } d=\callret\\
			\false \text{ if } d\in\{\call,\ret,\labelend\}
		\end{cases}
	 && := 
		\begin{cases}
			(\uparrow, \psi) \text{ if } p=\call\\
			(\uparrow, \bigcirc^\caller\psi) \text{ if } p\in\{\intern,\ret\}\\
			\false \text{ if } p\in\{\spawn,\bot\}
		\end{cases}\\
	& \rho(\bigcirc^\parent\psi, (l,d,p)) && \rho(\bigcirc^\child\psi, (l,d,p))\\
		& := \begin{cases}
			(\uparrow, \psi) \text{ if } p=\spawn\\
			(\uparrow, \bigcirc^\parent\psi) \text{ if } p\in\{\intern,\call,\ret\}\\
			\false \text{ if } p=\bot
		\end{cases}
	 && := 
		\begin{cases}
			(1, \psi) \text{ if } d=\spawn\\
			\false \text{ if } d\neq\spawn
		\end{cases}
\end{align*}
In the above definition, we move to $\false$ when we see that the desired successor does not exist and the formula is not satisfied.
The transition function for dual next operators is defined analogously but moves to $\true$ instead of $\false$ in case the successor does not exist.

For the global predecessor, we additionally use states of the form $(\call,\psi)$ and $(\mathit{leaf},\psi)$ for moving to the $\{\intern,\ret\}$-descendant leaf of the $\call$-child of the parent of a node in cases where the global predecessor is defined this way:
\begin{align*}
&\rho(\bigcirc^{\globalpred}\psi, (l,d,p)) && \rho((\mathit{leaf},\psi), (l,d,p))\\
		& := \begin{cases}
			(\uparrow, \psi) \text{ if } p\in \{\intern,\call\}\\
			(\uparrow, (\call,\psi)) \text{ if } p=\ret\\
			\false \text{ if } p\in\{\spawn,\bot\},
		\end{cases}
		&& := \begin{cases}
			(0, (\mathit{leaf},\psi)) \text{ if } d\in \{\intern,\spawn\}\\
			(1,(\mathit{leaf},\psi)) \text{ if } d=\callret\\
			(\varepsilon, \psi) \text{ if } d\in\{\ret,\call,\labelend\}\\
		\end{cases}\\
	&\text{and }\rho((\call,\psi), \sigma) := (0,(\mathit{leaf},\psi)).
\end{align*}
Note that if we are in a state $(\mathit{leaf},\psi)$ at position $\delta_G(x)$ in the tree, $d(x)\in\{\call,\labelend\}$ cannot hold if the tree represents an execution graph since in this case $x$ lies on the path between nodes $y$ and $z$ following $\moves\setminus\{\spawn\}$-successors with $y\curvearrowright z$ and $x\neq z$.

Finally, fixpoint formulae lead to loops:
\begin{align*}
	\rho(\lambda X.\psi,\sigma) := (\varepsilon, \psi) \text{ for } \lambda \in \{\mu,\nu\} \text{ and } \rho(X, \sigma) := (\varepsilon, \fp(X)).
\end{align*}

The acceptance condition specifies whether a fixpoint formula may be visited at most a finite number of times or an infinite number of visits is allowed.
In this definition, higher priorities are assigned to fixpoint formulae binding variables which \textit{depend} on other fixpoint variables.
Formally, we say that a fixpoint variable $X'$ depends on the variable $X$ in $\varphi$, written $X \prec_\varphi X'$, if $X$ is a free variable in $\fp(X')$.
We consider all maximal chains $X_1\prec_\varphi ... \prec_\varphi X_n$ of fixpoint variables appearing in $\varphi$.
If $\fp(X_1)$ is a formula of the form $\mu X.\psi$, we set $\Omega(\fp(X_1))=1$, otherwise we set $\Omega(\fp(X_1))=0$.
Then, we move through the chains and assign this priority to $\fp(X_i)$ as long as the fixpoint type does not change.
In that case, we increase the currently assigned priority by one and keep going.
Then, we set $\Omega(q)$ to the highest priority assigned so far for all other states $q$.

We establish the following theorem.
\begin{theorem}\label{thm:formulatoautomaton}
	Let $\varphi$ be a closed, well-formed \logicname\ formula, $G$ be an execution graph and $\tilde{\mathcal{A}}_\varphi$ be the 2ATA defined above.
	Then $\tilde{\mathcal{A}}_\varphi$ accepts $\mathcal{T}(G)$ iff $G\in\llbracket \varphi\rrbracket$.
\end{theorem}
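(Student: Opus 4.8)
The plan is to characterise both sides of the equivalence as winning regions of parity games and then exhibit a tight correspondence between these games. On the automaton side, recall that acceptance of $\tilde{\mathcal{A}}_\varphi$ on $\mathcal{T}(G)$ is equivalent to the existential player winning the usual \emph{acceptance game}, whose positions are pairs $(t,q)$ of a tree node $t$ and a state $q$, in which the existential player resolves disjunctions and the choice of a satisfying set $Y\subseteq\dir\times Q$, the universal player resolves conjunctions, and the priority of $(t,q)$ is $\Omega(q)$. On the semantic side, I would use the standard model-checking game for fixpoint logics on $G$: its positions are pairs $(x,\psi)$ with $x\in V$ and $\psi\in\Sub(\varphi)$, the existential player handles $\lor$ and the next operators $\bigcirc^f$, the universal player handles $\land$ and the dual operators $\bigcirc^{\overline{f}}$, fixpoint variables are unfolded via $\fp$, and priorities are assigned to fixpoint subformulae exactly as in the construction of $\Omega$. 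Adequacy of this game (Knaster--Tarski together with the standard argument for the modal/linear $\mu$-calculus) gives that the existential player wins from $(v_0,\varphi)$ iff $v_0\in\llbracket\varphi\rrbracket^G$, i.e.\ iff $G\in\llbracket\varphi\rrbracket$.

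The heart of the proof is then the claim that the existential player wins the acceptance game from $(\varepsilon,\varphi)$ iff she wins the model-checking game from $(v_0,\varphi)$, which I would prove by relating the two games move by move through $\delta_G$: a position $(x,\psi)$ of the model-checking game is matched with the position $(\delta_G(x),\psi)$ of the acceptance game. For propositions, boolean connectives and fixpoint unfolding the transition function is a literal transcription of the game moves, so these steps match directly. For the next operators the correspondence rests on \cref{lemma:successorequivalence}: each graph move $x\mapsto\Succ^G_f(x)$ is realised in the tree by the navigation encoded in $\rho$, since $\delta_G(\Succ^G_f(x))=\Succ^{\mathcal{T}}_f(\delta_G(x))$. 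The cases $f\in\{\abstractsucc,\caller,\parent,\child\}$ are single steps or finite climbs, while the two genuinely multi-step cases are handled by the auxiliary states: the return case of $\bigcirc^\globalsucc$ is decomposed as $\bigcirc^\caller\bigcirc^\abstractsucc$ using the states in $Q_1$, and the return case of $\bigcirc^\globalpred$ is realised by the descend-to-leaf gadget using the states in $Q_2$. Because each such auxiliary segment is finite and deterministic---it simply walks to a uniquely determined node of $\mathcal{T}(G)$---it faithfully computes the intended successor and introduces only finitely many positions between two consecutive subformula positions on any play.

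Finally I would argue that the parity winning conditions of the two games agree. Since well-formedness condition (iii) forces every fixpoint variable to occur inside a next operator within its binder, every unfolding of a fixpoint consumes a navigation step, so no infinite play can stall at a single node while cycling through unfoldings; together with the previous paragraph this means that every infinite play of the acceptance game passes through subformula states (in fact through fixpoint-formula states) infinitely often, and the auxiliary $Q_1,Q_2$ segments---being finite---do not alter which priorities occur infinitely often. As $\Omega$ assigns odd priorities to $\mu$-binders and even priorities to $\nu$-binders ordered along the dependency chains $\prec_\varphi$, the two games induce the same sequence of significant priorities along matched plays, so the existential player's winning regions coincide. I expect the main obstacle to be the careful bookkeeping of the auxiliary states---verifying that the $Q_1$ and $Q_2$ gadgets really implement $\Succ^{\mathcal{T}}_\globalsucc$ and $\Succ^{\mathcal{T}}_\globalpred$ in the return case (the remark that $d(x)\in\{\call,\labelend\}$ cannot occur in a $(\leaf,\psi)$ state on a genuine execution tree is exactly what is needed here)---together with the matching of priorities through these gadgets, so that the parity conditions of the two games are provably equivalent.
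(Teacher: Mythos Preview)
Your approach is correct and genuinely different from the paper's. The paper proceeds by \emph{structural induction on $\varphi$} using the device of \emph{automata with holes} \cite{Lange2005}: for each subformula $\psi$ with free variables $X_1,\dots,X_n$ one treats the states $X_i$ as holes that are filled with sets $\tilde{\delta}_G(L_i)\subseteq T$, and proves
$\mathcal{L}_\psi^{\mathcal{T}}(\tilde{\mathcal{A}}_\varphi[X_1{:}\tilde{\delta}_G(L_1),\dots,X_n{:}\tilde{\delta}_G(L_n)])=\tilde{\delta}_G(\llbracket\psi\rrbracket^G_{\mathcal{V}[X_i\mapsto L_i]})$.
The next-operator cases go through \cref{lemma:successorequivalence} exactly as in your argument; the fixpoint case observes that $\Omega(\mu X.\psi')$ is the minimal priority reachable from that state, so the accepting language at state $\mu X.\psi'$ is itself the least fixpoint of the map $L\mapsto\mathcal{L}_{\psi'}^{\mathcal{T}}(\tilde{\mathcal{A}}_\varphi[\dots,X{:}\tilde{\delta}_G(L)])$, and a small transfer lemma (\cref{lemma:fixpointtranslation}) pushes the least fixpoint through the bijection $\tilde{\delta}_G$.

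Both routes ultimately hinge on \cref{lemma:successorequivalence} and on the priority discipline along $\prec_\varphi$-chains. Your game-theoretic reduction is arguably more conceptual and makes the role of guardedness (condition (iii)) and of the finite $Q_1$/$Q_2$ gadgets very transparent; on the other hand it imports two external ingredients---adequacy of the model-checking game for this $\mu$-calculus and the acceptance-game characterisation of 2ATA---that the paper avoids by staying purely with Knaster--Tarski and the semantics already developed. If you pursue your route, the one place that deserves an explicit sentence is termination of the $(\leaf,\psi)$ descent: on a genuine execution tree this path follows abstract successors from the $\call$-child of a $\callret$-labelled node and therefore reaches the (existing) $\{\intern,\ret\}$-descendant leaf in finitely many steps; without this the ``auxiliary segments are finite'' claim, and hence the matching of infinitely-often priorities, would not be justified.
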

\begin{proof}[Proof Sketch]
	The proof is by induction on the structure of $\varphi$.
	Therefore, we also have to deal with non-closed subformulae and consider valuations to decide whether a subformula is satisfied.
	In order to do this in a formal way, we consider automata with special states $X_1,\dots,X_n$, called \emph{holes} \cite{Lange2005}, that can be filled with sets of nodes $L_1,\dots,L_n$ of a given tree.
	Intuitively, such an automaton can operate on a tree as before, but when a hole $X_i$ is encountered during a run and we are at the tree node $t$, then we do not continue on the current path and say that it is accepting iff $t\in L_i$.
	By $\mathcal{L}_{q}^\mathcal{T}(\mathcal{A}[X_1:L_1,\dots,X_n:L_n])$ we denote the set of nodes $t\in T$ such that there is an accepting $(t,q)$-run over $\mathcal{A}$ where the states $X_1,\dots,X_n$ are holes filled by $L_1,\dots,L_n$.
	
	For the inductive proof, we assume that the free variables of the current formula $\psi\in\Sub(\varphi)$ are holes in the automaton and show that the language of this automaton corresponds to the semantics of $\psi$. 
	Intuitively, we fill the holes in the automaton, i.e. the free variables of $\psi$, with the same sets of nodes as specified by a given valuation that we consider for the semantics of $\psi$. 
	More formally, the holes are filled by sets of tree nodes that correspond to given sets of graph nodes in the valuation.
	
	We consider the case for subformulae of the form $\psi\equiv \mu X.\psi'$ with free variables $X_1,\dots, X_n$. 
	Let $\mathcal{V}$ be a fixpoint variable assignment, $\mathcal{T}(G)=\mathcal{T}=(T,r)$ and
	$R$ be a $(t,\psi)$-run over $\mathcal{\tilde{A}}_\varphi$ for a $t\in T$, where the states $X_1,\dots,X_n$ are holes filled by $\delta_G(L_1),\dots,\delta_G(L_n)$ with $L_i=\mathcal{V}(X_i)$. 
	We observe that $R$ can only visit states $\varphi'$ of the form $\mu X.\psi''$ or $\nu X.\psi''$ if $\varphi'$ is a subformula of $\psi$.
	Therefore, $\Omega(\psi)$ is the lowest priority occurring in the run so that the state $\psi$ can only be visited finitely often if the run is accepting.
	This means we can characterize $\mathcal{L}_\psi^{\mathcal{T}}(\mathcal{\tilde{A}}_\varphi[X_1:{\delta}_G(L_1),\dots,X_n:{\delta}_G(L_n)])$ as the least fixpoint of the function $f:2^T\to 2^T$ with $f({\delta}_G(L)):=\mathcal{L}_{\psi'}^{\mathcal{T}}(\mathcal{\tilde{A}}_\varphi[X_1:{\delta}_G(L_1),\dots,X_n:{\delta}_G(L_n), X:{\delta}_G(L)])$.
	Thus, we can use the induction hypothesis and the fixpoint characterization of the semantics of $\psi$ obtained by Knaster-Tarski's fixpoint theorem to get the desired result in this inductive step.
	
	Since $\varphi$ is closed, the induction establishes in particular that
	$\tilde{\mathcal{A}}_\varphi$ accepts $\mathcal{T}(G)$ iff $G\in\llbracket \varphi\rrbracket$.
	Details of this proof can be found in \cref{app:decisionproblemsproofs}.
	\hfill\qed
\end{proof}

As mentioned, we do not use this automaton directly but instead intersect it with $\mathcal{A}_\ExTrees$ from \cref{thm:extreeautomaton} and then transform it into a nondeterministic parity tree automaton using \cref{prop:treedealternation}.
We obtain:
\begin{corollary}\label{cor:formulatoautomaton}
	Let $\varphi$ be a closed, well-formed \logicname\ formula. 
	Then we can construct an NPTA $\mathcal{A}_\varphi$ over $(\treelabels,\ar)$-labelled binary trees with a number of states exponential and an acceptance condition linear in $|\varphi|$ such that $\mathcal{L}(\mathcal{A}_\varphi) = \{ \mathcal{T}(G) \mid G \text{ is an execution graph with } G\in\llbracket \varphi\rrbracket \}$.
\end{corollary}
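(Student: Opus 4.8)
The plan is to follow the recipe announced just before the statement: take the 2ATA $\tilde{\mathcal{A}}_\varphi$ of \cref{thm:formulatoautomaton}, intersect it with the execution-tree automaton $\mathcal{A}_\ExTrees$ of \cref{thm:extreeautomaton}, and finally remove alternation via \cref{prop:treedealternation}. First I would record the size of $\tilde{\mathcal{A}}_\varphi$: its state set $\Sub(\varphi)\cup Q_1\cup Q_2$ has cardinality $O(|\varphi|)$, since $|\Sub(\varphi)|=|\varphi|$ by definition and both $Q_1$ and $Q_2$ contribute at most $2|\varphi|$ states; moreover, as the priorities are handed out along the at most $|\varphi|$ many maximal $\prec_\varphi$-chains of fixpoint variables, its acceptance condition is linear in $|\varphi|$.

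Next I would perform the intersection. Since by definition every NPTA is a special 2ATA, $\mathcal{A}_\ExTrees$ may be regarded as a 2ATA, and the intersection of two alternating tree automata is immediate: introduce a fresh initial state whose transition is the conjunction $(\varepsilon,q_0^{\tilde{\mathcal{A}}_\varphi})\land(\varepsilon,q_0^{\mathcal{A}_\ExTrees})$, take the disjoint union of the two state sets, and let every state keep its own priority. Because the two state sets are disjoint and no transition crosses between them, after the initial branching each infinite path of a run stays within a single component, so the combined parity condition is accepting on a path exactly when the corresponding component's condition is; hence the resulting 2ATA $\mathcal{B}$ satisfies $\mathcal{L}(\mathcal{B})=\mathcal{L}(\tilde{\mathcal{A}}_\varphi)\cap\mathcal{L}(\mathcal{A}_\ExTrees)$. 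Since $\mathcal{A}_\ExTrees$ has constant size with priorities only in $\{0,1\}$ by \cref{thm:extreeautomaton}, $\mathcal{B}$ still has $O(|\varphi|)$ states and an acceptance condition linear in $|\varphi|$.

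For correctness of the language I would argue as follows. A tree $\mathcal{T}$ lies in $\mathcal{L}(\mathcal{B})$ iff $\mathcal{T}\in\mathcal{L}(\mathcal{A}_\ExTrees)$, i.e. $\mathcal{T}=\mathcal{T}(G)$ for some execution graph $G$, and simultaneously $\mathcal{T}\in\mathcal{L}(\tilde{\mathcal{A}}_\varphi)$. For a tree of the form $\mathcal{T}(G)$, \cref{thm:formulatoautomaton} yields $\mathcal{T}(G)\in\mathcal{L}(\tilde{\mathcal{A}}_\varphi)$ iff $G\in\llbracket\varphi\rrbracket$; note that this also settles the potential ambiguity when several graphs share a tree representation, since membership of $\mathcal{T}(G)$ in $\mathcal{L}(\tilde{\mathcal{A}}_\varphi)$ depends only on the tree. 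Consequently $\mathcal{L}(\mathcal{B})=\{\mathcal{T}(G)\mid G\text{ is an execution graph with }G\in\llbracket\varphi\rrbracket\}$, exactly the target language.

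Finally I would apply \cref{prop:treedealternation} to $\mathcal{B}$ to obtain an equivalent NPTA $\mathcal{A}_\varphi$; by that proposition its number of states is exponential in the $O(|\varphi|)$ states of $\mathcal{B}$, hence exponential in $|\varphi|$, while its acceptance condition stays linear in the linear condition of $\mathcal{B}$, hence linear in $|\varphi|$. The main bookkeeping obstacle is keeping the acceptance-condition size linear through the intersection: one must check that merging the linearly many priorities of $\tilde{\mathcal{A}}_\varphi$ with the two priorities of $\mathcal{A}_\ExTrees$ does not force a renumbering into a larger range, which is exactly what the disjointness of the two components guarantees, and that dealternation inflates this condition only linearly, as promised by \cref{prop:treedealternation}.
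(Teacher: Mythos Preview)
Your proposal is correct and follows exactly the approach the paper outlines just before the corollary: intersect the 2ATA $\tilde{\mathcal{A}}_\varphi$ with the constant-size NPTA $\mathcal{A}_\ExTrees$ (exploiting alternation to do this cheaply via a conjunctive initial transition over disjoint state sets), then apply \cref{prop:treedealternation}. You also supply the size bookkeeping that the paper leaves implicit, and your observation that the disjointness of the two components keeps the merged parity index linear is precisely the point needed to preserve the acceptance-condition bound.
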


\textbf{An automaton for DPNs.}
We proceed with an automaton for a DPN $\mathcal{M}=(S,s_0,\gamma_0,\Delta,L)$.
We define $\mathcal{A}_\mathcal{M}$ as an NPTA that checks whether an execution tree represents an execution graph generated by $\mathcal{M}$.
We set $\mathcal{A}_\mathcal{M}:=(Q,q_0,\rho,\Omega)$ where $Q$, $q_0$, $\rho$ and $\Omega$ are described in more detail next.

The state set is given by $Q=S\times \Gamma \times ((S\times \Gamma)\cup\{\bot\})$
with initial state $q_0=(s_0,\gamma_0,\bot)$. 
Being in a state $(s,\gamma,c)\in Q$ at the position $\delta_G(x)$ in the tree labelled by $(l,d,p)$ means that there is a suitable assignment $\assignment$ assigning configurations to the graph nodes whose corresponding tree nodes have been visited so far where $\assignment(x)=(s,\gamma w)$ for some stack content $w\in\Gamma^*\bot$.
If $d=\callret$, we also have to know the configuration assigned to the global predecessor of the $\ret$-child of the current node to check that we can extend $\assignment$ suitably for the children of the current node.
We thus guess this configuration in this case and use $c\in S\times \Gamma$ to indicate that we must assign $c$ to the $\{\intern,\ret\}$-descendant leaf of the call successor of the current node in order to fulfill the requirements for the assignment $\assignment$.
Note that the $\{\intern,\ret\}$-descendant leaf exists in this case, if the input tree is an execution tree.
The transition function $\rho$ then checks that
(i) $l=L(\assignment(x))$,
(ii) if $c\in S\times \Gamma$, then the configuration $c$ is assigned to the $\{\intern,\ret\}$-descendant leaf of $\delta_G(x)$ and
(iii) the assignment $\assignment$ can be properly extended to the children of $\delta_G(x)$.
We set
\begin{align*}
	\rho((s,\gamma,c),(l,\intern,p))  := & \bigvee\{(0, (s',\gamma',c))\mid s\gamma\rightarrow s'\gamma'\in\Delta_I\},\\
	\rho((s,\gamma,\bot),(l,\call,p))  := & \bigvee\{(0,(s',\gamma',\bot))\mid \exists \gamma''\in\Gamma \text{ s.t. } s\gamma\rightarrow s'\gamma'\gamma''\in\Delta_C\},\\
	\rho((s,\gamma,c),(l,\callret,p)) := & \bigvee\{(0,(s',\gamma',(s_r,\gamma_r)))\land (1, (s'',\gamma'',c))\mid\\
 	& s\gamma\rightarrow s'\gamma'\gamma''\in\Delta_C \text{ and } s_r\gamma_r\rightarrow s''\in\Delta_R\},\\
	\rho((s,\gamma,c),(l,\spawn,p)) := & \bigvee\{(0,(s',\gamma',c))\land (1, (s_n,\gamma_n,\bot)) \mid\\
	& s\gamma \rightarrow s'\gamma' \vartriangleright s_n \gamma_n \in \Delta_S\},\\
	\rho((s,\gamma,(s,\gamma)),(l,\ret,p)) := &\ \true \text{ and}\\
	\rho((s,\gamma,\bot),(l,\labelend,p)) := & \begin{cases}
	\true \text{ if there is no transition for } s\gamma \text{ in } \Delta\\
	\false \text{ else}
	\end{cases}
\end{align*}
for $l=L(s,\gamma)$ and $\rho((s,\gamma,c),(l,d,p)):=\false$ in all other cases.
Since we are only concerned with execution trees as inputs, all conditions necessary to determine if the input tree is generated by $\mathcal{M}$ are already checked by the transition function of $\mathcal{A}_\mathcal{M}$.
We thus set $\Omega(q):=0$ for all $q\in Q$.
We establish the following theorem. A detailed proof can be found in \cref{app:decisionproblemsproofs}.
\begin{theorem}\label{thm:dpnautomaton}
	Let	$\mathcal{M}$ be a DPN.
	We can construct an NPTA $\mathcal{A}_\mathcal{M}$ over $(\treelabels,\ar)$-labelled binary trees with a number of states quadratic in $|\mathcal{M}|$ and a trivial acceptance condition such that for all execution graphs $G$, $\mathcal{T}(G) \in \mathcal{L}(\mathcal{A}_\mathcal{M})$ iff $G \in \llbracket\mathcal{M}\rrbracket$.
\end{theorem}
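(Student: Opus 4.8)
The plan is to first dispose of the two quantitative claims, which are immediate from the definition of $\mathcal{A}_\mathcal{M}$ given above. The state set $Q = S\times\Gamma\times((S\times\Gamma)\cup\{\bot\})$ has cardinality $|S|\cdot|\Gamma|\cdot(|S|\cdot|\Gamma|+1)$, hence is quadratic in $|\mathcal{M}|$, and since $\Omega(q)=0$ for every $q\in Q$, the lowest priority occurring on any infinite path is the even value $0$, so every run satisfies the parity condition and acceptance is decided purely by whether the transition function reaches $\true$ at each leaf. It thus remains to prove the correctness equivalence, for which I would establish the two implications separately. Throughout I would use that $\delta_G$ is injective, so that an accepting run of the NPTA $\mathcal{A}_\mathcal{M}$ over $\mathcal{T}(G)=(T,r)$ can be identified with a map $r_A\colon T\to Q$ satisfying $r_A(\varepsilon)=q_0$ and respecting $\rho$, and so that configurations may be read off at individual graph nodes via their tree images.

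For the direction $G\in\llbracket\mathcal{M}\rrbracket \Rightarrow \mathcal{T}(G)\in\mathcal{L}(\mathcal{A}_\mathcal{M})$, I would take a witnessing assignment $\assignment$ and turn it into a run by setting $r_A(\delta_G(x))=(s,\gamma,c_x)$, where $\assignment(x)=(s,\gamma w)$ supplies the control location and top of stack, and $c_x$ is the configuration head of the $\{\intern,\ret\}$-descendant leaf of $\delta_G(x)$ whenever that leaf carries label $\ret$, and $\bot$ otherwise. I would then verify, by a case distinction on $d(x)$, that $r_A$ respects $\rho$: the $\intern$, $\call$ and $\spawn$ clauses follow directly from the corresponding clause of the generated-by relation; the $\callret$ clause additionally uses that the guessed return rule $s_r\gamma_r\rightarrow s''$ is realised at the matching return leaf, whose head is $c_x$ by construction; and the $\labelend$ clause uses that graph nodes without successor are assigned stuck configurations.

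For the converse I would reconstruct $\assignment$ from an accepting run $r_A$. The run directly supplies control location and top of stack at each node; the remaining stack tail $w_t$ I define top-down along tree edges, keeping it unchanged across an $\intern$-child and across the $\ret$-child of a $\callret$ node, prepending the guessed return symbol $\gamma''$ when descending into a $\call$-child, and resetting it to $\bot$ when entering a spawned thread. Setting $\assignment(x)=(s,\gamma w_{\delta_G(x)})$ where $(s,\gamma,c)=r_A(\delta_G(x))$, I would then check the generated-by conditions, where the label condition $l(x)=L(s,\gamma)$ and the successor conditions for $\intern$, $\call$ and $\spawn$ moves follow from the respective transition clauses. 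The hard part, and the main obstacle, is the return successor condition: one must show that the symbol $\gamma''$ pushed at a $\callret$ node genuinely reappears as the top of stack of its $\ret$-child after the matching return. I would isolate this in an invariant, proved by induction along the $\{\intern,\ret\}$-descendant path: whenever $r_A(t)=(s,\gamma,c)$ with $c=(s_r,\gamma_r)\neq\bot$, the $\{\intern,\ret\}$-descendant leaf $t^*$ of $t$ exists, carries label $\ret$, has run-state $(s_r,\gamma_r,(s_r,\gamma_r))$, and satisfies $w_{t^*}=w_t$; here the equality of the first two components of the leaf state is forced exactly by the clause $\rho((s,\gamma,(s,\gamma)),(l,\ret,p))=\true$, which fires only when the guess matches the actual head, while the preservation of the tail follows because nested calls along this path push and pop in matched pairs. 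Granting this invariant, the return rule $s_r\gamma_r\rightarrow s''$ applied at $t^*$ to configuration $(s_r,\gamma_r\gamma'' w)$ yields $(s'',\gamma'' w)$, which coincides with the assignment at the $\ret$-child, so $\assignment$ witnesses $G\in\llbracket\mathcal{M}\rrbracket$ and the equivalence follows.
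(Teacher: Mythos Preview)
Your proposal is correct and follows essentially the same two-direction strategy as the paper: build the run from the witnessing assignment (and conversely the assignment from the run) by reading off control location and top of stack, verify the transition clauses by case analysis on the node label, and handle the return case via an invariant on the third state component $c$. The only difference is cosmetic—you characterise $c$ \emph{forward} via the $\{\intern,\ret\}$-descendant leaf, whereas the paper defines it \emph{backward} by induction on predecessors and states its invariant $(*)$ in terms of the ancestor $\callret$ node—but the two characterisations are equivalent and the arguments interchangeable.
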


\textbf{Complexity of Model Checking and Satisfiability.}
These automata constructions can be used to obtain a decision procedure for the model checking and satisfiability problems.
For the former, we obtain the following theorem:
\begin{theorem}
	The model checking problem for \logicname\ is $\EXPTIME$-complete.
	For fixed formulae, the problem is in $\PTIME$.
\end{theorem}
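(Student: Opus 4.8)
The plan is to decide model checking by a complementation-and-emptiness argument, to read the upper bound off the three automata constructions, and to match it by a reduction from pushdown LTL model checking. For the upper bound, observe that $\mathcal{M}\models\varphi$ fails exactly when some $G\in\llbracket\mathcal{M}\rrbracket$ satisfies $\lnot\varphi$. Since $\varphi$ is closed and well-formed, so is $\lnot\varphi$ (prepending a negation to the whole formula leaves the parity of the negations inside every $\fp(X)$, and the binding and next-operator conditions, unchanged), so I would first bring $\lnot\varphi$ into positive normal form. I then build $\mathcal{A}_{\lnot\varphi}$ from \cref{cor:formulatoautomaton}, which accepts exactly the tree representations $\mathcal{T}(G)$ of execution graphs with $G\in\llbracket\lnot\varphi\rrbracket$, and $\mathcal{A}_\mathcal{M}$ from \cref{thm:dpnautomaton}, which on execution trees accepts $\mathcal{T}(G)$ iff $G\in\llbracket\mathcal{M}\rrbracket$. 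Because $\mathcal{A}_{\lnot\varphi}$ already restricts attention to tree representations of execution graphs, the intersection language is precisely $\{\mathcal{T}(G)\mid G\in\llbracket\mathcal{M}\rrbracket,\ G\not\models\varphi\}$, so $\mathcal{M}\models\varphi$ holds iff this intersection is empty.

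For the complexity, I would track the automaton sizes carefully. By \cref{cor:formulatoautomaton}, $\mathcal{A}_{\lnot\varphi}$ has a number of states exponential in $|\varphi|$ and an acceptance condition linear in $|\varphi|$; by \cref{thm:dpnautomaton}, $\mathcal{A}_\mathcal{M}$ has $\mathcal{O}(|\mathcal{M}|^2)$ states and a trivial acceptance condition. Since one of the two acceptance conditions is trivial, \cref{prop:treeintersection}~(ii) yields an intersection NPTA with $\mathcal{O}(2^{\mathcal{O}(|\varphi|)}\cdot|\mathcal{M}|^2)$ states and an acceptance condition still linear in $|\varphi|$. Applying \cref{prop:treeemptiness}, emptiness is decidable in time polynomial in the number of states and exponential in the size of the acceptance condition, i.e.\ in time $2^{\mathcal{O}(|\varphi|)}\cdot\mathrm{poly}(|\mathcal{M}|)$. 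This is exponential in the size of the whole input, which gives membership in $\EXPTIME$; and for a fixed $\varphi$ the factor $2^{\mathcal{O}(|\varphi|)}$ is a constant, so the running time becomes polynomial in $|\mathcal{M}|$, giving the $\PTIME$ bound.

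For the matching lower bound I would reduce from LTL model checking of pushdown systems, which is $\EXPTIME$-complete. Given a pushdown system $P$ and an LTL formula $\varphi$, I view $P$ as a DPN $\bar P$ without spawning rules, so that the execution graphs in $\llbracket\bar P\rrbracket$ are exactly the single-threaded runs of $P$ together with their call/return nesting, and I translate $\varphi$ into an \logicname\ formula $\bar\varphi$ by replacing the LTL next and until operators by $\bigcirc^\globalsucc$ and $\mathcal{U}^\globalsucc$, which along a single thread progress exactly step by step through the run. Since both the universal ``all execution graphs'' reading of $\mathcal{M}\models\varphi$ and the universal ``all runs'' reading of LTL model checking quantify over all behaviours of the system, one obtains $P\models\varphi$ iff $\bar P\models\bar\varphi$; as the reduction is polynomial, $\EXPTIME$-hardness transfers to \logicname\ model checking. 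If a self-contained argument is preferred, the same hardness can instead be obtained by a direct encoding of the acceptance problem of an alternating polynomial-space Turing machine.

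The routine part is the size bookkeeping in the upper bound, which reduces to invoking \cref{cor:formulatoautomaton}, \cref{thm:dpnautomaton}, \cref{prop:treeintersection} and \cref{prop:treeemptiness} in sequence. I expect the main obstacle to lie in the lower bound: one must make the formula translation robust against the subtleties of \logicname's partial successor functions (in particular that $\bigcirc^\globalsucc$ is undefined at terminating positions, whereas LTL is read over infinite words), and justify that the universal model-checking semantics faithfully mirrors the all-runs semantics of pushdown LTL model checking, including the treatment of terminating or deadlocked runs.
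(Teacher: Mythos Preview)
Your proposal is correct and follows essentially the same approach as the paper: complement the formula, build the NPTA $\mathcal{A}_{\lnot\varphi}$ via \cref{cor:formulatoautomaton}, intersect with $\mathcal{A}_\mathcal{M}$ from \cref{thm:dpnautomaton} using \cref{prop:treeintersection}~(ii), and test emptiness via \cref{prop:treeemptiness}; for hardness, reduce from LTL pushdown model checking by viewing a pushdown system as a spawn-free DPN and encoding LTL with $\bigcirc^\globalsucc$ and $\mathcal{U}^\globalsucc$. Your size bookkeeping and the observation that $\lnot\varphi$ remains closed and well-formed are more explicit than in the paper, and the concerns you raise about partial successors and finite runs in the lower bound are legitimate but easily dispatched (e.g.\ by assuming totalised pushdown systems as in the standard $\EXPTIME$-hardness proof), which is why the paper treats the reduction as trivial.
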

\begin{proof}
	For the upper bound, we construct an automaton for the negation of the formula using \cref{cor:formulatoautomaton} and intersect it with an automaton for the DPN from \cref{thm:dpnautomaton}.
	Since the acceptance condition of the latter is trivial, the resulting automaton is quadratic in the size of the DPN and exponential in the size of the formula by \cref{prop:treeintersection} (ii).
	It is tested for emptiness using \cref{prop:treeemptiness} in time exponential in $|\varphi|$ and polynomial in $|\mathcal{M}|$ to answer the model checking problem.
	
	The lower bound follows by a reduction from the $\mathit{LTL}$ pushdown model checking problem which was shown to be $\EXPTIME$-hard in \cite{Bouajjani1997}.
	The reduction is trivial since $\mathit{LTL}$ is a sublogic of \logicname\ for single threads and pushdown systems can be trivially embedded into DPNs with a single thread.
	\hfill\qed
\end{proof}
For satisfiability, we can show that the two problems defined in \cref{sec:logic} are equivalent and thus only need to solve one of the problems by a direct procedure.
\begin{theorem}\label{thm:satequivalence}
	The graph and DPN satisfiability problems are equivalent.
\end{theorem}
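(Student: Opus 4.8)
The plan is to prove the two implications separately, relating both problems to the emptiness of the automaton $\mathcal{A}_\varphi$ from \cref{cor:formulatoautomaton}.

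\textbf{From DPN satisfiability to graph satisfiability.} First I would observe that every DPN $\mathcal{M}=(S,s_0,\gamma_0,\Delta,L)$ generates at least one execution graph, i.e.\ $\llbracket\mathcal{M}\rrbracket\neq\emptyset$. This follows from a greedy construction: starting from the root with assignment $(s_0,\gamma_0\bot)$, at every node whose configuration admits an enabled rule we append the successors prescribed by some such rule (adding a $\spawn$-successor when the chosen rule is a spawn rule), and we leave a node as a leaf exactly when its configuration is stuck. Conditions (1)--(3) and (5) of the definition of execution graphs hold by construction, and condition (4) holds because the bottom symbol $\bot\notin\Gamma$ is never matched by any rule and each thread starts with a single $\Gamma$-symbol, so no path can perform more $\ret$- than $\call$-moves. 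The resulting (possibly infinite) graph lies in $\llbracket\mathcal{M}\rrbracket$. Hence, if $\mathcal{M}\models\varphi$, then any $G\in\llbracket\mathcal{M}\rrbracket$ witnesses $G\models\varphi$, so $\varphi$ is graph satisfiable.

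\textbf{From graph satisfiability to DPN satisfiability.} Conversely, assume some execution graph $G$ satisfies $\varphi$. By \cref{cor:formulatoautomaton}, $\mathcal{T}(G)\in\mathcal{L}(\mathcal{A}_\varphi)$, so $\mathcal{A}_\varphi$ is nonempty. Since $\mathcal{A}_\varphi$ is an NPTA, the finite model property of parity tree automata — the same property underlying the emptiness test of \cref{prop:treeemptiness}, where a finite-memory winning strategy yields a finitely representable accepted tree — guarantees that $\mathcal{A}_\varphi$ accepts a \emph{regular} tree $\mathcal{T}'$, i.e.\ one with only finitely many distinct subtrees. As $\mathcal{L}(\mathcal{A}_\varphi)$ contains only tree representations of execution graphs, we have $\mathcal{T}'=\mathcal{T}(G')$ for some execution graph $G'$ with $G'\models\varphi$, and $G'$ is regular. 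I would then build a \emph{deterministic} DPN $\mathcal{M}$ whose unique generated execution graph is (isomorphic to) $G'$. Letting $\Theta$ be the finite set of subtree types of $\mathcal{T}'$, I encode the type of the current node into the control location and top stack symbol, and let transitions mirror the label $(l,d,p)$: an $\intern$- or $\call$-/$\callret$-node uses the corresponding internal/call rule to realise its left child, a $\spawn$-node uses a spawn rule whose spawned configuration realises the right child, and a $\ret$-leaf uses a return rule, while an $\labelend$-leaf is stuck. At a $\callret$-node the call rule additionally pushes a return symbol encoding the type of the $\ret$-child, so that after the matching return this continuation is recovered from the stack. Making $\rho$-choices unique per $(s,\gamma)$ forces $\llbracket\mathcal{M}\rrbracket=\{G'\}$, whence $\mathcal{M}\models\varphi$.

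\textbf{Main obstacle.} The delicate part is the last construction, and in particular verifying the stack discipline. One must argue that the return symbol pushed at a $\callret$-node correctly restores the continuation (the $\ret$-child) independently of the behaviour of the called subprocedure — this is exactly why the continuation type is stored at call time rather than inferred on return — and that the encoding never reuses one $(s,\gamma)$ pair for two different node types, which is what keeps $\mathcal{M}$ deterministic and its generated graph a singleton. Regularity of $G'$ bounds the information to be stored in the finite control and the finite stack alphabet, and the well-nestedness guaranteed by conditions (4) and (5) for execution graphs ensures the call/return structure of $\mathcal{T}'$ is compatible with a genuine pushdown stack; together these make the construction well-defined and establish $\mathcal{T}(\text{the unique graph of }\mathcal{M})=\mathcal{T}'=\mathcal{T}(G')$, completing the equivalence.
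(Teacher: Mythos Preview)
Your proposal is correct and follows essentially the same route as the paper: both directions rely on $\llbracket\mathcal{M}\rrbracket\neq\emptyset$ for one implication and on extracting a regular tree from $\mathcal{L}(\mathcal{A}_\varphi)$ and turning its finitely many subtree types into a deterministic DPN for the other. The paper's concrete construction is slightly simpler than what you sketch (a single control location, with the subtree class stored only in the top stack symbol, and at a $\callret$-node the rule $s x_i \to s x_j x_k$ pushes the class $x_k$ of the $\ret$-child underneath), but this is exactly your ``push the continuation type at call time'' idea; the paper then carries out the inductive verification that the unique generated graph has tree representation equal to the chosen regular tree.
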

\begin{proof}
	For the first direction, assume that a formula $\varphi$ is satisfiable by a DPN $\mathcal{M}$.
	Then $G \models \varphi$ for all $G \in \llbracket \mathcal{M} \rrbracket$.
	Since $\llbracket \mathcal{M} \rrbracket \neq \emptyset$ (this indeed holds for all DPNs), we can thus choose an arbitrary graph $G \in \llbracket \mathcal{M} \rrbracket$ to show that $\varphi$ is satisfiable by a graph.
	
	For the other direction, assume that a formula $\varphi$ is satisfiable by a graph $G$.
	By \cref{cor:formulatoautomaton}, we know that $\mathcal{T}(G) \in \mathcal{L}(\mathcal{A}_\varphi)$.
	Since $\mathcal{L}(\mathcal{A}_\varphi)$ is a nonempty $\omega$-regular tree language, we know that $\mathcal{T} \in \mathcal{L}(\mathcal{A}_{\varphi})$ for a regular tree $\mathcal{T} = (T,r)$, i.e. a tree with only finitely many non-isomorphic subtrees (see e.g. Cor 8.20. in \cite{Graedel2002}).
	Let $x_1,\dots,x_n$ be the finitely many classes of nodes associated with the roots of the distinct subtrees of $T$ such that $x_1$ is the class of $\varepsilon$ and let $(l_i,d_i,p_i)$ be the label of the nodes from class $x_i$.
	We construct a DPN $\mathcal{M} = (\{s\},s,x_1,\Delta,L)$ with stack alphabet $\Gamma = \{x_1,\dots,x_n\}$.
	The labeling $L$ is defined such that $L(s,x_i) = l_i$.
	Transition rules are defined from the parent-child relationships between the different classes of nodes: (i) if $d_i = \intern$, then nodes of class $x_i$ have exactly one child of class $x_j$ and we include $s x_i \rightarrow s x_j \in \Delta$, (ii) if $d_i = \spawn$, then nodes of class $x_i$ have exactly one left child of class $x_j$ and one right child of class $x_k$ and we include $s x_i \rightarrow s x_j \triangleright s x_k \in \Delta$, (iii) if $d_i = \callret$, then nodes of class $x_i$ have exactly one left child of class $x_j$ and one right child of class $x_k$ and we include $s x_i \rightarrow s x_j x_k \in \Delta$, (iv) if $d_i = \call$, then nodes of class $x_i$ have exactly one child of class $x_j$ and we include $s x_i \rightarrow s x_j x_i \in \Delta$, (v) if $d_i = \ret$, then nodes of class $x_i$ have no children and we include $s x_i \rightarrow s \in \Delta$ and (vi) if $d_i = \labelend$, then nodes of class $x_i$ have no children and we do not include a transition.
	It is easy to see that $\llbracket \mathcal{M} \rrbracket$ is a singleton set since $\mathcal{M}$ is deterministic.
	We show that $\llbracket \mathcal{M} \rrbracket = \{ H \}$ where $\mathcal{T} = \mathcal{T}(H)$ and thus $\mathcal{M} \models \varphi$.
	For this, let $\mathcal{T}(H) = (T_H,r_H)$.
	
	We show by induction on the length of $x$ that for all $x \in \{0,1\}^*$, $x \in T$ iff $x \in T_H$ and in that case (a) $r(x) = r_H(x)$ and (b) if $x$ belongs to class $x_i$, then the configuration in $\delta_G^{-1}(x)$ is $(s,x_i w)$ for some stack content $w$.
	In the base case, we know that $\varepsilon \in T$ and $\varepsilon \in T_H$.
	We know that $r(\varepsilon) = (l_1,d_1,p_1)$ since $T$ is rooted in $x_1$ and $p_1 = \bot$ since $\mathcal{T}$ is an execution tree.
	Let $r_H(\varepsilon) = (l,d,p)$.
	Since $(s, x_1 \bot)$ is the starting configuration of $\mathcal{M}$, we know that it is also the configuration in $\delta_G^{-1}(\varepsilon)$ and that $l = l_1$.
	Additionally, we can show that $d = d_1$ by a case distinction on $d_1$.
	We only sketch the case $d_1 = \intern$, the other cases are similar.
	In this case, the only enabled transition in $(s, x_1 \bot)$ is $s x_1 \rightarrow s x_j$, an internal transition.
	Thus, $\delta_G^{-1}(\varepsilon)$ has exactly one $\intern$-successor in $H$ which means that $d = \intern$.
	Finally, since $\mathcal{T}(H)$ is an execution tree, we have $p = \bot$. 
	
	In the inductive step, we consider $x \cdot d$ for $d \in \{0,1\}$.
	From the induction hypothesis, we know that the claim holds for $x$.
	If $x \not\in T$ and $x \not\in T_H$, then also $x \cdot d \not\in T$ and $x \cdot d \not\in T_H$ since trees are prefix-closed.
	In the other case, let $x_i$ be the class of $x$.
	We have $x \in T$ and $x \in T_H$ with $r(x) = r_H(x) = (l_i,d_i,p_i)$ and the configuration in $\delta_G^{-1}(x)$ is $(s,x_i w)$ for some stack content $w$.
	We distinguish cases based on $d_i$.
	We consider the most involved case where $d_i = \callret$.
	Since $\mathcal{T}$ is an execution tree, we know that $x \cdot d \in T$ for $d \in \{0,1\}$.
	Let $x_j$ be the class of $x \cdot 0$ and $x_k$ be the class of $x \cdot 1$.
	We know that the only enabled transition in $(s,x_i w)$ is $s x_i \rightarrow s x_j x_k$.
	Since $d_i = \callret$ and since $\mathcal{T}(G)$ is an execution tree, we know that $\delta_G^{-1}(x \cdot 0)$ continues with the configuration after this call transition and $\delta_G^{-1}(x \cdot 1)$ continues with the configuration after the matching return transition (which exists in this case).
	Thus, the configuration in $\delta_G^{-1}(x \cdot 0)$ is $(s,x_j x_k w)$ and the configuration in $\delta_G^{-1}(x \cdot 1)$ is $(s, x_k w)$, establishing this part of the claim.
	We now establish that $r(x \cdot d) = r_H(x \cdot d)$.
	For the first and second component, this is established by the fact that the configuration in $\delta_G^{-1}(x \cdot d)$ determines both the label and the unique enabled transition.
	For the third component, this follows from the fact that both $\mathcal{T}$ and $\mathcal{T}(H)$ are execution trees and the fact that $r(x)$ and $r_H(x)$ match in the second component.
	\hfill\qed
\end{proof}
We obtain the following theorem for the two satisfiability problems:
\begin{theorem}\label{thm:sat}
	The graph and DPN satisfiability problems for \logicname\ are $\EXPTIME$-complete.
\end{theorem}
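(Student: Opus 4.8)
The plan is to establish matching upper and lower bounds and, by \cref{thm:satequivalence}, to reason only about the graph satisfiability problem. For membership in \EXPTIME, I would reduce graph satisfiability to an emptiness test on the formula automaton. By \cref{cor:formulatoautomaton}, a closed well-formed formula $\varphi$ yields an NPTA $\mathcal{A}_\varphi$ with $\mathcal{L}(\mathcal{A}_\varphi)=\{\mathcal{T}(G)\mid G\text{ is an execution graph with }G\in\llbracket\varphi\rrbracket\}$, whose number of states is exponential in $|\varphi|$ and whose acceptance condition is linear in $|\varphi|$. Hence $\varphi$ is graph-satisfiable iff $\mathcal{L}(\mathcal{A}_\varphi)\neq\emptyset$. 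By \cref{prop:treeemptiness} this test runs in time polynomial in the number of states, which is $2^{O(|\varphi|)}$, and exponential in the size of the acceptance condition, which is $O(|\varphi|)$; the product is again $2^{O(|\varphi|)}$, so the whole procedure runs in time exponential in $|\varphi|$. Through \cref{thm:satequivalence} the same bound transfers to DPN satisfiability.

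For the lower bound, I would reduce from the satisfiability problem for $\mathit{VP}$-$\mu$-$\mathit{TL}$, which is \EXPTIME-hard \cite{Bozzelli2007}. Since that logic is subsumed by \logicname, a $\mathit{VP}$-$\mu$-$\mathit{TL}$ formula $\psi$ can be translated in polynomial time into an \logicname\ formula $\hat\psi$ that navigates a single thread only: it uses solely the call/return successor types $\globalsucc$, $\globalpred$, $\abstractsucc$ and $\caller$ inherited from CaRet, and avoids the thread-crossing types $\parent$ and $\child$ (and hence also the reachability modalities). None of these four successor types ever crosses a $\spawn$-edge, so the truth of $\hat\psi$ at $v_0$ depends only on the thread starting at $v_0$. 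Consequently $\hat\psi$ is satisfied by some execution graph iff it is satisfied by a single-thread execution graph, and such graphs correspond exactly to the visibly pushdown traces that are the models of $\psi$, with matching calls and returns realised by the nesting relation $\curvearrowright$. Thus $\psi$ is satisfiable iff $\hat\psi$ is graph-satisfiable, yielding \EXPTIME-hardness of graph satisfiability and, via \cref{thm:satequivalence}, of DPN satisfiability.

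The routine part is the upper bound, where the complexity is read off directly from the automata sizes. I expect the main obstacle to be the faithfulness of the embedding used in the lower bound: one must verify position by position that the call/return navigation of $\mathit{VP}$-$\mu$-$\mathit{TL}$ coincides with the behaviour of the $\globalsucc$-, $\globalpred$-, $\abstractsucc$- and $\caller$-successors on the associated single-thread execution graph---in particular that the target of each matching return is exactly the node reached via $\curvearrowright$---and that admitting further spawned threads cannot create spurious models, which is precisely what the observation that single-thread modalities never leave their thread guarantees. A secondary point is to align the trace model of $\mathit{VP}$-$\mu$-$\mathit{TL}$ (finite versus infinite traces) with the finite and infinite threads permitted by execution graphs.
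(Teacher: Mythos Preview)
Your proposal is correct and follows essentially the same approach as the paper: both invoke \cref{thm:satequivalence} to reduce to graph satisfiability, obtain the upper bound by testing $\mathcal{L}(\mathcal{A}_\varphi)$ for emptiness via \cref{cor:formulatoautomaton} and \cref{prop:treeemptiness}, and obtain the lower bound by the straightforward embedding of $\mathit{VP}$-$\mu$-$\mathit{TL}$ into \logicname. Your justification of the lower bound---that single-thread modalities never cross $\spawn$-edges, so a satisfying execution graph yields a satisfying nested word---is exactly the extraction argument the paper sketches in one line.
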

\begin{proof}
	Since the two problems are equivalent by \cref{thm:satequivalence}, we need to only give an upper and lower bound for the graph satisfiability problem.
	
	For the upper bound, we can construct an automaton for the formula using \cref{cor:formulatoautomaton} and test it for emptiness using \cref{prop:treeemptiness} in time exponential in $|\varphi|$ for an answer to the graph satisfiability problem.
	
	The lower bound follows by a reduction from the $\mathit{VP}$-$\mu$-$\mathit{TL}$ satisfiability problem which was shown to be $\EXPTIME$-hard in \cite{Bozzelli2007}.
	The reduction is straightforward since $\mathit{VP}$-$\mu$-$\mathit{TL}$ is a sublogic of \logicname\ and we can easily extract a nested word satisfying a formula interpreted in $\mathit{VP}$-$\mu$-$\mathit{TL}$ from the execution graph satisfying the same formula interpreted in \logicname.
	\hfill\qed
\end{proof}
	\section{Conclusion}\label{sec:conclusion}

We introduced a novel specification logic called \logicname\ for reasoning about the call-return and thread creation behaviour of dynamic pushdown networks.
We showed that a variety of interesting properties regarding the behaviour of multithreaded software is expressible in \logicname.
Further, the model checking and satisfiability problems were investigated.
The complexity of these problems is not higher than that of the corresponding problems for related logics for pushdown systems despite a more powerful logic and system model.
For future work, it would be interesting to consider more powerful variants of DPNs that allow communication and synchronization of different threads via locking or messages.
	\bibliographystyle{splncs04}
	\bibliography{sections/conclusion/citations}
	\clearpage
\appendix

\section{Fixpoint Theory}\label{app:fixpoints}

In some of the proofs in this paper, we need results from fixpoint theory.
We provide the necessary definitions and results in this section.
A \textit{partial order} is a pair $(L, \sqsubseteq)$ such that $\sqsubseteq$ is a reflexive, transitive and antisymmetric binary relation on $L$.
For $X \subseteq L$ and $x \in L$, we call $x$ a \textit{lower bound} on $X$ iff $x \sqsubseteq x'$ for all $x' \in X$.
Similarly, $x$ is called an \textit{upper bound} on $X$ iff $x' \sqsubseteq x$ for all $x' \in X$.
A lower bound $x$ of $X$ is called the \textit{greatest lower bound} of $X$, denoted $x = \bigsqcap X$, iff $x' \sqsubseteq x$ for all lower bounds $x'$ of $X$.
Analogously, an upper bound $x$ of $X$ is called the \textit{least upper bound} of $X$, denoted $x = \bigsqcup X$, iff $x \sqsubseteq x'$ for all upper bounds $x'$ of $X$.
A partial order $(L,\sqsubseteq)$ is called a \textit{complete lattice} iff the least upper bound $\bigsqcup X$ exists for every set $X \subseteq L$.
For a function $f \colon L \to L'$ on partial orders $(L,\sqsubseteq)$ and $(L',\sqsubseteq')$, we call $f$ \textit{monotone} iff $x \sqsubseteq x'$ implies $f(x) \sqsubseteq' f(x')$ for all $x,x' \in L$.
For $(L,\sqsubseteq)=(L',\sqsubseteq')$, a \textit{fixpoint} of $f$ is an element $x \in L$ with $f(x) = x$.
We call a fixpoint $x$ of $f$ the \textit{least fixpoint} of $f$, denoted $\mu f$, iff $x \sqsubseteq x'$ for all fixpoints $x'$ of $f$.
Analogously, a fixpoint $x$ is called the \textit{greatest fixpoint} of $f$, denoted $\nu f$, iff $x' \sqsubseteq x$ for all fixpoints $x'$ of $f$.
We use the classical Knaster-Tarski fixpoint theorem:
\begin{proposition}[\cite{Tarski1955}]\label{prop:tarski}
	Let $(L,\sqsubseteq)$ be a complete lattice and $f \colon L \to L$ be a monotone function.
	Then $f$ has a least fixpoint that is characterised by $\mu f = \bigsqcap \{ x \in L \mid f(x) \sqsubseteq x \}$.
\end{proposition}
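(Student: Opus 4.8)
The plan is to exhibit the candidate $a := \bigsqcap P$, where $P := \{x \in L \mid f(x) \sqsubseteq x\}$ is the set of prefixpoints of $f$, and then to verify that $a$ is both a fixpoint of $f$ and the least one. Before anything else, I would address a small gap between the statement and the definition of complete lattice given above: that definition only guarantees the existence of least upper bounds $\bigsqcup X$, yet the claimed characterisation involves the greatest lower bound $\bigsqcap P$. So I would first observe that every complete lattice also admits all greatest lower bounds, via the standard construction $\bigsqcap X = \bigsqcup \{ y \in L \mid y \text{ is a lower bound of } X \}$: the right-hand side is itself a lower bound of $X$ (each $x \in X$ is an upper bound of the set of lower bounds, hence dominates its least upper bound), and it dominates every lower bound of $X$ by definition of $\bigsqcup$. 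In particular $a = \bigsqcap P$ is well defined.

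Next I would prove $f(a) \sqsubseteq a$. For each $x \in P$ we have $a \sqsubseteq x$ since $a$ is a lower bound of $P$, so monotonicity gives $f(a) \sqsubseteq f(x)$, and $f(x) \sqsubseteq x$ by membership in $P$; hence $f(a) \sqsubseteq x$ for every $x \in P$. Thus $f(a)$ is a lower bound of $P$, and since $a$ is the \emph{greatest} such lower bound we conclude $f(a) \sqsubseteq a$. For the reverse inequality I would apply $f$ to $f(a) \sqsubseteq a$: monotonicity yields $f(f(a)) \sqsubseteq f(a)$, which says exactly that $f(a) \in P$, and therefore $a = \bigsqcap P \sqsubseteq f(a)$. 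Combining the two inequalities gives $f(a) = a$, so $a$ is a fixpoint.

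Finally, to see that $a$ is the \emph{least} fixpoint, I would note that any fixpoint $x$ satisfies $f(x) = x \sqsubseteq x$, hence $x \in P$, so $a = \bigsqcap P \sqsubseteq x$. This shows $a \sqsubseteq x$ for every fixpoint $x$, i.e.\ $a = \mu f$, and together with the fixpoint property it establishes the displayed characterisation $\mu f = \bigsqcap P$. The only genuinely delicate point is the preliminary reconciliation of greatest lower bounds with the definition of complete lattice as stated; once that is in place, the argument is a short interplay between monotonicity and the two defining properties of the infimum (being a lower bound, and being the greatest one). I would present that preliminary observation explicitly rather than silently invoke $\bigsqcap$, since the given definition only posits $\bigsqcup$.
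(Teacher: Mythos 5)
Your proof is correct: the paper states this proposition as a known result cited from Tarski and gives no proof of its own, and your argument is precisely the canonical Knaster--Tarski proof ($f(a) \sqsubseteq a$ because $f(a)$ is a lower bound of the prefixpoint set $P$, then $a \sqsubseteq f(a)$ because monotonicity makes $f(a)$ itself a member of $P$, and leastness because every fixpoint lies in $P$). Your preliminary observation that a complete lattice as defined in the paper's appendix (closure under $\bigsqcup$ only) automatically admits all greatest lower bounds via $\bigsqcap X = \bigsqcup \{ y \in L \mid y \text{ is a lower bound of } X \}$ is also correct and is genuinely needed here to make $\bigsqcap P$ well defined under the stated definition.
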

Additionally, we need a lemma about the relationship of least fixpoints in different partial orders.
This lemma is a variant of a similar transfer lemma found e.g. in \cite{Apt1986}.
\begin{lemma}\label{lemma:fixpointtranslation}
	Let $(L,\sqsubseteq)$ and $(L',\sqsubseteq')$ be partial orders with functions $f \colon L\to L$, $f' \colon L'\to L'$ and $\mu f$ be the least fixpoint of $f$.
	Let further $h \colon L\to L'$ be a bijective, monotone function with $h\circ f = f'\circ h$.
	Then $\mu f' = h(\mu f)$.
\end{lemma}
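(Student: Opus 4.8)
The plan is to show directly that $h(\mu f)$ satisfies the two defining properties of the least fixpoint of $f'$: that it is a fixpoint of $f'$ and that it lies below every other fixpoint of $f'$ with respect to $\sqsubseteq'$. This simultaneously establishes that $f'$ possesses a least fixpoint and that this fixpoint coincides with $h(\mu f)$, so no prior existence assumption about $\mu f'$ is needed.

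First I would check that $h(\mu f)$ is a fixpoint of $f'$. Since $\mu f$ is a fixpoint of $f$ we have $f(\mu f) = \mu f$, and the commutation hypothesis $h \circ f = f' \circ h$ then gives $f'(h(\mu f)) = h(f(\mu f)) = h(\mu f)$. Next I would establish minimality. Let $y'$ be an arbitrary fixpoint of $f'$. Because $h$ is bijective, $x := h^{-1}(y')$ is well defined with $h(x) = y'$, and applying the commutation hypothesis once more yields $h(f(x)) = f'(h(x)) = f'(y') = y' = h(x)$; injectivity of $h$ then forces $f(x) = x$, so $x$ is a fixpoint of $f$. Since $\mu f$ is the \emph{least} fixpoint of $f$, we get $\mu f \sqsubseteq x$, and monotonicity of $h$ transports this to $h(\mu f) \sqsubseteq' h(x) = y'$. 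As $y'$ was arbitrary, $h(\mu f)$ lies below every fixpoint of $f'$, whence $\mu f' = h(\mu f)$.

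There is no real obstacle here; the argument is short and the only point demanding care is the bookkeeping of which hypothesis on $h$ is used where. Surjectivity guarantees that every fixpoint of $f'$ has a preimage under $h$, injectivity is what converts the pulled-back equation into a genuine fixpoint equation for $f$, and monotonicity of $h$ is invoked only in the forward direction to carry the inequality $\mu f \sqsubseteq x$ over to $\sqsubseteq'$. In particular, I would note that monotonicity of $h^{-1}$ is never needed, which is worth recording since the lemma as stated assumes only that $h$ itself is monotone.
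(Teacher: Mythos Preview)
Your proof is correct and follows essentially the same route as the paper's: both show $h(\mu f)$ is a fixpoint via the commutation relation, then pull back an arbitrary fixpoint $y'$ of $f'$ along $h^{-1}$, verify it is a fixpoint of $f$, and push the inequality $\mu f \sqsubseteq h^{-1}(y')$ forward using monotonicity of $h$. Your presentation is slightly cleaner in that you explicitly separate the roles of surjectivity and injectivity of $h$, whereas the paper computes $f(h^{-1}(y)) = h^{-1}(y)$ in one chain; the added remark that monotonicity of $h^{-1}$ is never used is a worthwhile observation not made in the paper.
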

\begin{proof}
	We first show that $h(\mu f)$ is a fixpoint of $f'$.
	Since $\mu f$ is a fixpoint of $f$, we have $f'(h(\mu f))=f'\circ h(\mu f)=h\circ f(\mu f)=h(f(\mu f))=h(\mu f)$, i.e. $h(\mu f)$ is a fixpoint of $f'$.
	
	It remains to show that $h(\mu f)$ is the \emph{least} fixpoint of $f'$. Therefore, let $y$ be an arbitrary fixpoint of $f'$. We show that $h(\mu f)\sqsubseteq y$. 
	Since $y$ is a fixpoint of $f'$, we have $f(h^{-1}(y))=h^{-1}\circ h \circ f \circ h^{-1}(y)=h^{-1}\circ f' \circ h \circ h^{-1}(y)=h^{-1}(f'(y))=h^{-1}(y)$, i.e. $h^{-1}(y)$ is a fixpoint of $f$. Since $\mu f$ is the least fixpoint of $f$, we have $\mu f\sqsubseteq h^{-1}(y)$. Since $h$ is monotone, we infer that $h(\mu f)\sqsubseteq' h(h^{-1}(y))=y$.
	Since $y$ was an arbitrary fixpoint of $f'$, the fixpoint $h(\mu f)$ must be the least fixpoint of $f'$.
	\hfill \qed
\end{proof}
In this paper, we consider complete lattices of the form $(2^A,\subseteq)$ for a set $A$.
In these lattices, greatest lower and least upper bounds are given by intersections and unions over sets, respectively.

\section{Proofs from \cref{sec:logic}}\label{app:logicproofs}

\begin{lemma}\label{lem:monotone}
	Let $S = (G,\mathcal{V},X,\varphi)$ where $G=(V, l, (\rightarrow^d)_{d\in\moves}, \curvearrowright)$ is an execution graph, $\mathcal{V}$ is a fixpoint variable assignment, $X$ is a fixpoint variable and $\varphi$ is a well-formed \logicname\ formula in positive normal form.
	Then, the function $\alpha_S: 2^V \to 2^V$ with $\alpha_S(M) = \llbracket \varphi \rrbracket^G_{\mathcal{V}[X \mapsto M]}$ is monotone.
\end{lemma}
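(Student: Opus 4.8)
The plan is to prove, by structural induction on $\psi$, the slightly strengthened statement that for every subformula $\psi$ of $\varphi$, every fixpoint variable $Z$ and every fixpoint variable assignment $\mathcal{V}$, the map $M \mapsto \llbracket \psi \rrbracket^G_{\mathcal{V}[Z \mapsto M]}$ is monotone on the complete lattice $(2^V, \subseteq)$; the lemma is then the special case $\psi = \varphi$ and $Z = X$. Strengthening the claim to range over all subformulae, variables and valuations is what makes the induction self-supporting: when we reach a nested fixpoint binding a variable $Y$, we will need monotonicity of its body both in $Y$ (to guarantee the inner fixpoint exists) and in $Z$ (to derive the conclusion).

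The literal and Boolean cases are routine, and they are exactly where positive normal form does the work. For $\psi = \ap$ or $\psi = \lnot\ap$ the denotation does not depend on $M$, so the map is constant; crucially, because $\varphi$ is in positive normal form, negation occurs only in front of atomic propositions, so $Z$ never appears beneath a negation and no antitone case ever arises. For $\psi = Z$ the map is the identity, and for $\psi = Y \neq Z$ it is constant. For $\psi = \psi_1 \lor \psi_2$ and $\psi = \psi_1 \land \psi_2$ the map is the pointwise union, respectively intersection, of two monotone maps (by the induction hypothesis) and hence monotone. For the next operators $\bigcirc^f \psi'$ and their duals $\bigcirc^{\overline{f}}\psi'$, the denotation is obtained from $\llbracket\psi'\rrbracket$ by taking the preimage under the partial successor function $\Succ^G_f$ (resp. by $\{x \mid \Succ^G_f(x) \text{ undefined or } \Succ^G_f(x)\in\llbracket\psi'\rrbracket\}$ for the dual), an operation monotone in its set argument; composing this with the map $M \mapsto \llbracket\psi'\rrbracket^G_{\mathcal{V}[Z\mapsto M]}$, monotone by the induction hypothesis, yields monotonicity.

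The main obstacle is the nested fixpoint case $\psi = \mu Y.\psi'$ (and dually $\psi = \nu Y.\psi'$), which I will handle via the standard fact that the least-fixpoint operator is itself monotone in its defining function. By well-formedness every variable is bound at most once, so $Y \neq Z$ and the two updates in $\mathcal{V}[Z\mapsto M][Y\mapsto N]$ do not interfere. Write $g_M(N) := \llbracket\psi'\rrbracket^G_{\mathcal{V}[Z\mapsto M][Y\mapsto N]}$. By the induction hypothesis applied with the variable $Y$, each $g_M$ is monotone in $N$, so by \cref{prop:tarski} its least fixpoint $\mu g_M = \bigcap\{N \mid g_M(N) \subseteq N\}$ exists and equals $\llbracket \mu Y.\psi' \rrbracket^G_{\mathcal{V}[Z\mapsto M]}$. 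By the induction hypothesis applied with the variable $Z$, for $M \subseteq M'$ and every fixed $N$ we have $g_M(N) \subseteq g_{M'}(N)$, i.e. $g_M \leq g_{M'}$ pointwise.

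It then remains to carry out the domination step: from $g_M(\mu g_{M'}) \subseteq g_{M'}(\mu g_{M'}) = \mu g_{M'}$ we see that $\mu g_{M'}$ is a pre-fixpoint of $g_M$, so the characterisation in \cref{prop:tarski} gives $\mu g_M \subseteq \mu g_{M'}$, which is precisely the required monotonicity. The case $\psi = \nu Y.\psi'$ is symmetric, using the dual characterisation $\nu g_M = \bigcup\{N \mid N \subseteq g_M(N)\}$ of the greatest fixpoint and the inequality $g_{M'}(\nu g_M) \supseteq g_M(\nu g_M) = \nu g_M$. Specialising the strengthened claim to $\psi = \varphi$, $Z = X$ and the given assignment $\mathcal{V}$ then establishes that $\alpha_S$ is monotone.
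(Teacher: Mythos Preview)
Your proof is correct and follows the same structural-induction route as the paper. Two minor remarks: your appeal to well-formedness to exclude $Y = Z$ in the fixpoint case is not quite justified, since your strengthened claim ranges over \emph{arbitrary} variables $Z$; the omitted case $Z = Y$ is however trivial (the map is constant because $Y$ is re-bound), and the paper handles it explicitly. Also, the paper's fixpoint step compares the sets of pre-fixpoints directly (if $g_M \leq g_{M'}$ pointwise then $\{N \mid g_{M'}(N)\subseteq N\} \subseteq \{N \mid g_M(N)\subseteq N\}$, hence the intersections are ordered), which avoids invoking the fixpoint equation $g_{M'}(\mu g_{M'}) = \mu g_{M'}$ and thus does not need the extra induction hypothesis in the inner variable~$Y$; your domination argument is equally valid but slightly less economical.
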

\begin{proof}
	The proof is by induction on the structure of $\varphi$.
	\begin{itemize}
		\item \underline{$\varphi = \ap$ for $\ap \in \AP$:} Since $\alpha_S$ is constant in this case, we have $\alpha_S(M) = \alpha_S(M')$ for $M \subseteq M'$.
		\item \underline{$\varphi = \lnot \ap$ for $\ap \in \AP$:} Analogous to the previous case.
		\item \underline{$\varphi = Y$ for a fixpoint variable $Y$:} For $Y \neq X$, $\alpha_S$ is constant in this case, and we have $\alpha_S(M) = \alpha_S(M')$ for $M \subseteq M'$.
		For $Y = X$, we have $\alpha_S(M) = M \subseteq M' = \alpha_S(M')$ for $M \subseteq M'$.
		\item \underline{$\varphi = \varphi_1 \lor \varphi_2$:} Let $S_1 = (G,\mathcal{V},X,\varphi_1)$ and $S_2 = (G,\mathcal{V},X,\varphi_2)$.
		We have $\alpha_S(M) = \alpha_{S_1}(M) \cup \alpha_{S_2}(M) \subseteq \alpha_{S_1}(M') \cup \alpha_{S_2}(M') = \alpha_S(M')$ for $M \subseteq M'$ by the induction hypothesis.
		\item \underline{$\varphi = \varphi_1 \land \varphi_2$:} Analogous to the previous case.
		\item \underline{$\varphi = \bigcirc^f\varphi_1$ for $f \in \{g,\uparrow,a,-,p,c\}$:} Let $S_1 = (G,\mathcal{V},X,\varphi_1)$.
		We have 
		\begin{align*}
			\alpha_S(M) & = \{x \in V \mid \Succ_f^G(x) \textit{ is defined and } \Succ_f^G(x) \in \alpha_{S_1}(M) \}\\
			& \subseteq \{x \in V \mid \Succ_f^G(x) \textit{ is defined and } \Succ_f^G(x) \in \alpha_{S_1}(M') \} \\
			& = \alpha_S(M')
		\end{align*}
		for $M \subseteq M'$ by the induction hypothesis.
		\item \underline{$\varphi = \bigcirc^{\bar{f}}\varphi_1$ for $f \in \{g,\uparrow,a,-,p,c\}$:} Analogous to the previous case.
		\item \underline{$\varphi = \mu Y.\varphi_1$:} 
		For $Y = X$, $\alpha_S$ is constant and we have $\alpha_S(M) = \alpha_S(M')$ for $M \subseteq M'$.
		For $Y \neq X$, let $S_1^{M''} = (G,\mathcal{V}[Y \mapsto M''],X,\varphi_1)$.
		We have 
		\begin{align*}
			\alpha_S(M) & = \bigcap \{ M'' \subseteq V \mid \llbracket \varphi_1 \rrbracket_{\mathcal{V}[X \mapsto M][Y \mapsto M'']}^G \subseteq M'' \}\\
			& = \bigcap \{ M'' \subseteq V \mid \alpha_{S_1^{M''}}(M) \subseteq M'' \}\\
			& \overset{(*)}{\subseteq} \bigcap \{ M'' \subseteq V \mid \alpha_{S_1^{M''}}(M') \subseteq M'' \}\\
			& = \bigcap \{ M'' \subseteq V \mid \llbracket \varphi_1 \rrbracket_{\mathcal{V}[X \mapsto M'][Y \mapsto M'']}^G \subseteq M'' \}\\
			& = \alpha_S(M')	
		\end{align*}
		for $M \subseteq M'$.
		In step $(*)$, the induction hypothesis implies $\alpha_{S_1^{M''}}(M) \subseteq \alpha_{S_1^{M''}}(M')$ for all $M'' \subseteq V$, which then means that $\{ M'' \subseteq V \mid \alpha_{S_1^{M''}}(M) \subseteq M'' \} \supseteq \{ M'' \subseteq V \mid \alpha_{S_1^{M''}}(M') \subseteq M'' \}$ which in turn implies the inclusion $(*)$.
		\item \underline{$\varphi = \nu Y.\varphi_1$:} Analogous to the previous case.
		\hfill \qed
	\end{itemize}
\end{proof}

Using \cref{prop:tarski}, a corollary from this lemma is:

\begin{corollary}\label{cor:fixpoint-characterization}
	Let $S = (G,\mathcal{V},X,\varphi)$ where $G=(V, l, (\rightarrow^d)_{d\in\moves},\curvearrowright)$ is an execution graph, $\mathcal{V}$ is a fixpoint variable assignment, $X$ is a fixpoint variable and $\varphi$ is a well-formed \logicname\ formula in positive normal form.
	Then, $\llbracket \mu X. \varphi \rrbracket_\mathcal{V}^G$ is the least fixpoint of $\alpha_S$.
\end{corollary}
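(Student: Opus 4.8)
The plan is to recognise this corollary as an essentially immediate application of the Knaster--Tarski fixpoint theorem (\cref{prop:tarski}) to the function $\alpha_S$, whose monotonicity has just been established in \cref{lem:monotone}. The only real content is to align the explicit characterisation provided by Knaster--Tarski with the definitional unfolding of $\llbracket \mu X. \varphi \rrbracket_\mathcal{V}^G$.

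First I would observe that $(2^V,\subseteq)$ is a complete lattice: for any family of subsets of $V$, the least upper bound is their union and the greatest lower bound their intersection, so in particular $\bigsqcup$ exists for every subset of $2^V$. This is exactly the kind of lattice discussed at the end of \cref{app:fixpoints}, and it supplies the setting required by \cref{prop:tarski}.

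Next, since \cref{lem:monotone} shows that $\alpha_S \colon 2^V \to 2^V$ is monotone, I would invoke \cref{prop:tarski} directly. It yields that $\alpha_S$ has a least fixpoint $\mu \alpha_S$ characterised by $\mu \alpha_S = \bigsqcap \{ M \in 2^V \mid \alpha_S(M) \subseteq M \} = \bigcap \{ M \subseteq V \mid \alpha_S(M) \subseteq M \}$, using that the greatest lower bound in this lattice is intersection.

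Finally I would close the argument by unfolding the two relevant definitions: by the definition of $\alpha_S$ we have $\alpha_S(M) = \llbracket \varphi \rrbracket^G_{\mathcal{V}[X \mapsto M]}$, so the defining set above coincides with $\{ M \subseteq V \mid \llbracket \varphi \rrbracket^G_{\mathcal{V}[X \mapsto M]} \subseteq M \}$, whose intersection is by definition precisely $\llbracket \mu X. \varphi \rrbracket_\mathcal{V}^G$. Hence $\llbracket \mu X. \varphi \rrbracket_\mathcal{V}^G = \mu \alpha_S$ is the least fixpoint of $\alpha_S$, as claimed. I do not expect any substantial obstacle here: the genuine work — verifying monotonicity by structural induction, where the fixpoint subcases require care with nested valuations — was already carried out in \cref{lem:monotone}, and what remains is a purely bookkeeping alignment of Knaster--Tarski's formula with the semantic clause for $\mu$.
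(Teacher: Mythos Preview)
Your proposal is correct and matches the paper's approach exactly: the paper derives this corollary directly from \cref{prop:tarski} applied to the monotone function supplied by \cref{lem:monotone}, and you have simply spelled out the alignment between the Knaster--Tarski characterisation and the semantic clause for $\mu$.
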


\section{Properties of Successor Functions}\label{app:successorproperties}
We establish some properties of the successor functions defined on execution graphs that are used in some of the proofs in this paper.

\begin{lemma}\label{lemma:successorcharacterization}
	Let $G=(V, l, (\rightarrow^d)_{d\in\moves},\curvearrowright)$ be an execution graph.
	\begin{enumerate}
		\item[(i)] For all $y\in V$, there is $z\in V$ with $z\curvearrowright y$ iff $y$ has a $\ret$-predecessor $x$. In this case we have $z=\Succ^G_\caller(x)$.
		\item[(ii)] For all $x,y\in V$ with $y=\Succ^G_\abstractsucc(x)$, the caller of $x$ is defined iff the caller of $y$ is defined and in this case $\Succ^G_\caller(x)=\Succ^G_\caller(y)$.
		\item[(iii)] For all $x,y\in V$ with $x\rightarrow^\intern y$, $x\rightarrow^\call y$ or $x\curvearrowright y$, the parent of $x$ is defined iff the parent of $y$ is defined and in this case $\Succ^G_\parent(x)=\Succ^G_\parent(y)$.
	\end{enumerate}
\end{lemma}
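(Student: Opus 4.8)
The plan is to base everything on a \emph{balance} (stack-level) function: for a node $v$ lying on the $\moves\setminus\{\spawn\}$-path issuing from the entry node of its thread (i.e.\ from $v_0$ or from a node with a $\spawn$-predecessor), let $\beta(v)$ be the number of $\call$-moves minus the number of $\ret$-moves along that path. Two structural facts drive the argument. First, by execution graph condition 3 every node has at most one $\moves\setminus\{\spawn\}$-successor, so the forward $\moves\setminus\{\spawn\}$-path from any node is a single deterministic line; dually, condition 1 gives a unique predecessor, from which I would deduce that each node has at most one abstract predecessor. Second, an abstract step preserves balance: if $x\rightarrow^\intern y$ this is immediate, and if $x\curvearrowright y$ it holds because by definition $\curvearrowright$ witnesses a balanced $\moves\setminus\{\spawn\}$-path from $x$ to $y$. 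Consequently, following $\Succ^G_\abstractsucc$ keeps $\beta$ constant, and the caller of $x$ is exactly the source of the most recent call still unmatched at $x$.

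For (i) I would argue both directions via minimality of the witnessing balanced path. If $z\curvearrowright y$, then $z$ has a $\call$-successor, so the (unique) witnessing path begins with a $\call$, raising $\beta$ to $1$; by minimality $\beta$ stays strictly positive until the final node, where it must return to $0$, so the last edge is a $\ret$ and $y$ has a $\ret$-predecessor. Conversely, if $x\rightarrow^\ret y$, then condition 4 forces $\beta(x)\geq 1$, so a most-recent unmatched call exists and $z:=\Succ^G_\caller(x)$ is defined. Writing $z\rightarrow^\call z'$ for that call, the concatenation $z\rightarrow^\call z'\rightarrow\cdots\rightarrow x\rightarrow^\ret y$ (with the middle part the abstract-successor line from $z'$ to $x$) is a balanced $\moves\setminus\{\spawn\}$-path along which $\beta$ stays $\geq 1$ strictly before $y$; since the forward $\moves\setminus\{\spawn\}$-path is unique, $y$ is the first balanced endpoint, hence $z\curvearrowright y$ with $z=\Succ^G_\caller(x)$.

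For (ii) I would use that $y$ has a unique abstract predecessor, namely $x$. In the case $x\rightarrow^\intern y$ this holds because $y$'s unique predecessor is internal (so $y$ has no $\ret$-predecessor and, by (i), no nesting predecessor); in the case $x\curvearrowright y$ it holds because by (i) the nesting predecessor of $y$ is determined by $\Succ^G_\caller$ and is therefore unique, while $y$ then has no internal predecessor. Given this, any abstract-successor line from a call target $z'$ to $x$ extends by one step to $y$, and conversely any such line reaching $y$ passes through its unique abstract predecessor $x$; hence the two callers are simultaneously defined and equal. For (iii), each of the three edges keeps $x$ and $y$ inside one thread: internal and call edges directly, and $x\curvearrowright y$ by the balanced path witnessing it. Since every node has a unique predecessor, each thread has a unique entry node, and $\Succ^G_\parent$ depends only on this entry node (it is the $\spawn$-predecessor of the entry, if any); as $x$ and $y$ share a thread, their parents are simultaneously defined and equal.

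The main obstacle is part (i), and specifically its converse: reconciling the two different descriptions of the enclosing call — the abstract-path definition of $\Succ^G_\caller$ versus the minimal-balanced-path definition of $\curvearrowright$ — and verifying that the concatenated path is genuinely the minimal witness. The balance bookkeeping together with determinism of forward $\moves\setminus\{\spawn\}$-paths is what makes this precise, and once (i) is in place the uniqueness-of-abstract-predecessor consequences make (ii) and (iii) comparatively routine.
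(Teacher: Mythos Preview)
Your overall plan is sound and close in spirit to the paper's: both arguments hinge on a stack-level/balance count and on the minimality clause in the definition of $\curvearrowright$. Parts~(ii) and~(iii) are fine; your use of uniqueness of the abstract predecessor for~(ii) is in fact more explicit than the paper, which just appeals to ``abstract successors are uniquely determined''. One small correction: uniqueness of the abstract predecessor does not follow from condition~1 alone, since $\curvearrowright$ is not among the $\rightarrow^d$; you need the forward direction of~(i) (a nesting predecessor forces a $\ret$-predecessor) together with condition~1 to exclude a simultaneous $\intern$-predecessor. You do use this correctly later, so this is only a matter of where the fact is justified.

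The genuine gap is in the converse of~(i). From $\beta(x)\geq 1$ you conclude that $\Succ^G_\caller(x)$ is defined, but $\Succ^G_\caller$ is defined via an \emph{abstract-successor} path from some call-successor $z'$ to $x$, whereas $\beta$ is computed along the raw $\moves\setminus\{\spawn\}$-path. The inference ``there is an unmatched call on the raw path, hence there is an abstract-successor path from its target to $x$'' is precisely the bridge you yourself flag as the main obstacle, and it is not supplied by ``balance bookkeeping together with determinism'' alone: determinism gives a unique raw path, but not its decomposition into abstract steps. The paper resolves this in the opposite order: it first locates $z$ directly on the raw path by the balance argument (so $z\curvearrowright y$ by minimality), and only then proves $z=\Succ^G_\caller(x)$ by an induction on the number of calls between $z'$ and $x$, iteratively replacing innermost matched call/return segments by nesting edges to obtain an abstract-successor path. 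Your reversed order can be made to work, but it needs the same inductive content repackaged as ``$\beta(x)\geq 1$ implies the caller is defined'' (for instance by strong induction on the distance from the thread entry, using the already-established instances of~(i) at earlier $\ret$-nodes to supply their nesting predecessors). Without that induction, the step ``so $z:=\Succ^G_\caller(x)$ is defined'' is an assertion, not an argument. Once it is in place, your unfolding of the abstract path and the check that $\beta$ stays $\geq 1$ strictly before $y$ correctly verify minimality and yield $z\curvearrowright y$.
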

\begin{proof}
	\begin{enumerate}
		\item[(i)]
		Let $y\in V$ be a node.
		
		For the first direction, assume that there is a node $z\in V$ with $z\curvearrowright y$.
		Then there is a path from $z$ to $y$ following only $\moves\setminus\{\spawn\}$-successors such that the number $n$ of $\call$-moves on that path is equal to the number of $\ret$-moves on that path.
		Since $y\neq z$ and $z$ has a $\call$-successor, we have $n>0$. Since $y$ is defined as the node such that this path has minimal length, the predecessor of $y$ must be a $\ret$-predecessor.
		
		For the other direction, assume that $y$ has a $\ret$-predecessor $x$.
		Since $(v_0,y)\in(\bigcup\limits_{d\in\moves}\rightarrow^d)^*$, there is a node $u\in V$ that is either $v_0$ or has a $\spawn$-predecessor such that there is a path $\pi$ from $u$ to $y$ following only $\moves\setminus\{\spawn\}$-successors.
		Since $y$ is a $\ret$-successor and the number of $\call$-moves on $\pi$ has to be greater or equal to the number of $\ret$-moves on $\pi$, we can consider the last node $z$ on $\pi$ with a $\call$-successor $z'$ such that the number of $\call$-moves on the path $\pi$ between $z$ and $y$ is equal to the number of $\ret$-moves on $\pi$ between $z$ and $y$.
		Then we have $z\curvearrowright y$.
		
		It remains to show that $z=\Succ^G_\caller(x)$.
		Let $n$ be the number of $\call$-transitions on the path $\pi$ from the $\call$-successor $z'$ of $z$ to $x$ following only $\moves\setminus\{\spawn\}$-transitions. 
		Since $z\curvearrowright y$, $z'$ is the $\call$-successor of $z$ and $y$ is the $\ret$-successor of $x$, this is also the number of $\ret$-transitions on $\pi$.
        We now show by induction on $n$ that we can transform the path $\pi$ to a path $\pi'$ from $z'$ to $x$ following abstract successors. 
        Since $z\rightarrow^\call z'$, this implies in particular that $z=\Succ^G_\caller(x)$.
        
        If $n=0$, the path $\pi$ follows only $\intern$-successors, i.e. it is also a path following abstract successors.
        
        If $n>0$, let $c$ and $c'$ be the first nodes on the path $\pi$ with $c\rightarrow^\call c'$.
        Since the number of $\call$-moves on $\pi$ is equal to the number of $\ret$-moves on $\pi$, there is a node $r\in V$ on the path $\pi$ between $c$ and $x$ with $c\curvearrowright r$.
        Thus, $\pi$ can be written as $\pi=\pi_1\pi_2\pi_3$ for paths $\pi_1$ from $z'$ to $c$, $\pi_2$ from $c$ to $r$ and $\pi_3$ from $r$ to $x$ following $\moves\setminus\{\spawn\}$-transitions. 
		By construction, there are no $\call$-moves on $\pi_1$.
		Moreover, there are no $\ret$-moves on $\pi_1$ since this would imply $z\curvearrowright v$ for a node $v\neq y$.	
		Thus, $\pi_1$ is also a path following abstract successors.
		Moreover, the number $m$ of $\call$-moves on $\pi_3$ is also equal to the number of $\ret$-moves on $\pi_3$, since the same holds true for $\pi$, $\pi_1$ and $\pi_2$.
		Since we clearly have $m<n$, we can transform $\pi_3$ to a path $\pi_a$ from $r$ to $x$ following abstract successors by the induction hypothesis. 
	    Since $r$ is the abstract successor of $c$, the concatenation of the paths $\pi_1$ and $\pi_a$ is thus a path from $z'$ to $x$ following abstract successors. 
		
		\item[(ii)]
		$\Succ^G_\caller(x)$ is defined iff there is a node $z\in V$ with a $\call$-successor $z'$ such that there is a path from $z'$ to $x$ following abstract successors.
		Since  $y=\Succ^G_\abstractsucc(x)$ and abstract successors are uniquely determined, we can demand equivalently that there is a node $z\in V$ with a $\call$-successor $z'$ such that there is a path from $z'$ to $y$ following abstract successors, i.e. $\Succ^G_\caller(y)$ is defined and in this case $\Succ^G_\caller(y)=z=\Succ^G_\caller(x)$.
		
		\item[(iii)]
		$\Succ^G_\parent(x)$ is defined iff there is a node $z\in V$ with a $\spawn$-successor $z'$ such that there is a path $\pi$ from $z'$ to $x$ following only $\moves\setminus\{\spawn\}$-transitions.
		If $x\curvearrowright y$, there is also a path $\pi'$ from $x$ to $y$ following only $\moves\setminus\{\spawn\}$-transitions. 
		This also holds trivially if $x\rightarrow^\intern y$ or $x\rightarrow^\call y$.
		Thus, since $\moves\setminus\{\spawn\}$-successors are uniquely determined, we can demand equivalently that there is a node $z\in V$ with a $\spawn$-successor $z'$ such that there is a path $\pi$ from $z'$ to $y$ following only $\moves\setminus\{\spawn\}$-transitions, i.e. $\Succ^G_\parent(y)$ is defined and in this case $\Succ^G_\parent(y)=z=\Succ^G_\parent(x)$.
		\hfill \qed
	\end{enumerate}
\end{proof}

\section{Proofs from \cref{sec:treesemantics}}\label{app:treesemanticsproofs}

\begin{proof}[\textbf{Proof of \cref{lemma:successorequivalence}}]
Let $x\in V$ be an arbitrary node.
Since $\delta_G(y)$ is defined for all nodes $y\in V$, we only have to show that $\Succ_f^G(x)$ is defined iff $\Succ_f^{\mathcal{T}}(\delta_G(x))$ is defined and that $\delta_G(\Succ_f^G(x))=\Succ_f^{\mathcal{T}}(\delta_G(x))$ holds in this case.
In order to improve readability, we also write $\Succ^G_f(x)=\Succ^G_f(y)$ for $f\in\{\globalsucc,\globalpred,\abstractsucc,\caller,\parent,\child\}$ and $y\in V$, if both $\Succ^G_f(x)$ and $\Succ^G_f(y)$ are undefined.
We show the claim for each successor type separately.
\begin{itemize}
	\item 
	$\underline{f=\abstractsucc:}$ 
	The claim is shown by a case distinction on $d(x)$.
	
	If $d(x)\in\{\intern,\spawn\}$, then $x$ has an $\intern$-successor $y$, i.e. $\Succ^G_\abstractsucc(x)=y$. 
	Moreover, we have $\delta_G(y)=\delta_G(x)\cdot 0$ and thus $\Succ_\abstractsucc^{\mathcal{T}}(\delta_G(x))=\delta_G(x)\cdot 0  =\delta_G(y)=\delta_G(\Succ^G_\abstractsucc(x))$.
	
	If $d(x)=\callret$, there is $y\in V$ with $x\curvearrowright y$, i.e. $\Succ^G_\abstractsucc(x)=y$.
	Moreover, we have $\delta_G(y)=\delta_G(x)\cdot 1$ and thus
	 $\Succ_\abstractsucc^{\mathcal{T}}(\delta_G(x))=\delta_G(x)\cdot 1=\delta_G(y)=\delta_G(\Succ^G_\abstractsucc(x))$.
	
	If $d(x)\in\{\call,\ret,\labelend\}$, there is no $y\in V$ with $x\curvearrowright y$ and $x$ has no $\intern$-successor, i.e. $\Succ^G_\abstractsucc(x)$ is undefined. Moreover, $\Succ_\abstractsucc^{\mathcal{T}}(\delta_G(x))$ is also undefined in this case.
	
	Thus, we have established $\delta_G\circ\Succ^G_\abstractsucc(x)=\Succ_\abstractsucc^{\mathcal{T}}\circ \delta_G(x)$ in each of the cases.
	
	\item
	$\underline{f=\caller:}$ 
	Since $G$ is an execution graph, we have $(v_0,x)\in (\bigcup\limits_{d\in\moves}\rightarrow^d)^*$. 
	Thus, let $\pi$ be a path from $v_0$ to $x$ following $\moves$-successors. 
	We show the claim by induction on the length $n$ of $\pi$.
	
	If $n=0$, we have $x=v_0$. In this case $x$ has no predecessor and $\Succ^G_\caller(x)$ is undefined.
	Moreover, we have $p(x)=\bot$ and thus $\Succ_\caller^{\mathcal{T}}(\delta_G(x))$ is also undefined.	
	
	If $n>0$, let $y$ be the predecessor node of $x$ in $\pi$.
	We show the claim by a case distinction based on what type of predecessor $y$ is.
	
	If $y\rightarrow^\call x$, we have $\Succ^G_\caller(x)=y$ and $\delta_G(x)=\delta_G(y)\cdot 0$, i.e. $\delta_G(y)$ is the parent node of $\delta_G(x)$.
	Moreover, we have $p(x)=\call$ and hence $\Succ_\caller^{\mathcal{T}}(\delta_G(x))=\delta_G(y)=\delta_G(\Succ^G_\caller(x))$.
	
	If $y\rightarrow^\intern x$, then $x=\Succ^G_\abstractsucc(y)$. 
	Thus, by \cref{lemma:successorcharacterization}(ii), $\Succ^G_\caller(x)=\Succ^G_\caller(y)$.
	Moreover, we have $\delta_G(x)=\delta_G(y)\cdot 0$, i.e. $\delta_G(y)$ is the parent node of $\delta_G(x)$ and we have $p(x)=\intern$, i.e. $\Succ_\caller^{\mathcal{T}}(\delta_G(x))=\Succ_\caller^{\mathcal{T}}(\delta_G(y))$. 
	By the induction hypothesis, we obtain $\delta_G(\Succ^G_\caller(x))=\delta_G(\Succ^G_\caller(y))=\Succ_\caller^{\mathcal{T}}(\delta_G(y))=\Succ_\caller^{\mathcal{T}}(\delta_G(x))$.
		
	If $y\rightarrow^\ret x$, then by \cref{lemma:successorcharacterization}(i), there is $z\in V$ with $z\curvearrowright x$. 
	We then have $x=\Succ^G_\abstractsucc(z)$ and $\delta_G(x)=\delta_G(z)\cdot 1$, i.e. $\delta_G(z)$ is the parent node of $\delta_G(x)$. 
	The claim follows as in the previous case.
		
	If $y\rightarrow^\spawn x$, then $\Succ^G_\caller(x)$ is undefined since $x$ has no $\call-$, $\intern$- or $\ret$-predecessor, i.e. there is no $y\in V$ such that $x=\Succ^G_\abstractsucc(y)$ or $x$ is a $\call$-successor of $y$.
	Moreover,
	we have $p(x)=\spawn$ and thus $\Succ_\caller^{\mathcal{T}}(\delta_G(x))$ is also undefined.	
	
	Thus, we have established $\delta_G\circ\Succ^G_\caller(x)=\Succ_\caller^{\mathcal{T}}\circ \delta_G(x)$ in each of the cases.
	
	\item 
	$\underline{f=\globalsucc:}$ 
	The claim is shown by a case distinction on $d(x)$.
	
	If $d(x)\in\{\intern,\call,\callret,\spawn\}$, then $x$ has an $\intern$- or $\call$-successor $y$, i.e. $\Succ^G_\globalsucc(x)=y$. 
	Moreover, we have $\delta_G(y)=\delta_G(x)\cdot 0$ and thus $\Succ_\globalsucc^{\mathcal{T}}(\delta_G(x))=\delta_G(x)\cdot 0=\delta_G(y)=\delta_G(\Succ^G_\globalsucc(x))$.
	
	If $d(x)=\ret$, then $x$ has a $\ret$-successor $y$, i.e. $\Succ^G_\globalsucc(x)=y$. 
	By \cref{lemma:successorcharacterization}(i), there is a node $z\in V$ with $z\curvearrowright y$ and $z=\Succ^G_\caller(x)$ and thus $y=\Succ^G_\abstractsucc(z)=\Succ^G_\abstractsucc(\Succ^G_\caller(x))$. 
	Since we have already shown the claim for the abstract successor and the caller, we conclude that $\Succ_\globalsucc^{\mathcal{T}}(\delta_G(x))=\Succ_\abstractsucc^{\mathcal{T}}(\Succ_\caller^{\mathcal{T}}(\delta_G(x)))=\Succ_\abstractsucc^{\mathcal{T}}(\delta_G(\Succ^G_\caller(x)))=\delta_G(\Succ^G_\abstractsucc(\Succ^G_\caller(x)))=\\
	\delta_G(y)=\delta_G(\Succ^G_\globalsucc(x))$.
	
	If $d(x)=\labelend$, then $x$ has no successor, i.e. $\Succ^G_\globalsucc(x)$ is undefined. 
	Moreover, \\$\Succ_\globalsucc^{\mathcal{T}}(\delta_G(x))$ is also undefined in this case.
	
	Thus, we have established $\delta_G\circ\Succ^G_\globalsucc(x)=\Succ_\globalsucc^{\mathcal{T}}\circ\delta_G(x)$ in each of the cases.
	
	\item
	$\underline{f=\ \globalpred:}$ 
	We show the claim by a case distinction on $p(x)$.
	
	If $p(x)\in\{\intern,\call\}$, then $x$ has a $p(x)$-predecessor $y$, i.e. $\Succ^G_\globalpred(x)=y$. 
	Moreover, $\delta_G(y)$ is the parent node of $\delta_G(x)=\delta_G(y)\cdot 0$. 
	Thus, we have $\delta_G(\Succ^G_\globalpred(x))=\delta_G(y)=\Succ_\globalpred^{\mathcal{T}}(\delta_G(x))$.
	
	If $p(x)=\ret$, then $x$ has a $\ret$-predecessor $y$, i.e. $\Succ^G_\globalpred(x)=y$, and by \cref{lemma:successorcharacterization}(i) there is a node $z\in V$ with $z\curvearrowright x$ and $\Succ^G_\caller(y)=z$.
	Using the claim for $f=\caller$, which we have already seen, we thus have $\delta_G(z)=\delta_G(\Succ^G_\caller(y))=\Succ^{\mathcal{T}}_\caller(\delta_G(y))$.
	By definition, the caller predecessor of $\delta_G(y)$ has to be a node with a $\call$-child $\delta_G(z')$ such that $\delta_G(z')$ is an $\{\intern,\ret\}$-ancestor of $\delta_G(y)$.
	Moreover, since $x$ is a $\ret$-successor of $y$, we have $d(y)=\ret$ and hence $\delta_G(y)$ is a leaf.
	Thus, $\delta_G(y)$ is the $\{\intern,\ret\}$-descendant leaf of $\delta_G(z')$, which is the left child of the parent node $\delta_G(z)$ of $\delta_G(x)=\delta_G(z)\cdot 1$, i.e. $\Succ^{\mathcal{T}}_\globalpred(\delta_G(x))=\delta_G(y)$.
	Hence,
	$\delta_G(\Succ^G_\globalpred(x))=\delta_G(y)=\Succ_\globalpred^{\mathcal{T}}(\delta_G(x))$.
	
	If $p(x)\in\{\spawn,\bot\}$, then $x$ has no $\intern$-, $\call$- or $\ret$-predecessor, i.e. $\Succ^G_\uparrow(x)$ is undefined.
	Moreover, $\Succ_\globalpred^{\mathcal{T}}(\delta_G(x))$ is also undefined in this case.
	
	Thus, we have established $\delta_G\circ\Succ^G_\globalpred(x)=\Succ_\globalpred^{\mathcal{T}}\circ \delta_G(x)$ in each of the cases.	
	
	\item
	$\underline{f=\parent:}$ 
	Since $G$ is an execution graph, we have $(v_0,x)\in (\bigcup\limits_{d\in\moves}\rightarrow^d)^*$. 
	Thus, let $\pi$ be a path from $v_0$ to $x$ following $\moves$-successors. 
	We show the claim by induction via the length $n$ of $\pi$.
	
	If $n=0$, we have $x=v_0$. 
	In this case $x$ has no predecessor and $\Succ^G_\parent(x)$ is undefined.
	Moreover, we have $p(x)=\bot$ and thus $\Succ_\parent^{\mathcal{T}}(\delta_G(x))$ is also undefined.	
	
	If $n>0$, let $y$ be the predecessor node of $x$ in $\pi$. 
	We show the claim by a case distinction based on what type of predecessor $y$ is.
	
	If $y\rightarrow^\ret x$, by \cref{lemma:successorcharacterization}(i), there is $z\in V$ with $z\curvearrowright x$.
	Hence, if $y\rightarrow^\intern x$, $y\rightarrow^\call x$ or $y\rightarrow^\ret x$, there is a node $z\in V$ with $z\rightarrow^\intern x$, $z\rightarrow^\call x$ or $z\curvearrowright x$.
	Thus, by \cref{lemma:successorcharacterization}(iii), $\Succ^G_\parent(x)=\Succ^G_\parent(z)$.
	Moreover, $\delta_G(z)$ is the parent node of $\delta_G(x)$ and we have
	$p(x)\in\{\intern,\call,\ret\}$, i.e. $\Succ_\parent^{\mathcal{T}}(\delta_G(x))=\Succ_\parent^{\mathcal{T}}(\delta_G(z))$. 
	By the induction hypothesis, we obtain\\ $\delta_G(\Succ^G_\parent(x))=\delta_G(\Succ^G_\parent(z))=\Succ_\parent^{\mathcal{T}}(\delta_G(z))=\Succ_\parent^{\mathcal{T}}(\delta_G(x))$.
	
	If $y\rightarrow^\spawn x$, we have $\Succ^G_\parent(x)=y$ and $\delta_G(x)=\delta_G(y)\cdot 1$, i.e. $\delta_G(y)$ is the parent node of $\delta_G(x)$.
	Moreover, we have $p(x)=\spawn$ and hence $\Succ_\parent^{\mathcal{T}}(\delta_G(x))=\delta_G(y)=\delta_G(\Succ^G_\parent(x))$.
	
	Thus, we have established $\delta_G\circ\Succ^G_\parent(x)=\Succ_\parent^{\mathcal{T}}\circ \delta_G(x)$ in each of the cases.
	
	\item
	$\underline{f=\child:}$ 
	We distinguish two cases for $d(x)$.
	
	If $d(x)=\spawn$, then $x$ has a $\spawn$-successor $y$, i.e. we have $\Succ^G_\child(x)=y$.
	Moreover, we have $\delta_G(y)=\delta_G(x)\cdot 1$ and thus $\Succ_\child^{\mathcal{T}}(\delta_G(x))=\delta_G(y)=\delta_G(\Succ^G_\child(x))$.
	
	If $d(x)\neq \spawn$, then $x$ has no $\spawn$-successor, i.e. $\Succ^G_\child(x)$ is undefined.
	Moreover, $\delta_G(x)$ is also undefined in this case.
	
	Thus, we have established $\delta_G\circ\Succ^G_\child(x)=\Succ_\child^{\mathcal{T}}\circ \delta_G(x)$ in both cases.
	\hfill \qed
\end{itemize}
\end{proof}

\section{Proofs from \cref{sec:decisionproblems}}\label{app:decisionproblemsproofs}

\begin{proof}[\textbf{Proof of \cref{thm:extreeautomaton}}]
	We first show that $\mathcal{A}_\ExTrees$ accepts all execution trees. 

	For this, let $G=(V, l, (\rightarrow^d)_{d\in\moves},\curvearrowright)$ be an execution graph and $\mathcal{T}(G)=(T,r)$ be the tree representation of $G$.
	We inductively define a map $r_R\colon V\to \{0,1\}$ as follows.
	First, we set $r_R(v_0):=0$.
	Then, for each node $x\in V$,
	\begin{itemize}
		\item if there is a node $y\in V$ such that $y$ is an $\intern$-predecessor of $x$ or $y\curvearrowright x$ (the latter holds by \cref{lemma:successorcharacterization}(i) iff $x$ has a $\ret$-predecessor), we set $r_R(x):=r_R(y)$,
		\item if $x$ (i) has a $\spawn$-predecessor or (ii) it has a $\call$-predecessor $y$ and there is no node $z\in V$ with $y\curvearrowright z$, we set $r_R(x):=0$ and
		\item if $x$ has a $\call$-predecessor $y$ and there is a node $z\in V$ with $y\curvearrowright z$, we set $r_R(x):=1$.
	\end{itemize} 
	Next, we define a map $r_A \colon T\to Q$ by $r_A(\delta_G(x)):=(p(x),r_R(x))$ and show the following claim:
	
	\underline{Claim:} $r_A$ is an accepting $(\varepsilon, q_0)$-run of $\mathcal{A}_\ExTrees$ over $\mathcal{T}(G)$.
	
	We first show that $r_A $ is an $(\varepsilon, q_0)$-run of $\mathcal{A}_\ExTrees$ over $\mathcal{T}(G)$.
	For the initial node, we have $p(v_0)=\bot$ and $r_R(v_0)=0$ and thus $r_A(\varepsilon)=r_A(\delta_G(v_0))=(\bot,0)=q_0$.
		
	Let $t=\delta_G(x)\in T$ be an arbitrary node with $r(t)=(l(x),d(x),p(x))$ and $r_A(t)=(p(x),c)$.
	By a case distinction on $d(x)$, we show that the children of $t$ satisfy the transition function in this node.
	\begin{itemize}
		\item 
		If $d(x)=\intern$, then $x$ has exactly one $\intern$-successor $y$ with $p(y)=\intern$, $r_R(y)=r_R(x)=c$ and $\delta_G(y)=t\cdot 0$.
		Thus, $\{(0,r_A(t\cdot 0))\}=\{(0, (\intern,c))\}$ satisfies $\rho((p(x),c),r(t))$.
		 
		\item 
		If $d(x)=\call$, then $x$ has exactly one $\call$-successor $y$ with $p(y)=\call$, there is no node $z\in V$ with $x\curvearrowright z$, i.e. $r_R(x)=0$, and $\delta_G(y)=t\cdot 0$.
		Thus $\{(0,r_A(t\cdot 0))\}=\{(0, (\call,0))\}$ satisfies $\rho((p(x),c),r(t))$.
		
		\item 
		If $d(x)=\callret$, then $x$ has a $\call$-successor $y$ with $p(y)=\call$, there is a node $z\in V$ with $x\curvearrowright z$ and by \cref{lemma:successorcharacterization}(i), there is a node $z'\in V$ with $z'\rightarrow^\ret z$, i.e. $p(z)=\ret$.	
		Thus, we have $r_R(y)=1$, $\delta_G(y)=t\cdot 0$, $r_R(z)=r_R(x)=c$, and $\delta_G(z)=t\cdot 1$.
		Thus, $\{(0,r_A(t\cdot 0)), (1, r_A(t\cdot 1))\}=\{(0, (\call,1)), (1, (\ret,c))\}$ satisfies $\rho((p(x),c),r(t))$.
		
		\item 
		If $d(x)=\spawn$, then $x$ has an $\intern$-successor $y$ and a $\spawn$-successor $z$ with $p(y)=\intern$, $r_R(y)=r_R(x)=c$, $\delta_G(y)=t\cdot 0$, $p(z)=\spawn$, $r_R(z)=0$, and $\delta_G(z)=t\cdot 1$.
		Thus, $\{(0,r_A(t\cdot 0)), (1, r_A(t\cdot 1))\}=\{(0, (\intern,c)), (1, (\spawn, 0))\}$ satisfies $\rho((p(x),c),r(t))$.
		
		\item 
		If $d(x)=\ret$, then $x$ has a $\ret$-successor $y$.
		By \cref{lemma:successorcharacterization}(i), there is a node $z\in V$ with $z\curvearrowright y$ and $\Succ_\caller^G(x)=z$, i.e. there is a path from the $\call$-successor $z'$ of $z$ to $x$ following abstract successors.
		Then we clearly have $c=r_R(x)=r_R(z')=1$ by construction.
		Thus, $\emptyset$ satisfies $\true=\rho((p(x),c),r(t))$.	
		 
		\item 
		If $d(x)=\labelend$, then $x$ has no successors. 
		Assume towards contradiction that $r_R(x)=1$.
		Clearly, by construction, there is a node $z\in V$ with $z\curvearrowright y$ and a path from the $\call$-successor $z'$ of $z$ to $x$ following abstract successors.
		Thus, $\delta_G(x)$ is the $\{\intern,\ret\}$-descendant leaf of the left child $\delta_G(z')=\delta_G(z)\cdot 0$ of the parent node $\delta_G(z)$ of $\delta_G(y)=\delta_G(z)\cdot 1$.
		Then we have $\Succ_\globalpred^{\mathcal{T}(G)}(\delta_G(y))=\delta_G(x)$.
		Using \cref{lemma:successorequivalence}, we obtain \\
		$\delta_G(\Succ_\globalpred^G(y))=\Succ_\globalpred^{\mathcal{T}(G)}(\delta_G(y))=\delta_G(x)$ and thus $\Succ^G_\globalpred(y)=x$ since $\delta_G$ is injective.
		This means that $y$ is a successor of $x$, which contradicts our assumption that $x$ has no successor.
		Thus, $c=r_R(x)=0$, and $\emptyset$ satisfies $\true=\rho((p(x),c),r(t))$.	
	\end{itemize}
	Thus, $r_A$ is an $(\varepsilon, q_0)$-run of $\mathcal{A}_\ExTrees$ over $\mathcal{T}(G)$.
	 
	It remains to show that the run is accepting.
 	Assume towards contradiction that there is an infinite path $\delta_G(x_0)\delta_G(x_1)\dots$ in $T$ where the priority $0$ occurs only finitely often, i.e. there is a minimal $i>0$ such that for all $j\geq i$ we have $r_R(x_j)=1$.
 	By construction, there must be a node $y\in V$ with $x_{i-1}\curvearrowright y$ such that $x_{i-1}$ is the $\call$-predecessor of $x_i$ and we have $r_R(y)=r_R(x_{i-1})=0$.
 	Thus, there is a finite path in $G$ from $x_{i-1}$ to $y\neq x_{i-1}$ following only $\moves\setminus\{\spawn\}$-successors such that the number of $\call$-moves on the path is equal to the number of $\ret$-moves on the path.
	On the other hand, the infinite path in the tree $T$ up from $\delta_G(x_{i})$ cannot contain a $\spawn$-child by construction and thus provides an infinite path in $G$ starting in $x_i$ and following only $\call$- or abstract successors and thus also an infinite path in $G$ starting in $x_{i-1}$ following only $\moves\setminus\{\spawn\}$-successors.
	Since $\moves\setminus\{\spawn\}$-successors are uniquely determined, $y$ must be contained in this path.
	But since the paths between nodes $z$ and $z'$ with $z\curvearrowright z'$ are the minimal paths of length greater than zero from $z$ such that the number of $\call$-moves is equal to the number of $\ret$-moves, the number of $\call$-moves on the infinite path up from $x_i$ so far is always greater or equal to the number of $\ret$-moves on this path so far.
	Since $x_i$ is the $\call$-successor of $x_{i-1}$, the number of $\call$-moves on the infinite path in $G$ from $x_{i-1}$ so far is always greater than the number of $\ret$-moves on this path so far after the first move.
	This contradicts the fact that $y\neq x_{i-1}$ is contained in this path.
	Thus, for all infinite paths in $T$, the priority $0$ occurs infinitely often, and the run is accepting.
	 	 
	We now show that all trees accepted by $\mathcal{A}_\ExTrees$ are execution trees.
	For this, let $\mathcal{T}_A=(T,r)$ be a tree accepted by $\mathcal{A_\ExTrees}$, witnessed by the accepting $(\varepsilon, q_0)$-run $r_A$ of $\mathcal{A}_\ExTrees$ over $\mathcal{T}_A$.
	We define an execution graph $G=(V, l, (\rightarrow^d)_{d\in\moves},\curvearrowright)$ and show that $\mathcal{T}_A$ is the tree representation of $G$.

	The components of $G$ are defined as follows. 
	First, we set $V:=T$ and $l(t) := l_t$ where $r(t) = (l_t,d_t,p_t)$.
	For the definition of the transitions of $G$, let $t\in V$ be a node with $r_A(t)=q$ and $r(t)=(l',d,p)$.
	Outgoing transitions in $t$ are defined based on $d$.
	\begin{itemize}
		\item If $d\in\{\intern,\call\}$, then $\ar(r(t))=1$, i.e. $t$ has a child $t\cdot 0$ and we include $t\rightarrow^d t\cdot 0$.
		
		\item If $d\in\{\callret,\spawn\}$, then $\ar(r(t))=2$, i.e. $t$ has two children $t\cdot 0$ and $t\cdot 1$.
		For $d=\callret$, we include $t\rightarrow^\call t\cdot 0$ and $t\curvearrowright t\cdot 1$.
		For $d=\spawn$, we include $t\rightarrow^\intern t\cdot 0$ and $t\rightarrow^\spawn t\cdot 1$.
	\end{itemize}
	
	In order to define the transition relation $\rightarrow^\ret$, we show by induction on the length of $t$ that for all $t\in V$ with $r_A(t)=(p,c)$:
	
	$(*)$ $c=1$ iff there are $t_1,t_2,t_3\in V$ with $t_1\rightarrow^\call t_2$, $t_1\curvearrowright t_3$ and there is a path from $t_2$ to $t$ following only $\intern$- and $\ret$-children.
	
	In the base case, where $t=\varepsilon$, we have $c=0$ and $t$ has no parent node and no $\call$-predecessor.
	
	In the inductive step, let $t$ be of the form $t'\cdot i$ for $i\in\{0,1\}$ with $r_A(t')=(p',c')$ and $r(t')=(l',d,p'')$.
	The claim is shown by a case distinction on $d$.
	\begin{itemize}
		\item If $d=\intern$, we have $i=0$ and $\rho(r_A(t'),r(t'))=(0,(\intern,c'))$, i.e. $c=c'$.
		Since $t=t'\cdot 0$ is the $\intern$-child of $t'$ and $\intern$- or $\ret$-children are uniquely determined, the required nodes and the path exist for $t$ iff they exist for $t'$, and the latter holds by induction hypothesis iff $c=c'=1$.
		
		\item If $d=\call$, we have $i=0$ and $\rho(r_A(t'),r(t'))=(0,(\call,0))$, i.e. $c=0$.
		Since $t$ is no $\intern$- or $\ret$-child and there is no node $\tilde{t}\in V$ with $t'\curvearrowright \tilde{t}$, the required nodes and the path do not exist for $t'\cdot 0=t$.
		
		\item If $d=\callret$, we have $\rho(r_A(t'),r(t'))=(0,(\call,1))\land (1,(\ret,c'))$.
		If $i=0$, then $c=1$.
		Since  $t'\rightarrow^\call t'\cdot 0$ and $t'\curvearrowright t'\cdot 1$, the nodes $t_1=t'$, $t_2=t'\cdot 0$ and $t_3=t'\cdot 1$ and the empty path from $t_2=t'\cdot 0=t$ to $t$ witness that $(*)$ holds.
		If $i=1$, then $c=c'$ and $t'\curvearrowright t$.
		Since $\intern$- or $\ret$-children are uniquely determined, the required nodes and the path exist for $t$ iff they exist for $t'$, and the latter holds by induction hypothesis iff $c=1$.
		
		\item If $d=\spawn$, we have $\rho(r_A(t'),r(t'))=(0,(\intern,c'))\land (1,(\spawn,0))$.
		If $i=0$, then $c=c'$.
		Since $t'\rightarrow^\intern t'\cdot 0=t$ and $\intern$- or $\ret$-children are uniquely determined, the required nodes and the path exist for $t$ iff they exist for $t'$, and the latter holds by induction hypothesis iff $c=c'=1$.
		If $i=1$, then $c=0$.
		Since $t$ is no $\intern$- or $\ret$-child and it has no $\call$-predecessor, the required nodes and the path do not exist for $t'\cdot 1=t$.
	\end{itemize}
	Given $(*)$, for each node $t\in V$ with $r_A(t)=(p,1)$ and $r(t)=(l',\ret,p')$, there are nodes $t_1,t_2,t_3\in V$ with $t_1\rightarrow^\call t_2$, $t_1\curvearrowright t_3$ and there is a path from $t_2$ to $t$ following only $\intern$- and $\ret$-children.
	We then include the transition $t\rightarrow^\ret t_3$.
	
	We now show that $G$ is indeed an execution graph. 
	For this, we separately check each of the conditions from the definition of execution graphs.
	\begin{enumerate}
	\item
	Clearly, the node $\varepsilon$ has no predecessor with respect to  $(\rightarrow^d)_{d\in\moves}$.
	Moreover, every node $t\neq\varepsilon$ has exactly one predecessor with respect to $(\rightarrow^d)_{d\in\moves\setminus\{\ret\}}$ and $\curvearrowright$.
	Additionally, $t$ can only have a $\ret$-predecessor, if $t$ has a predecessor with respect to $\curvearrowright$.
	Now let $t'\in V$ be a node with $t'\curvearrowright t$.
	We show that $t$ has a unique $\ret$-predecessor in this case.
	We know that $t'$ has a $\call$-successor $t'\cdot 0\in V$.
	Now consider the unique maximal path in $T$ from $t'\cdot 0$ following only $\intern$- and $\ret$-chlidren.
	By $(*)$, we have $r_A(x)=(p,1)$ for some $p\in \moves\cup\{\bot\}$ for all nodes $x$ on the path.
	Since $\Omega(q)=1$ for all states $q$ visited on the path and the run $r_A$ is accepting, the path cannot be infinite.
	Thus, there is a node $x$ in the path that has no $\intern$- or $\ret$-child.
	Assume towards contradiction that $x$ has a $\call$-child $y$.
	For $r_A(x)=(p,c)$ and $r(x)=(l',d,p')$, we must have $d=\call$ in this case.
	Since $y=x\cdot 0$ has to satisfy $\rho((p,c),(l',\call,p'))$, we thus have $c=0$ which contradicts our assumption that $x$ is on the given path.
	Therefore, $x$ also has no $\call$-child, i.e. we have $d\in\{\ret,\labelend\}$.
	Since $\rho(r_A(x),r(x))=\rho((p,1),(l',d,p))$ must be satisfied by the children of $x$, we then have $d=\ret$, since otherwise we had $\rho(r_A(x),r(x))=\false$.
	Thus, $x\rightarrow^\ret t$ holds by construction.
	Clearly, since $\intern$- or $\ret$-children are uniquely determined, we can only have $y\rightarrow^\ret t$ for a node $y\in T$, if $y$ is on the given path and it is a leaf, i.e. if $y=x$.

	\item
	By construction of $V$, there is a path $\pi$ from $\varepsilon$ to $x$ following only $\moves\setminus\{\ret\}$-transitions or nesting edges for all nodes $x\in V$. 
	Consider the first nodes $y,z$ on the path $\pi$ with $y\curvearrowright z$, if they exist.
	As shown in (i), there is a path $\pi'$ from the $\call$-successor $y'$ of $y$ to the $\ret$-predecessor $z'$ of $z$ following only $\intern$- and $\ret$-children, i.e. $\intern$-transitions and nesting edges.
	Now include the concatenation of the paths $y\rightarrow^\call y'$, $\pi'$ and $z'\rightarrow^\ret z$ in the initial path $\pi$ between the nodes $y$ and $z$ and repeat this construction.
	If this provided an infinite procedure, we would obtain an infinite path from $y'$ following only $\call$- or $\intern$-successors.
	However, since $y\curvearrowright z$, all states visited on the path are of the form $(p,1)$, which contradicts the fact that the run $r_A$ is accepting.
	Thus, the given procedure terminates and we finally obtain a path from $\varepsilon$ to $x$ following only $\moves$-transitions.
	
	\item
	By construction, each node clearly  either has 
	(a) exactly one $\intern$-successor and at most one $\spawn$-successor, 
	(b) one $\call$-successor,
	(c) one $\ret$-successor or
	(d) no successors.	
	
	\item Let $x\in V$ be a node.
	In the construction of the path from $\varepsilon$ to $x$ following only $\moves$-transitions in (ii), we never add a $\spawn$-transition on the path between nodes $y$ and $z$ with $y\curvearrowright z$.
	Moreover, for each new nesting edge we add one $\call$-transition and one $\ret$-transition afterwards on the path.
	Thus, on the path from $\varepsilon$ or a node with a $\spawn$-predecessor to $x$, the number of $\call$-moves is greater or equal to the number of $\ret$-moves, since we start without any $\ret$-moves in this construction.
	
	\item
	Let $x\in V$ be a node with a $\call$-successor.
	For the first direction, let $y\in V$ be a node with $x\curvearrowright y$.
	By the construction given in (ii), we obtain a path from $x$ to $y$ following only $\moves\setminus\{\spawn\}$-transitions from $x$ to $y$ such that the number of $\call$-moves on the path is equal to the number of $\ret$-moves on the path.
	Moreover, for all nodes $z$ between $x$ and $y$, the number of $\call$-moves on the path between $x$ and $z$ is greater than the number of $\ret$-moves on this path, i.e. $y$ is given as the node with the stated property such that the witnessing path has minimal length.
	
	For the other direction, let $y\in V$ with $y\neq x$ be a node such that there is a path from $x$ to $y$ following only $\moves\setminus\{\spawn\}$-transitions where the number of $\call$-moves on the path is equal to the number of $\ret$-moves on the path and the path has minimal length.
	Consider a subpath of this path from a node $z$ to a node $z'$ starting with a $\call$-transition, then following only $\intern$-transitions and finally ending with a $\ret$-transition.
	By the construction in (ii), we only obtain such a path if $z\curvearrowright z'$.
	Now remove the nodes between $z$ and $z'$ from the path and repeat this procedure.
	Since the initial path has minimal length, it must end with a $\ret$-transition, it starts with a $\call$-transition and we finally see that $x\curvearrowright y$.
	\end{enumerate}
	Thus, we have shown that $G$ is indeed an execution graph.
	It remains to show that $\mathcal{T}_A$ is the tree representation of $G$.
	
	For this, we show by induction over the length of $t$ that for all $t\in T$ with $r_A(t)=(p,c)$ we have $\delta_G(t)=t$ and either $p\neq \bot$ and $t$ has a $p$-predecessor or $p=\bot$ and $t=\varepsilon$.
	
	In the base case, where $t=\varepsilon$, we have $\delta_G(t)=\varepsilon=t$ since $\varepsilon\in V$ has no predecessor.
	Moreover, we have $r_A(t)=q_0=(\bot,0)$, since $r_A$ is an $(\varepsilon,q_0)$-run, i.e. $p=\bot$.
	
	In the inductive step, let $t$ be of the form $t'\cdot i$ for $i\in\{0,1\}$ with $r(t')=(l',d,p')$.
	Since $T$ is prefix-closed, we have $t'\in T$ and thus, by the induction hypothesis, $\delta_G(t')=t'$.
	Here, we make a case distinction on $d$.
	\begin{itemize}
		\item 	
		If $d\in\{\intern,\call\}$, then $\ar(r(t'))=1$, i.e. $i=0$ since $t=t'\cdot i\in T$.
		Moreover, we have $t'\rightarrow^\intern t$ and thus $\delta_G(t)=\delta_G(t')\cdot 0=t'\cdot 0=t$ and $r_A(t)=(d,c)$ for a $c\in \{0,1\}$ and $t$ has a $d$-predecessor $t'$.
		
		\item 
		If $d=\callret$, then $t'$ has a $\call$-successor $t'\cdot 0$ and we have $t'\curvearrowright t'\cdot 1$.
		If additionally $i=0$, we have $\delta_G(t)=\delta_G(t')\cdot 0=t'\cdot 0=t$ and $r_A(t)=(\call,1)$ and $t$ has a $\call$-predecessor $t'$.
		If instead $i=1$, we have $\delta_G(t)=\delta_G(t')\cdot 1=t'\cdot 1=t$ and $r_A(t)=(\ret,c)$ for a $c\in\{0,1\}$ and since $t'\curvearrowright t'\cdot 1$ and $G$ is an execution graph, $t=t'\cdot 1$ has a $\ret$-predecessor by \cref{lemma:successorcharacterization}(i).
		
		\item 
		If $d=\spawn$, then $t'$ has an $\intern$-successor $t'\cdot 0$ and a $\spawn$-successor $t'\cdot 1$.
		If additionally $i=0$, we have $\delta_G(t)=\delta_G(t')\cdot 0=t'\cdot 0=t$ and $r_A(t)=(\intern,c)$ for a $c\in\{0,1\}$ and $t$ has an $\intern$-predecessor $t'$.
		If instead $i=1$, we have $\delta_G(t)=\delta_G(t')\cdot 1=t'\cdot 1=t$ and $r_A(t)=(\spawn,0)$ and $t$ has a $\spawn$-predecessor $t'$.
		
		\item We cannot have $d\in\{\ret,\labelend\}$, since that would mean $\ar(r(t'))=0$, i.e. $t'$ has no children.
	\end{itemize}
	Moreover, for all nodes $t\in V$ with $r_A(t)=(p,c)$ and $r(t)=(l',d,p')$, the boolean formula $\rho((p,c),(l',d,p'))$ must be satisfied by the children of $t$, i.e we have $p'=p$ since otherwise we had $\rho((p,c),(l',d,p'))=\false$.
	Thus, $p'$ characterizes the predecessor type of $t$ as required. 
	Finally, it is straightforward to see that for all $t\in T$ with $r(t)=(l,d,p)$, $d$ specifies the successor types of $t=\delta_G^{-1}(t)$ as required for tree representations.
	
	Overall, we conclude $\mathcal{T}_A=(T,r)=(\mathit{im}(\delta_G), r)=\mathcal{T}(G)$.
	\hfill \qed
\end{proof}

\begin{proof}[\textbf{Proof of \cref{thm:formulatoautomaton}}]

We prove the theorem by induction on the structure of $\varphi$.
Therefore, we also have do deal with non-closed subformulae and consider valuations to decide whether a subformula is satisfied.
In order to do this in a formal way, we consider automata with special states $X_1,\dots,X_n$, called \emph{holes}, that can be filled with sets of nodes $ L_1,\dots,L_n$ of a given tree.
Intuitively, such an automaton can operate on a tree as before, but when a hole $X_i$ is encountered during a run and we are at the tree node $t$, then we do not continue on the current path and say that it is accepting iff $t\in L_i$.

Formally, let $\mathcal{A} = (Q,q_0,\rho,\Omega)$ be a 2-way alternating tree automaton over $(\Sigma,\ar)$-labelled binary trees with states $q,X_1,...,X_n\in Q$, $\mathcal{T}=(T,l)$ be a $(\Sigma,\ar)$-labelled binary tree, $t\in T$ be a tree node and $L_1,...,L_n\subseteq T$ be sets of tree nodes.
A $(t,q)$-run over $\mathcal{A}[X_1:L_1,\dots,X_n:L_n]$ is defined as a $(t,q)$-run $(T_r,r)$ over $\mathcal{A}$ except that for nodes $x\in T_r$ with $r(x)=(t',X_i)$ for a $t'\in T$, the positive boolean formula $\rho(X_i,l(t'))$ that has to be satisfied by the children of $x$ is replaced by $\true$, if $t'\in L_i$, and by $\false$, if $t'\notin L_i$.
The acceptance of such a path is then defined as before.
By $\mathcal{L}_{q}^\mathcal{T}(\mathcal{A}[X_1:L_1,...,X_n:L_n])$ we denote the set of nodes $t\in T$ such that there is an accepting $(t,q)$-run over $\mathcal{A}[X_1:L_1,\dots,X_n:L_n]$.

For the inductive proof, we now assume that the free variables of the current formula $\psi\in\Sub(\varphi)$ are holes in the automaton and show that the language of this automaton corresponds to the semantics of $\psi$. 
Intuitively, we fill the holes in the automaton, i.e. the free variables of $\psi$, with the same sets of nodes as specified by a given valuation that we consider for the semantics of $\psi$. 
More formally, the holes are filled by sets of tree nodes that correspond to given sets of graph nodes in the valuation.
Therefore, we lift the function $\delta_G \colon V\to T$ for the execution graph $G=(V, l, (\rightarrow^d)_{d\in\moves},\curvearrowright)$ with $\mathcal{T}(G)=\mathcal{T}=(T,r)$ to a function $\tilde{\delta}_G \colon 2^V\to 2^T$ by $\tilde{\delta}_G(A):=\{\delta_G(a)\mid a\in A\}$ and show the following claim:

\underline{Claim:}
For all fixpoint variable assignments $\mathcal{V}$, subformulae $\psi\in \Sub(\varphi)$ with free variables $X_1,\dots,X_n$ and $L_1,\dots,L_n\subseteq V$ we have
\begin{align*}
	\mathcal{L}_\psi^{\mathcal{T}}(\mathcal{\tilde{A}}_\varphi[X_1:\tilde{\delta}_G(L_1),\dots,X_n:\tilde{\delta}_G(L_n)]) = \tilde{\delta}_G(\llbracket \psi\rrbracket_{\mathcal{V}[X_1 \mapsto L_1,\dots, X_n\mapsto L_n]}^G).
\end{align*}
Since $\varphi$ is closed, this implies in particular that
\begin{align*}
	\mathcal{T}\in \mathcal{L}(\mathcal{\tilde{A}}_\varphi)
	\Leftrightarrow \varepsilon\in \mathcal{L}_\varphi^{\mathcal{T}}(\mathcal{\tilde{A}}_\varphi)
	\Leftrightarrow \varepsilon \in \tilde{\delta}_G(\llbracket \varphi\rrbracket^G)
	\Leftrightarrow v_0\in \llbracket \varphi\rrbracket^G
	\Leftrightarrow G \in \llbracket \varphi\rrbracket.
\end{align*}
 We now proceed with the structural induction.
\begin{itemize}
	\item For $\psi\equiv \ap\in \AP$ we have
	\begin{align*}
		&\ \mathcal{L}_\psi^{\mathcal{T}}(\mathcal{\tilde{A}}_\varphi)=\{\delta_G(x)\in T\mid \ap\in l(x)\}\\
		= &\ \tilde{\delta}_G(\{x\in V\mid \ap\in l(x)\})\\
		= &\ \tilde{\delta}_G(\llbracket \psi\rrbracket_{\mathcal{V}}^G).
	\end{align*}
	
	\item For $\psi\equiv \lnot \ap$ with $\ap\in \AP$, the claim follows analogously.
	
	\item For $\psi\equiv X$ and $L\subseteq V$ we have
	\begin{align*}
		&\ \mathcal{L}_\psi^{\mathcal{T}}(\mathcal{\tilde{A}}_\varphi[X:\tilde{\delta}_G(L)])\\
		= &\ \tilde{\delta}_G(L)\\
		= &\ \tilde{\delta}_G(\mathcal{V}[X \mapsto L](X))\\
		= &\ \tilde{\delta}_G(\llbracket \psi\rrbracket_{\mathcal{V}[X\mapsto L]}^G).
	\end{align*}
	
	\item For $\psi\equiv \psi_1\lor \psi_2$, let $X_{k^i_1},\dots,X_{k^i_{n_i}}$ be the free variables of $\psi_i$ for $i\in\{1,2\}$.
	We clearly have \medskip\\
	$\mathcal{L}_{\psi_i}^{\mathcal{T}}(\mathcal{\tilde{A}}_\varphi[X_1:\tilde{\delta}_G(L_1),\dots,X_n:\tilde{\delta}_G(L_n)])
	=\mathcal{L}_{\psi_i}^{\mathcal{T}}(\mathcal{\tilde{A}}_\varphi[X_{k^i_1}:\tilde{\delta}_G(L_{k^i_1}),\dots,X_{k^i_{n_i}}:\tilde{\delta}_G(L_{k^i_{n_i}})])$ \medskip\\
	since a $(t,\psi_i)$-run over $\mathcal{\tilde{A}}_\varphi[X_{k^i_1}:\tilde{\delta}_G(L_{k^i_1}),\dots,X_{k^i_{n_i}}:\tilde{\delta}_G(L_{k^i_{n_i}})]$ can only reach nodes labelled by a fixpoint variable $X$ if $X$ occurs in $\psi_i$.
	Moreover, $\llbracket \psi_i\rrbracket_{\mathcal{V}[X_{k^i_1} \mapsto L_{k^i_1},\dots, X_{k^i_{n_i}}\mapsto L_{k^i_{n_i}}]}^G=\llbracket \psi\rrbracket_{\mathcal{V}[X_1 \mapsto L_1,\dots, X_n\mapsto L_n]}^G$, since all the free variables  $X_{k^i_1},\dots,X_{k^i_{n_i}}$ of $\psi_i$ are also among the free variables $X_1,\dots,X_n$ of $\psi$.
	Thus, we have
	\begin{align*}
		&\ \mathcal{L}_\psi^{\mathcal{T}}(\mathcal{\tilde{A}}_\varphi[X_1:\tilde{\delta}_G(L_1),\dots,X_n:\tilde{\delta}_G(L_n)])\\
		\mybox[IH]{=} & \bigcup\limits_{i\in\{0,1\}}\mathcal{L}_{\psi_i}^{\mathcal{T}}(\mathcal{\tilde{A}}_\varphi[X_{1}:\tilde{\delta}_G(L_{1}),\dots,X_{n}:\tilde{\delta}_G(L_{n})])\\
		\mybox[IH]{=} & \bigcup\limits_{i\in\{0,1\}}\mathcal{L}_{\psi_i}^{\mathcal{T}}(\mathcal{\tilde{A}}_\varphi[X_{k^i_1}:\tilde{\delta}_G(L_{k^i_1}),\dots,X_{k^i_{n_i}}:\tilde{\delta}_G(L_{k^i_{n_i}})])\\
		\oversetbox[IH]{(IH)}{=} & \bigcup\limits_{i\in\{0,1\}}\tilde{\delta}_G(\llbracket \psi_i\rrbracket_{\mathcal{V}[X_{k^i_1} \mapsto L_{k^i_1},\dots, X_{k^i_{n_i}}\mapsto L_{k^i_{n_i}}]}^G)\\
		\mybox[IH]{=} & \bigcup\limits_{i\in\{0,1\}}\tilde{\delta}_G(\llbracket \psi_i\rrbracket_{\mathcal{V}[X_1 \mapsto L_1,\dots, X_n\mapsto L_n]}^G)\\
		\mybox[IH]{=} &\ \tilde{\delta}_G(\bigcup\limits_{i\in\{0,1\}}\llbracket \psi_i\rrbracket_{\mathcal{V}[X_1 \mapsto L_1,\dots, X_n\mapsto L_n]}^G)\\
		\mybox[IH]{=} &\ \tilde{\delta}_G(\llbracket \psi\rrbracket_{\mathcal{V}[X_1 \mapsto L_1,\dots, X_n\mapsto L_n]}^G)
	\end{align*}
	where equation (IH) uses the induction hypothesis.
	\item For $\psi\equiv \psi_1 \land \psi_2$, the claim follows analogously since $\tilde{\delta}_G$ is injective.
	
	\item For $\psi\equiv \bigcirc^f\psi'$ with $f\in\{\globalsucc,\globalpred,\abstractsucc,\caller,\parent,\child\}$, the free variables $X_1,\dots,X_n$ of $\psi$ are also the free variables of $\psi'$.
	Following the definition of the tree successor functions, we easily see that
	\begin{align*}
	& \mathcal{L}_\psi^{\mathcal{T}}(\mathcal{\tilde{A}}_\varphi[X_1:\tilde{\delta}_G(L_1),\dots,X_n:\tilde{\delta}_G(L_n)])\\
	\mybox[*]{=} & \{t\in T\mid\Succ_f^{\mathcal{T}}(t) \text{ is defined and}\\
	&\Succ_f^{\mathcal{T}}(t)\in\mathcal{L}_{\psi'}^{\mathcal{T}}(\mathcal{\tilde{A}}_\varphi[X_1:\tilde{\delta}_G(L_1),\dots,X_n:\tilde{\delta}_G(L_n)])\}\\
	\oversetbox[*]{(IH)}{=} & \{t\in T\mid\Succ_f^{\mathcal{T}}(t) \text{ is defined and } \Succ_f^{\mathcal{T}}(t)\in 	\tilde{\delta}_G(\llbracket \psi'\rrbracket_{\mathcal{V}[X_1 \mapsto L_1,\dots, X_n\mapsto L_n]}^G)\}\\
	\oversetbox[*]{(\textasteriskcentered)}{=} & \{\delta_G(x)\in T\mid \Succ_f^G(x) \text{ is defined and}\\
	&\delta_G(\Succ_f^G(x))\in \tilde{\delta}_G(\llbracket \psi'\rrbracket_{\mathcal{V}[X_1 \mapsto L_1,\dots, X_n\mapsto L_n]}^G)\}\\
	\mybox[*]{=} & \tilde{\delta}_G(\{x\in V\mid \Succ_f^G(x) \text{ is defined and } \Succ_f^G(x)\in \llbracket \psi'\rrbracket_{\mathcal{V}[X_1 \mapsto L_1,\dots, X_n\mapsto L_n]}^G\})\\
	\mybox[*]{=} & \tilde{\delta}_G(\llbracket \psi\rrbracket_{\mathcal{V}[X_1 \mapsto L_1,\dots, X_n\mapsto L_n]}^G),
	\end{align*}
	where equation (IH) uses the induction hypothesis and equation (\textasteriskcentered) uses \cref{lemma:successorequivalence}.
	
	\item
	For $\psi\equiv \bigcirc^{\bar{f}}\psi'$ with $f\in\{\globalsucc,\globalpred,\abstractsucc,\caller,\parent,\child\}$ we observe that a $(t,\psi)$-run behaves as a $(t,\bigcirc^f\psi')$-run except that we move to $\true$ instead of $\false$, if a desired successor or predecessor does not exist.
	Thus, analogously to the previous case, it is easy to see that 
	\begin{align*}
	&\ \mathcal{L}_\psi^{\mathcal{T}}(\mathcal{\tilde{A}}_\varphi[X_1:\tilde{\delta}_G(L_1),\dots,X_n:\tilde{\delta}_G(L_n)])\\
	= &\ \{t\in T\mid\Succ_f^{\mathcal{T}}(t) \text{ is undefined or}\\
	& \Succ_f^{\mathcal{T}}(t)\in \mathcal{L}_{\psi'}^{\mathcal{T}}(\mathcal{\tilde{A}}_\varphi[X_1:\tilde{\delta}_G(L_1),\dots,X_n:\tilde{\delta}_G(L_n)])\}\\
	{=} &\ \tilde{\delta}_G(V\setminus\{x\in V\mid \Succ_f^G(x) \text{ is defined and}\\
	& \Succ_f^G(x)\notin \llbracket \psi'\rrbracket_{\mathcal{V}[X_1 \mapsto L_1,\dots, X_n\mapsto L_n]}^G\})\\
	{=} &\ \tilde{\delta}_G(V\setminus\{x\in V\mid \Succ_f^G(x) \text{ is defined and}\\
	& \Succ_f^G(x)\in \llbracket \lnot\psi'\rrbracket_{\mathcal{V}[X_1 \mapsto L_1,\dots, X_n\mapsto L_n]}^G\})\\
	{=} &\ \tilde{\delta}_G(V\setminus\llbracket \bigcirc^f\lnot\psi'\rrbracket_{\mathcal{V}[X_1 \mapsto L_1,\dots, X_n\mapsto L_n]}^G)\\
	{=} &\ \tilde{\delta}_G(\llbracket \lnot\bigcirc^f\lnot\psi'\rrbracket_{\mathcal{V}[X_1 \mapsto L_1,\dots, X_n\mapsto L_n]}^G)\\
	{=} &\ \tilde{\delta}_G(\llbracket \psi\rrbracket_{\mathcal{V}[X_1 \mapsto L_1,\dots, X_n\mapsto L_n]}^G).
	\end{align*}
	
	\item For $\psi\equiv \mu X.\psi'$, the free variables of $\psi'$ are given by $X_1,\dots, X_n,X$.
	We observe that a $(t,\psi)$-run over $\mathcal{\tilde{A}}_\varphi[X_1:\tilde{\delta}_G(L_1),\dots,X_n:\tilde{\delta}_G(L_n)]$ can only visit states $\varphi'$ of the form $\mu X.\psi''$ or $\nu X.\psi''$ if $\varphi'$ is a subformula of $\psi$.
	Therefore, $\Omega(\psi)$ is the lowest priority occurring in the run so that the state $\psi$ can only be visited finitely often if the run is accepting.
	This means we can characterize $\mathcal{L}_\psi^{\mathcal{T}}(\mathcal{\tilde{A}}_\varphi[X_1:\tilde{\delta}_G(L_1),\dots,X_n:\tilde{\delta}_G(L_n)])$ as the least fixpoint of the function $f:2^T\to 2^T$ with $f(\tilde{\delta}_G(L)):=\mathcal{L}_{\psi'}^{\mathcal{T}}(\mathcal{\tilde{A}}_\varphi[X_1:\tilde{\delta}_G(L_1),\dots,X_n:\tilde{\delta}_G(L_n), X:\tilde{\delta}_G(L)])$.
	By the induction hypothesis, we obtain\\
	$f(\tilde{\delta}_G(L))= \tilde{\delta}_G(\llbracket \psi'\rrbracket_{\mathcal{V}[X_1 \mapsto L_1,\dots, X_n\mapsto L_n,X\mapsto L]}^G)=\tilde{\delta}_G(\alpha_S(L))$ where $S=(G,\\
	\mathcal{V}[X_1 \mapsto L_1,\dots, X_n\mapsto L_n],X,\psi')$.
	By \cref{cor:fixpoint-characterization}, the least fixpoint of $\alpha_S$ is given by $\llbracket \psi\rrbracket^G_{\mathcal{V}[X_1 \mapsto L_1,\dots, X_n\mapsto L_n]}$.
	Since the function $\tilde{\delta}_G$ is trivially monotone and bijective because $\delta_G$ is bijective, we conclude by \cref{lemma:fixpointtranslation} that $\tilde{\delta}_G(\llbracket \psi\rrbracket^G_{\mathcal{V}[X_1 \mapsto L_1,\dots, X_n\mapsto L_n]})$ is the least fixpoint of $f$, i.e.
	\begin{align*}
		\mathcal{L}_\psi^{\mathcal{T}}(\mathcal{\tilde{A}}_\varphi[X_1:\tilde{\delta}_G(L_1),\dots,X_n:\tilde{\delta}_G(L_n)])=\tilde{\delta}_G(\llbracket \psi\rrbracket^G_{\mathcal{V}[X_1 \mapsto L_1,\dots, X_n\mapsto L_n]}).
	\end{align*}
	
	\item For $\psi\equiv \nu X.\psi'$, the claim follows analogously.
	\hfill \qed
\end{itemize}
\end{proof}

In order to prove \cref{thm:dpnautomaton}, we establish the following lemma:

\begin{lemma}\label{lemma:stacklevel}
	Let
	$\mathcal{M}=(S,s_0,\gamma_0,\Delta,L)$
	be a DPN and $G=(V, l, (\rightarrow^d)_{d\in\moves},\curvearrowright)$ be an execution graph of $\mathcal{M}$ witnessed by the assignment $\assignment\colon V\to S\times\Gamma^*\bot$.
	Further, let $x,y\in V$ be nodes with $y=\Succ_\abstractsucc^G(x)$.
	Then there are control locations $s,s'\in S$, stack symbols $\gamma,\gamma'\in \Gamma$ and a stack content $w\in \Gamma^*\bot$ such that $\assignment(x)=(s,\gamma w)$ and $\assignment(y)=(s',\gamma'w)$.
\end{lemma}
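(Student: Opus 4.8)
The plan is to split on the two ways in which the abstract successor $y=\Succ^G_\abstractsucc(x)$ can be realized, namely $x\rightarrow^\intern y$ or $x\curvearrowright y$. The internal case is immediate: if $x\rightarrow^\intern y$, then the graph semantics gives $\assignment(x)\rightarrow_\intern\assignment(y)$, so by the definition of the internal successor relation on configurations there is a rule $s\gamma\rightarrow s'\gamma'\in\Delta_I$ with $\assignment(x)=(s,\gamma w)$ and $\assignment(y)=(s',\gamma' w)$ for some $w\in\Gamma^*\bot$, which is exactly the claim.

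The interesting case is $x\curvearrowright y$. By condition~5 of the definition of execution graphs, $x$ then has a $\call$-successor and $y$ is obtained from $x$ by a minimal-length path $\pi$ following only $\moves\setminus\{\spawn\}$-successors on which the numbers of $\call$- and $\ret$-moves coincide. Because $\moves\setminus\{\spawn\}=\{\intern,\call,\ret\}$ and each node has at most one successor of each such type, $\pi$ is a uniquely determined sequence of global successors, and minimality forces the call/return balance to stay strictly positive strictly between the endpoints (a dip to $0$ would yield a shorter balanced path, and a dip below $0$ would have to cross $0$ first); in particular $\pi$ is a \emph{Dyck path}, with balance $\geq 0$ throughout and $0$ at both ends. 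I therefore reduce the lemma to the following auxiliary claim, stated for \emph{all} Dyck paths: if $p_0\to\cdots\to p_l$ follows global successors, has equally many $\call$- and $\ret$-moves and non-negative balance on every prefix, and $\assignment(p_0)=(s,\gamma w)$, then $\assignment(p_l)=(s',\gamma' w)$ for some $s',\gamma'$; that is, the stack content strictly below the current top symbol is preserved. Applying this to $\pi$ yields the lemma.

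The auxiliary claim I would prove by strong induction on the length $l$, using that an $\intern$-move (and the internal effect of a $\spawn$-move) only rewrites the top symbol while preserving the rest of the stack, a $\call$-move pushes, and a $\ret$-move pops. For $l=0$ there is nothing to show. For the step I distinguish the first move: a $\ret$ is impossible, as it would make the balance negative; for an $\intern$-move I apply the induction hypothesis to the remaining, shorter Dyck path starting at the new top symbol; for a $\call$-move I split off the first matched return, i.e.\ the first later node $p_j$ at which the balance returns to $0$, which by the balance arithmetic is reached by a $\ret$-move from a node $p_{j-1}$ of balance exactly $1$. Writing the triggering call rule as $s\gamma\rightarrow s_1\gamma_1\gamma_2\in\Delta_C$, we have $\assignment(p_1)=(s_1,\gamma_1\gamma_2 w)$; the inner segment $p_1\to\cdots\to p_{j-1}$ is a shorter Dyck path, so the induction hypothesis preserves the stack below its top, giving $\assignment(p_{j-1})=(s_{j-1},\gamma'\gamma_2 w)$; the subsequent $\ret$ from $p_{j-1}$ pops the top and leaves $(s_j,\gamma_2 w)$; finally the induction hypothesis applied to the remaining Dyck path $p_j\to\cdots\to p_l$ finishes the case.

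The main obstacle is precisely this call case: the matched-return structure is recursive, so a naive induction directly over minimal balanced paths does not close. The crucial step is to generalize from minimal balanced paths (primitive excursions) to arbitrary Dyck paths in the auxiliary claim, which makes both the inner segment $p_1\to\cdots\to p_{j-1}$ and the tail $p_j\to\cdots\to p_l$ available to the induction hypothesis. The remaining work is then routine bookkeeping of how each move type acts on the top of the stack, together with the elementary combinatorial observation about the first return to balance $0$ after an initial call.
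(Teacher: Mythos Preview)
Your argument is correct. It differs from the paper's proof in the induction scheme and decomposition. The paper inducts on the number $n$ of $\call$-moves in the balanced path $\pi$: for $n>0$ it invokes \cref{lemma:successorcharacterization}(i) to obtain the $\ret$-predecessor $y'$ of $y$ with $\Succ^G_\caller(y')=x$, which yields a chain of abstract successors from the $\call$-successor $x'$ of $x$ to $y'$; each step in this chain is witnessed by a strictly shorter balanced subpath, so the induction hypothesis applies step by step along the chain, and the surrounding $\call$ and $\ret$ then finish the case. You instead generalize to arbitrary Dyck paths and induct on path length via the classical first-return-to-zero decomposition. Your route is more self-contained (it does not rely on the caller characterization) and the auxiliary claim is reusable; the paper's route keeps the induction statement identical to the lemma and leverages already established structure. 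One small imprecision worth fixing: in your first paragraph you write $\assignment(x)\rightarrow_\intern\assignment(y)$ for the case $x\rightarrow^\intern y$, but if $x$ additionally has a $\spawn$-successor the graph semantics actually gives $\assignment(x)\rightarrow\assignment(y)\triangleright\assignment(z)$ via a rule in $\Delta_S$ rather than $\Delta_I$; the conclusion (top symbol rewritten, rest of stack unchanged) is the same, and you already account for this in your Dyck-path argument.
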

\begin{proof}
	Since $y=\Succ_\abstractsucc^G(x)$, we either have $x\rightarrow^\intern y$ or $x\curvearrowright y$.
	In both cases, there is a path $\pi$ from $x$ to $y$ following only $\moves\setminus\{\spawn\}$-transitions such that the number $n$ of $\call$-moves on $\pi$ is equal to the number of $\ret$-moves on $\pi$.
	We show the claim by induction over $n$.
	
	If $n=0$, we have $x\rightarrow^\intern y$. Since $G$ is generated by $\mathcal{M}$, we thus have $\assignment(x)\rightarrow_\intern \assignment(y)$, i.e. there are control locations $s,s'\in S$, stack symbols $\gamma,\gamma'\in \Gamma$ and a stack content $w\in \Gamma^*\bot$ such that $\assignment(x)=(s,\gamma w)$ and $\assignment(y)=(s',\gamma'w)$.
	
	If $n>0$, we have $x\curvearrowright y$.
	Thus, $x$ has a $\call$-successor $x'$ and by \cref{lemma:successorcharacterization}(i), $y$ has a $\ret$-predecessor $y'$ with $\Succ_\caller^G(z')=x$, i.e. there is a path $\pi'$ from $x'$ to $z'$ following abstract successors.
	In particular, in each step in $\pi'$ from a node $u$ to its abstract successor $v$, there is a path $\pi''$ from $u$ to $v$ following only $\moves\setminus\{\spawn\}$-transitions such that the number $m_{\pi''}$ of $\call$-moves on $\pi''$ is equal to the number of $\ret$-moves on $\pi''$.
	Since all of these paths $\pi''$ are proper subpaths of $\pi$, we clearly have $m_{\pi''}<n$ in each step.
	Thus, by the induction hypothesis, there are control locations $s_1,s_2\in S$, stack symbols $\gamma_1,\gamma_2\in \Gamma$ and a stack content $w'\in \Gamma^*\bot$ such that $\assignment(x')=(s_1,\gamma_1 w')$ and $\assignment(y')=(s_2,\gamma_2w')$.
		Since further $x\rightarrow^\call x'$ and hence $\assignment(x)\rightarrow_\call \assignment(x')$, we also have $\assignment(x)=(s,\gamma w)$ for some $s\in S$ and $\gamma\in\Gamma$ as well as $w'=\gamma'w$ for some $\gamma'\in\Gamma$.
		Finally, since $y'\rightarrow^\ret y$ and hence $\assignment(y')\rightarrow_\ret \assignment(y)$, we also have $\assignment(y)=(s',w')=(s',\gamma'w)$ for some $s'\in S$.
		\hfill \qed
\end{proof}

\begin{proof}[\textbf{Proof of \cref{thm:dpnautomaton}}]
	We first show that $\mathcal{{A}}_\mathcal{M}$ accepts all tree representations of executions graphs generated by $\mathcal{M}$. 
	
	For this, let $G=(V, l, (\rightarrow^d)_{d\in\moves},\curvearrowright)$ be an execution graph of $\mathcal{M}$ witnessed by the assignment $\assignment\colon V\to S\times\Gamma^*\bot$.
	Further, let $\mathcal{T}(G)=(T,r)$ be the tree representation of $G$.
	We first define maps $r_S\colon V\to S$ and $r_\Gamma\colon V\to \Gamma$ as follows.
	For nodes $x\in V$ with $\assignment(x)=(s,\gamma w)$ for some stack symbol $\gamma\in\Gamma$ and stack content $w\in \Gamma^*\bot$ we set $r_S(x):=s$ and $r_\Gamma(x)=\gamma$.
	Moreover, we inductively define a map $r_R\colon V\to (S\times \Gamma)\cup \{\bot\}$ as follows.
	First, we set $r_R(v_0):=\bot$.
	Then, for all nodes $x\in V$,
	\begin{itemize}
		\item if there is a node $y\in V$ such that $y$ is an $\intern$-predecessor of $x$ or $y\curvearrowright x$ (the latter holds by \cref{lemma:successorcharacterization}(i) iff $x$ has a $\ret$-predecessor), we set $r_R(x):=r_R(y)$,
		\item if $x$ (i) has a $\spawn$-predecessor or (ii) it has a $\call$-predecessor $y$ and there is no $z\in V$ with $y\curvearrowright z$, we set $r_R(x):=\bot$ and
		\item if $x$ has a $\call$-predecessor $y$ and there is $z\in V$ with $y\curvearrowright z$, we set $r_R(x):=(r_S(z'), r_\Gamma(z'))$, where $z'$ is the $\ret$-predecessor of $z$ (this node exists in this case by \cref{lemma:successorcharacterization}(i)).
	\end{itemize} 
	Finally, we define a map $r_A \colon T\to Q$ by $r_A(\delta_G(x)):=(r_S(x),r_\Gamma(x),r_R(x))$ and show the following claim:
	
	\underline{Claim:} $r_A$ is an accepting $(\varepsilon, q_0)$-run of $\mathcal{A}_\mathcal{M}$ over $\mathcal{T}(G)$.
	
	First, we show that $r_A$ is an $(\varepsilon, q_0)$-run of $\mathcal{A}_\mathcal{M}$ over $\mathcal{T}(G)$.
	
	For the initial node, we have $\assignment(v_0)=(s_0,\gamma_0\bot)$ as well as $r_R(v_0)=\bot$ and thus $r_A(\varepsilon)=r_A(\delta_G(v_0))=(s_0,\gamma_0,\bot)=q_0$.
		
	Now let $t=\delta_G(x)\in T$ be a node with $r(t)=(l(x),d(x),p(x))$ and $r_A(t)=(s,\gamma,c)$.
	Then we have $\assignment(x)=(s,\gamma w)$ for a stack content $w\in \Gamma^*\bot$ and the node is labelled by $l(x)=L(s,\gamma)$ since $G$ is generated by $\mathcal{M}$.
	By a case distinction on $d(x)$, we show that the successors of $t$ satisfy the transition function of $\mathcal{A}_{\mathcal{M}}$.
	\begin{itemize}
		\item 
		If $d(x)=\intern$, then $x$ has an $\intern$-successor $y$ with $\assignment(x)\rightarrow_\intern \assignment(y)$.
		Thus, there are $s'\in S$ and $\gamma'\in \Gamma$ with $\assignment(y)=(s',\gamma'w)$ and $s\gamma\rightarrow s'\gamma'\in \Delta_I$.
		Moreover, we have $r_R(y)=r_R(x)=c$ and $\delta_G(y)=t\cdot 0$.
		Thus, $\{(0,r_A(t\cdot 0))\}=\{(0, (s',\gamma',c))\}$ satisfies $\rho((s,\gamma,c),r(t))$.
		 
		\item 
		If $d(x)=\call$, then $x$ has a $\call$-successor $y$ with $\assignment(x)\rightarrow_\call \assignment(y)$.
		Thus, there are $s'\in S$ and $\gamma',\gamma''\in \Gamma$ with $\assignment(y)=(s',\gamma'\gamma''w)$ and $s\gamma\rightarrow s'\gamma'\gamma''\in \Delta_C$.
		Moreover, there is no node $z\in V$ with $x\curvearrowright z$, i.e. $r_R(x)=\bot$, and $\delta_G(y)=t\cdot 0$.
		Thus $\{(0,r_A(t\cdot 0))\}=\{(0, (s',\gamma',\bot))\}$ satisfies $\rho((s,\gamma,c),r(t))$.
		
		\item 
		If $d(x)=\callret$, then $x$ has a $\call$-successor $y$ with $\assignment(x)\rightarrow_\call \assignment(y)$.
		Thus, there are $s'\in S$ and $\gamma',\gamma''\in \Gamma$ with $\assignment(y)=(s',\gamma'\gamma''w)$ and $s\gamma\rightarrow s'\gamma'\gamma''\in \Delta_C$.
		Moreover, there is a node $z\in V$ with $x\curvearrowright z$.
		By \cref{lemma:successorcharacterization}(i), $z$ has a $\ret$-predecessor $z'$ with $\Succ_\caller^G(z')=x$, i.e. there is a path from the $\call$-successor $y$ of $x$ to $z'$ following abstract successors.
		Hence, by \cref{lemma:stacklevel}, there are $s_r\in S$ and $\gamma_r\in\Gamma$ with $\assignment(z')=(s_r,\gamma_r\gamma''w)$.
		Moreover, since $z'\rightarrow^\ret z$, we have $\assignment(z')\rightarrow_\ret \assignment(z)$, i.e. there is $s''\in S$ with $\assignment(z)=(s'',\gamma''w)$ and $s_r\gamma_r\rightarrow s''\in\Delta_R$.
		Thus, we have $r_R(y)=(r_S(z'),r_\Gamma(z'))=(s_r,\gamma_r)$, $\delta_G(y)=t\cdot 0$, $r_R(z)=r_R(x)=c$ and $\delta_G(z)=t\cdot 1$.
		Hence, $\{(0,r_A(t\cdot 0)), (1, r_A(t\cdot 1))\}=\{(0, (s',\gamma',(s_r,\gamma_r))), (1, (s'',\gamma'',c))\}$ satisfies $\rho((s,\gamma,c),r(t))$.
		
		\item 
		If $d(x)=\spawn$, then $x$ has an $\intern$-successor $y$ and a $\spawn$-successor $z$ with $\assignment(x)\rightarrow \assignment(y)\vartriangleright \assignment(z)$.
		Thus, there are $s',s_n\in S$ and $\gamma',\gamma_n\in \Gamma$ with $\assignment(y)=(s',\gamma'w)$, $\assignment(z)=(s_n,\gamma_n\bot)$ and $s\gamma\rightarrow s'\gamma'\vartriangleright s_n\gamma_n\in \Delta_S$.
		Moreover, we have $r_R(y)=r_R(x)=c$, $\delta_G(y)=t\cdot 0$, $r_R(z)=\bot$, and $\delta_G(z)=t\cdot 1$.
		Thus, $\{(0,r_A(t\cdot 0)), (1, r_A(t\cdot 1))\}=\{(0, (s',\gamma',c)), (1, (s_n, \gamma_n, \bot))\}$ satisfies $\rho((s,\gamma,c),r(t))$.
		
		\item 
		If $d(x)=\ret$, then $x$ has a $\ret$-successor $y$ with $\assignment(x)\rightarrow_\ret \assignment(y)$. Thus, there is $s'\in S$ with $\assignment(y)=(s',w)$ and $s\gamma\rightarrow s'\in \Delta_R$.
		Moreover, by \cref{lemma:successorcharacterization}(i), there is a node $z\in V$ with $z\curvearrowright y$ and $z=\Succ_\caller^G(x)$, i.e. there is a path from the $\call$-successor $z'$ of $z$ to $x$ following abstract successors.
		Thus, we have $c=r_R(x)=r_R(z')=(r_S(x),r_\Gamma(x))=(s,\gamma)$, i.e. $\emptyset$ satisfies $\true=\rho((s,\gamma,(s,\gamma)),r(t))=\rho((s,\gamma,c),r(t))$.
		
		\item 
		If $d(x)=\labelend$, then $x$ has no successors and hence $\assignment(x)$ has no successor,
		i.e. there is no rule for $s\gamma$ in $\Delta$.
		Assume towards contradiction that $c\neq \bot$.
		By construction, there must be nodes $y,z\in V$ with $y\curvearrowright z$ and a path from the $\call$-successor $y'$ of $y$ to $x$ following abstract successors.
		Thus, $\delta_G(x)$ is the $\{\intern,\ret\}$-descendant leaf of the left child $\delta_G(y')=\delta_G(y)\cdot 0$ of the parent node $\delta_G(y)$ of $\delta_G(z)=\delta_G(y)\cdot 1$.
		Then we have $\Succ_\globalpred^{\mathcal{T}(G)}(\delta_G(z))=\delta_G(x)$.
		Using \cref{lemma:successorequivalence}, we obtain 
		$\delta_G(\Succ_\globalpred^G(z))=\Succ_\globalpred^{\mathcal{T}(G)}(\delta_G(z))=\delta_G(x)$ and thus $\Succ_\globalpred^G(z)=x$ since $\delta_G$ is injective.
		This means that $z$ is a successor of $x$, which contradicts our assumption that $x$ has no successor.
		Thus, we have $c=\bot$ and hence $\emptyset$ satisfies $\true=\rho((s,\gamma,\bot),r(t))=\rho((s,\gamma,c),r(t))$.
	\end{itemize}
	Thus, we have established that $r_A$ is an $(\varepsilon, q_0)$-run of $\mathcal{{A}}_\mathcal{M}$ over $\mathcal{T}(G)$.
	Since we have $\Omega(q)=0$ for all $q\in Q$, the run is clearly accepting, and thus $\mathcal{T}(G)\in\mathcal{L}(\mathcal{{A}}_\mathcal{M})$.
	
	For the other direction of the theorem, we show that all exeution trees accepted by $\mathcal{A}_\mathcal{M}$ are tree representations of execution graphs of $\mathcal{M}$.
	For this, let $G=(V, l, (\rightarrow^d)_{d\in\moves},\curvearrowright)$ be an execution graph with $\mathcal{T}(G)=(T,r)\in\mathcal{L}(\mathcal{A}_\mathcal{M})$.
	Then there is an accepting $(\varepsilon, q_0)$-run $r_A$ of $\mathcal{{A}}_\mathcal{M}$ over $\mathcal{T}(G)$. 
	We now inductively define an assignment $\assignment\colon V\to S\times\Gamma^*\bot$ such that for all nodes $v\in V$ with $\assignment(v)=(s,\gamma w)$ for a stack symbol $\gamma\in\Gamma$ and a stack content $w\in\Gamma^*\bot$, we have $r_A(\delta_G(v))=(s,\gamma,c)$ for a $c\in (S\times \Gamma)\cup \{\bot\}$. 
	
	In the base case, we set $\assignment(v_0):=(s_0,\gamma_0\bot)$.
	Since $r_A$ is an $(\varepsilon, q_0)$-run of $\mathcal{{A}}_\mathcal{M}$, we have $r_A(\delta_G(v_0))=r_A(\varepsilon)=q_0=(s_0,\gamma_0,\bot)$.
	
	In the inductive step, let $\assignment$ be defined for some node $v\in V$ with $r(\delta_G(v))=(l',d,p)$ such that there are $\gamma\in\Gamma$ and $w\in\Gamma^*\bot$ with $\assignment(v)=(s,\gamma w)$ and $r_A(\delta_G(v))=(s,\gamma,c)$ for a $c\in (S\times \Gamma)\cup \{\bot\}$.
	We define the mapping for the successors of $v$ by a case distinction on $d$. 
	\begin{itemize}
		\item
		If $d=\intern$, $v$ has exactly one $\intern$-successor $v'$ and we have $r_A(\delta_G(v'))=r_A(\delta_G(v)\cdot 0)=(s',\gamma',c)$ for some $s'\in S$ and $\gamma'\in\Gamma$ such that $s\gamma\rightarrow s'\gamma'\in\Delta_I$.
		Then we set $\assignment(v'):=(s',\gamma' w)$.
		
		\item
		If $d=\call$, $v$ has exactly one $\call$-successor $v'$ and we have $r_A(\delta_G(v'))=r_A(\delta_G(v)\cdot 0)=(s',\gamma',\bot)$ for some $s'\in S$ and $\gamma'\in\Gamma$ such that $s\gamma\rightarrow s'\gamma'\gamma''\in\Delta_C$ for a $\gamma''\in\Gamma$.
		Then we set $\assignment(v'):=(s',\gamma'\gamma''w)$.
		
		\item
		If $d=\callret$, $v$ has exactly one $\call$-successor $v'$ and there is $v''\in V$ with $v\curvearrowright v''$.
		Then we have $r_A(\delta_G(v'))=r_A(\delta_G(v)\cdot 0)=(s',\gamma',(s_r,\gamma_r))$ and 
		$r_A(\delta_G(v''))=r_A(\delta_G(v)\cdot 1)=(s'',\gamma'',c)$
		for some $s',s'',s_r\in S$ and $\gamma',\gamma'',\gamma_r\in\Gamma$ such that $s\gamma\rightarrow s'\gamma'\gamma''\in\Delta_C$ and $s_r\gamma_r\rightarrow s''\in\Delta_R$.
		Then we set $\assignment(v'):=(s',\gamma'\gamma''w)$ and $\assignment(v''):=(s'',\gamma''w)$.
		
		\item
		If $d=\spawn$, $v$ has exactly one $\intern$-successor $v'$ and one $\spawn$-successor $v''$.
		Then we have $r_A(\delta_G(v'))=r_A(\delta_G(v)\cdot 0)=(s',\gamma',c)$ and $r_A(\delta_G(v''))=r_A(\delta_G(v)\cdot 1)=(s_n,\gamma_n,\bot)$	for some $s',s_n\in S$ and $\gamma',\gamma_n\in\Gamma$ such that $s\gamma\rightarrow s'\gamma'\vartriangleright s_n\gamma_n\in\Delta_S$.
		Then we set $\assignment(v'):=(s',\gamma'w)$ and $\assignment(v''):=(s_n,\gamma_n\bot)$.
	\end{itemize}
	Most conditions for $\assignment$ to witness that $G$ is generated by $\mathcal{M}$ are directly satisfied by construction.
	We only show the more involved conditions regarding nodes with $\ret$-successors and nodes without successors.
	In order to do this, we inductively show that the following holds for all nodes $v\in V$ with $r_A(\delta_G(v))=(s,\gamma,c)$:
	
	$(*)$ We have $c=(s_r,\gamma_r)$ for a $s_r\in S$ and $\gamma_r\in\Gamma$ iff there are nodes $v_1,v_2,v_3\in V$ with $v_1\rightarrow^\call v_2$, $v_1\curvearrowright v_3$ and $s'\in S$ and $\gamma'\in \Gamma$ with $r_A(\delta_G(v_2))=(s',\gamma',c)$ and there is a path from $v_2$ to $v$ following abstract successors.
	
	In the base case, where $v=v_0$, we have $c=\bot$ and there is no incoming $\intern$-transition or nesting edge to $v$.
	
	In the inductive step, let $v\in V$ be a node with $r_A(\delta_G(v))=(s,\gamma,c)$ such that $\delta_G(v)=\delta_G(v')\cdot i$ for an $i\in\{0,1\}$ and a node $v'\in V$  with $r(\delta_G(v'))=(l',d,p)$ and $r_A(\delta_G(v'))=(s',\gamma',c')$.
	We show $(*)$ for $v$ by a case distinction on $d$.
	\begin{itemize}
		\item If $d=\intern$, we have $i=0$ and $\rho(r_A(\delta_G(v')),r(\delta_G(v')))$ is a disjunction of formulae of the form $(0,(s'',\gamma'',c'))$ for $s''\in S$ and $\gamma''\in \Gamma$, i.e. $c=c'$.
		Since $v'\rightarrow^\intern v$ and abstract successors are uniquely determined, the right side of the equivalence in $(*)$ is satisfied for $v$ iff it is satisfied for $v'$.
		Thus, by the induction hypothesis, $(*)$ also holds for $v$.
		
		\item If $d=\call$, we have $i=0$ and $\rho(r_A(\delta_G(v')),r(\delta_G(v')))$ is a disjunction of formulae of the form $=(0,(s'',\gamma'',\bot))$  for $s''\in S$ and $\gamma''\in \Gamma$, i.e. $c=\bot$.
		Since there is no node $\tilde{v}\in V$ with $\tilde{v}\rightarrow^\intern v$, $\tilde{v}\curvearrowright v$ or $v'\curvearrowright \tilde{v}$, the required nodes and the path for any $c\neq \bot$ do not exist for $v$.
		
		\item If $d=\callret$, then $\rho(r_A(\delta_G(v')),r(\delta_G(v')))$ is a disjunction of formulae of the form  $(0,(s_1,\gamma_1,(s_r,\gamma_r)))\land (1,(s_2,\gamma_2,c'))$ for $s_1,s_2,s_r\in S$ and $\gamma_1,\gamma_2,\gamma_r\in \Gamma$.
		If additionally $i=0$, then $c=(s_r,\gamma_r)$ for some $s_r\in S$ and $\gamma_r\in \Gamma$.
		Since  $v'\rightarrow^\call v$ and $v'\curvearrowright v''$ for some $v''\in V$,
		the nodes $v_1=v'$, $v_2=v$ and $v_3=v''$ and the empty path from $v_2=v$ to $v$ witness that $(*)$ holds.
		If instead $i=1$, then $c=c'$ and $v'\curvearrowright v$.
		Since abstract successors are uniquely determined, the right side of the equivalence in $(*)$ is satisfied for $v$ iff it is satisfied for $v'$.
		Thus, by the induction hypothesis, $(*)$ also holds for $v$.
		
		\item If $d=\spawn$, then $\rho(r_A(\delta_G(v')),r(\delta_G(v')))$ is a disjunction of formulae of the form $(0,(s_1,\gamma_1,c'))\land (1,(s_2,\gamma_2,\bot))$ for $s_1,s_2\in S$ and $\gamma_1,\gamma_2\in \Gamma$.
		If additionally $i=0$, then $c=c'$.
		Since $v'\rightarrow^\intern v$ and abstract successors are uniquely determined, the right side of the equivalence in $(*)$ is satisfied for $v$ iff it is satisfied for $v'$.
		Thus, by induction hypothesis, $(*)$ also holds for $v$.
		If instead $i=1$, then $c=\bot$.
		Since there is no node $\tilde{v}\in V$ with $\tilde{v}\rightarrow^\intern v$, $\tilde{v}\curvearrowright v$ or $v'\curvearrowright \tilde{v}$, the required nodes and the path for any $c\neq \bot$ do not exist for $v$.
	\end{itemize}

	Given $(*)$, we now show the more involved conditions regarding nodes with $\ret$-successors.
	For this, let $x,y\in V$ be nodes with $x\rightarrow^\ret y$.
	By \cref{lemma:successorcharacterization}(i), there is a node $z\in V$ with $z\curvearrowright y$ and $\Succ_\caller^G(x)=z$, i.e. there is a path from the $\call$-successor $z'$ of $z$ to $x$ following abstract successors.
	Let $r_A(\delta_G(x))=(s_x,\gamma_x,c_x)$ and $r_A(\delta_G(z))=(s,\gamma,c)$.
	Then we have $\assignment(z)=(s,\gamma w)$ for a $w\in\Gamma^*\bot$.
	Since $r(\delta_G(z))=(l(z),\callret,p(z))$,  $\rho(r_A(\delta_G(z)),r(\delta_G(z)))$ is a disjunction of formulae of the form  $(0,(s',\gamma',(s_r,\gamma_r)))\land (1,(s'',\gamma'',c))$ with $s\gamma\rightarrow s'\gamma'\gamma''\in \Delta_C$ and $s_r\gamma_r\rightarrow s''\in \Delta_R$.
	Thus, $r_A(\delta_G(z'))=(s',\gamma',(s_r,\gamma_r))$ as well as $r_A(\delta_G(y))=(s'',\gamma'',c)$ for some $s',s'',s_r\in S$ and $\gamma',\gamma'',\gamma_r\in\Gamma$ with $s\gamma\rightarrow s'\gamma'\gamma''\in \Delta_C$ and $s_r\gamma_r\rightarrow s''\in \Delta_R$.
	By $(*)$ we infer $c_x=(s_r,\gamma_r)$.
	Moreover, since $x$ has a $\ret$-successor, we have $d(x)=\ret$ and thus $\rho(r_A(\delta_G(x)),r(\delta_G(x)))$ is $\true$, if $s_x=s_r$ and $\gamma_x=\gamma_r$, and $\false$ otherwise.
	Since the transition function must be satisfied by the children of $\delta_G(x)$, we thus have $s_x=s_r$ and $\gamma_x=\gamma_r$, i.e. $\assignment(x)=(s_r,\gamma_r w')$ for a $w'\in \Gamma^*\bot$.
	By construction of $\assignment$, we clearly have $w'=\gamma''w$.
	Thus, since $\assignment(x)=(s_r,\gamma_r\gamma'' w)$, $\assignment(y)=(s'',\gamma''w)$ and $s_r\gamma_r\rightarrow s''\in \Delta_R$, we have $\assignment(x)\rightarrow_\ret \assignment(y)$.
	
	Finally, let $x\in V$ be a node without successors.
	Then we have $r(\delta_G(x))=(l(x),\labelend,p(x))$.
	Let $r_A(\delta_G(x))=(s,\gamma,c)$, i.e. $\assignment(x)=(s,\gamma w)$ for some $w\in\Gamma^*\bot$.
	Assume towards contradiction that we have $c=(s_r,\gamma_r)$ for some $s_r\in S$ and $\gamma_r\in\Gamma$.
	By $(*)$, there are $v_1,v_2,v_3\in V$ with $v_1\rightarrow^\call v_2$ and $v_1\curvearrowright v_3$ and there is a path from $v_2$ to $x$ following abstract successors. 
	Thus, $\delta_G(x)$ is the $\{\intern,\ret\}$-descendant leaf of the left child $\delta_G(v_2)=\delta_G(v_1)\cdot 0$ of the parent node $\delta_G(v_1)$ of $\delta_G(v_3)=\delta_G(v_1)\cdot 1$.
	Then we have $\Succ_\globalpred^{\mathcal{T}(G)}(\delta_G(v_3))=\delta_G(x)$.
	Using \cref{lemma:successorequivalence}, we obtain 
	$\delta_G(\Succ_\globalpred^G(v_3))=\Succ_\globalpred^{\mathcal{T}(G)}(\delta_G(v_3))=\delta_G(x)$ and thus $\Succ^G_\globalpred(v_3)=x$ since $\delta_G$ is injective. 
	This means that $v_3$ is a successor of $x$, which contradicts our assumption that $x$ has no successor.	
	Thus, we must have $c=\bot$, i.e. $\rho(r_A(\delta_G(x)),r(\delta_G(x)))$ is $\true$, if
	there is no rule for $s\gamma$ in $\Delta$,
	and $\false$ otherwise.
	Since the transition function must be satisfied by the children of $\delta_G(x)$, there is thus no successor of $\assignment(x)$.
	\hfill \qed
\end{proof}
\end{document}